\newbox\BA@first@box
\newtheorem{theorem}{\bf Theorem}[section]
\newtheorem{corollary}[theorem]{\bf Corollary}
\newtheorem{lemma}[theorem]{\bf Lemma}
\newtheorem{definition}{\bf Definition}[section]
\newtheorem{remark}{\bf Remark}[section]
\newtheorem{claim}[theorem]{\bf Claim}
\newcommand{\Head}{\text{Head}}
\newcommand{\Tail}{\text{Tail}}
\newcommand{\In}{\text{In}}
\newcommand{\Out}{\text{Out}}
\newcommand{\Ord}{\text{Ord}}
\newcommand{\qedw}{\hfill \ensuremath{\Box}}
\newcommand{\qed}{\nobreak \ifvmode \relax \else
      \ifdim\lastskip<1.5em \hskip-\lastskip
      \hskip1.5em plus0em minus0.5em \fi \nobreak
      \vrule height0.75em width0.5em depth0.25em\fi}
\newcommand*\bigcdot{\mathpalette\bigcdot@{.5}}
\newcommand*\bigcdot@[2]{\mathbin{\vcenter{\hbox{\scalebox{#2}{$\m@th#1\bullet$}}}}}
\newenvironment{proof}[1][Proof:]{\begin{trivlist}
\item[\hskip \labelsep {\bfseries #1}]}{\qedw\end{trivlist}}
\begin{document}
\fontfamily{cmr} \selectfont
\title{Linear Network Coding for Two-Unicast-$Z$ Networks: A Commutative Algebraic Perspective and Fundamental Limits}
\author{\large Mohammad Fahim and Viveck R. Cadambe \\ [.05in] 
\small  
\begin{tabular}{c} 
Department of Electrical Engineering,
Pennsylvania State University. \\
Email: fahim@psu.edu, viveck@engr.psu.edu.
\end{tabular} 
}
\maketitle
\vspace{-0pt}
\allowdisplaybreaks{
\begin{abstract}
 We consider a two-unicast-$Z$ network over a directed acyclic graph of unit capacitated edges; the two-unicast-$Z$ network is a special case of two-unicast networks where one of the destinations has apriori side information of the unwanted (interfering) message. In this paper, we settle open questions on the limits of network coding for two-unicast-$Z$ networks by showing that the generalized network sharing bound is not tight, vector linear codes outperform scalar linear codes, and non-linear codes outperform linear codes in general. We also develop a commutative algebraic approach to deriving linear network coding achievability results, and demonstrate our approach by providing an alternate proof to the previous results of C. Wang et. al., I. Wang et. al. and Shenvi et. al. regarding feasibility of rate $(1,1)$ in the network. 
\end{abstract}
\vspace{-0pt}
\section{Introduction}
\makeatletter{\renewcommand*{\@makefnmark}{}
\footnotetext{\hrule \vspace{0.05in} This work is supported by NSF grant No. CCF  1464336.

A short version of this work is published  in the Proceedings of  The IEEE International Symposium on Information Theory (ISIT), June 2017 \cite{fah-cad}.
}\makeatother }

There is significant interest in multiple unicast network coding and index coding in recent times. In addition to capturing the essence of network communication, there are interesting connections between special instances of the multiple unicast network communication problem and several emerging applications including topological interference management in wireless networks \cite{Jafar_topological}, codes for caching and content distribution \cite{maddahali_caching}, the index coding problem \cite{equiv-index}, and regenerating and locally recoverable codes for distributed storage \cite{Cadambe_asymptotic, Dimitris_LRC}. While the classical max-flow-min-cut theorem demonstrates the capacity of the single unicast problem \cite{shannon1956}, \cite{ford}, even the two-unicast problem is a notoriously challenging open problem in network information theory \cite{Kamath-hard}, \cite{MissImp}.  

In this paper, we study the most simple multiple unicast communication scenario, in terms of message structure, whose capacity is unknown: the two-unicast-$Z$ network. 
The two-unicast-$Z$ network, like the two-unicast network, has two independent message sources and two destinations, each destination respectively requiring to decode one of the two message sources. One of the two destinations, say the second destination, has apriori side information of the unintended (first) message source (See Fig. \ref{fig:networksketch}). Like the $Z$-interference channel in wireless communications, the two sources of the network interfere at only one destination. The study of two-unicast-$Z$ networks is important, because, like index coding and other simplified variants, insights obtained through code development for two-unicast-$Z$ networks can potentially influence code design for more general multiple unicast networks and its related applications.

Unlike the two-unicast network \cite{Kamath_twounicast},\cite{shenvi2009}, \cite{wang-shroff},\cite{CCWang_twounicast},\cite{rate11-tight}, where (a) linear network coding is insufficient for capacity, (b) vector linear codes outperform scalar linear codes, and (c) the generalized network sharing (GNS) cut set bound is not  tight in general, the question of whether non-linear network coding, vector linear codes, or bounds stronger than the GNS bound are required to characterize the achievable rate region for \emph{two-unicast-$Z$} networks was open.  In particular, because two-unicast-$Z$ networks are a special case of two-unicast networks, the results which demonstrate the insufficiency of scalar linear and non-linear codes and the GNS bounds, for two-unicast networks, do not naturally extend to two-unicast-$Z$ networks. In fact, a previous work \cite{zeng2013z} developed a special class of two-unicast-$Z$ networks where the generalized network sharing bound is shown to be tight.  
 
 In this paper, we resolve these open questions for two-unicast-$Z$ networks. In particular, we show that for two-unicast-$Z$ networks, (a) vector linear codes outperform scalar linear codes, (b) non-linear codes outperform linear codes, and (c) that the GNS bound is not tight in general. Our impossibility results come from construction of specific network instances where these open questions are resolved through gaps between the specified achievable schemes, in the case of scalar and vector linear codes, or the converse, in the case of the GNS bound, and an optimal achievable rate.

A second contribution of this paper is the development of a commutative algebraic perspective of linear network coding. An algebraic framework for network coding has been established in \cite{Koetter_Medard} where scalar linear solvability over a general network is cast as a polynomial solvability problem. An interesting converse result in \cite{Dough-eq}, \cite{Dough-eq-t} has shown that for any collections of polynomials there exists a solvable equivalent directed acyclic network. This result implies that  the complexity of determining whether networks are scalar-linearly solvable over particular finite fields and the complexity of determining whether collections of polynomial are solvable over the corresponding fields are the same. An algebraic formulation based on path gains has been introduced in \cite{alg-form}, \cite{new-per}, \cite{path-gain}; these references also present an algorithm that casts  scalar linear coding solvability of a network as solving a set of polynomial equations with only linear and quadratic terms. In the context of two-unicast-$Z$ networks, a low-complexity heuristic for linear network coding has been developed in \cite{zeng2016alignment}.

Our starting point is the algebraic framework of network coding \cite{Koetter_Medard}. We develop fundamental connections between the polynomials formed by the local coding co-efficients and properties of the network communication graph. 
We describe our perspective through an alternate proof, for two-unicast-$Z$ networks, of the result of \cite{wang-shroff},   \cite{CCWang_twounicast}, \cite{rate11-tight}, and \cite{shenvi2009}, which establishes the feasibility of rate $(1,1)$ for two-unicast networks. In particular, \cite{wang-shroff},   \cite{CCWang_twounicast}, \cite{rate11-tight},  \cite{shenvi2009} give path-based necessary and sufficient conditions on the achievability of rate $(1,1)$ in two-unicast networks. An implication of these results is that the rate tuple $(1,1)$ is achievable if and only if the generalized network sharing cut set bound  \cite{Kamath_twounicast} is at least $2$, and the individual source destination pairs have a cut of at least $1$. Our alternate proof, albeit for the special case of two-unicast-$Z$ networks, encompasses new ideas and methods. 

Our approach is to write the solvability criterion based on the Nullstellensatz as per \cite{Koetter_Medard}, and then infer the final result based on elementary properties on the degrees of the polynomials that participate in the solvability criterion.  Among others, one interesting by-product of our analysis is the discovery of a network decomposition lemma. In linear network coding, the effect of a path to the overall transfer function is the product of the local coding weights at each edge in that path. Thus, given any edge in the network, the gain of all source-destination paths that flow through that edge can be factorized as the product of the gain from the source to that edge, and the gain from the edge to the destination. If the edge happens to be a source-destination cut, then this factorization is, in fact, a factorization of source-destination transfer matrix. The network decomposition lemma is a generalization of factorization for the case where the cut can involve multiple edges. Given a collection of edges that forms a cut, we effectively factorize the source-destination transfer matrix as a product of two transfer matrices: one from the source to the collection of edges, and another from the collection of edges to the destination. A non-trivial  technical hurdle that is absent in the single edge case, but we solve for cuts consisting of possibly multiple edges, is to carefully decouple the effect of the paths \emph{among} the cut-edges in the final factorization. 

The paper is organized as follows, we describe the system model in Section \ref{sec:background}. Afterwards, a brief background on commutative algebra is introduced in Section \ref{sec:commbkgnd} followed by a necessary and sufficient condition for the achievability of rate $(1,1)$ described in Section \ref{sec:cons} as a consequence to Hilbert's Nullstellensatz. We develop a network decomposition lemma in Section \ref{sec:netdecomp} and combine this lemma with tools from commutative algebra to derive the achievability proof. We present our achievability proof in Section \ref{sec:intermediate}. Proofs  of our impossibility results describing the insufficiency  of the GNS bound and linear network coding in two-unicast-$Z$ networks are provided in Sections \ref{sec:GNS} and \ref{sec:nonlinear}, respectively. We conclude with a discussion on challenges and open problem related to expanding our commutative algebraic approach to networks beyond the two-unicast-$Z$  network in Section \ref{sec:conc}.

\section{System Model}\label{sec:background}
Throughout this paper,  $\mathbb{Z}_{\geq 0}$ denotes the set of non-negative integers, and $\mathbb{Z}_{+}$ denotes the set of positive integers.
We consider a directed acyclic graph (DAG) $\mathcal{G} = (\mathcal{V},
\mathcal{E})$, where $\mathcal{V}$  denotes the set of vertices and
$\mathcal{E}$ denotes the set of edges. We assume unit-capacity edges and allow multiple edges between
vertices, hence, $\mathcal{E} \subset \mathcal{V} \times \mathcal{V} \times
\mathbb{Z}_{+}$.
For an edge $e = (u, v, i) \in \mathcal{E}$, we denote $\text{Head}(e)=v$ and
$\text{Tail}(e) = u$. For a given vertex $v \in \mathcal{V}$, we denote $\In(v)
= \left\{ e\in \mathcal{E} : \Head(e) = v\right\} $ and $\Out(v) =
\left\{e\in\mathcal{E}: \Tail(e) = v \right\}$.   Moreover, for an edge $e$,  we denote $\In(e)
= \left\{ e'\in \mathcal{E} : \Head(e') = \Tail(e)\right\} $ and $\Out(e) =
\left\{e'\in\mathcal{E}: \Tail(e') = \Head(e) \right\}$.

A path $p$ is a sequence of edges $(e_{m_1},e_{m_2},\ldots, e_{m_l})$ where  $\text{Head}(e_{m_i}) = \text{Tail}(e_{m_{i+1}})$ for $i=1,2,\ldots,l-1$. Let $\mathcal{E}_{1}, \mathcal{E}_{2},\mathcal{E}_{3} \subseteq \mathcal{E}$,  $\mathcal{E}_{1} \rightarrow \mathcal{E}_{2}$ denotes the set of all paths from $e_{1}$ to $e_{2}$ such that $e_{i}\in \mathcal{E}_{i},i=1,2$. Also  $\mathcal{E}_{1} \rightarrow \mathcal{E}_{2} \backslash \mathcal{E}_{3}$ denotes the set of all paths from $e_1$ to $e_2$   that do not contain any of the edges in $\mathcal{E}_{3}$ where $e_{i}\in \mathcal{E}_{i},i=1,2$. If $\mathcal{E}_{1}=\{e_1\},\mathcal{E}_{2}=\{e_2\}, \mathcal{E}_{3}=\{e_3\}$ are singletons, then we simply write $e_{1} \rightarrow e_2$ or $e_{1} \rightarrow e_{2} \backslash e_{3}$ as the case may be.
Because $\mathcal{G}$ is a DAG, there is a topological ordering $\Ord:\mathcal{E}\rightarrow \mathbb{Z}_{+}$ on the edges of the graph with the property that $e_{1} \rightarrow e_{2} \Rightarrow \Ord(e_{1}) < \Ord(e_2)$.

\subsection{Algebraic framework for linear network coding}\label{sec:algfrmwrk}

\begin{figure}[ht]
    \centering
    \includegraphics[scale=0.35]{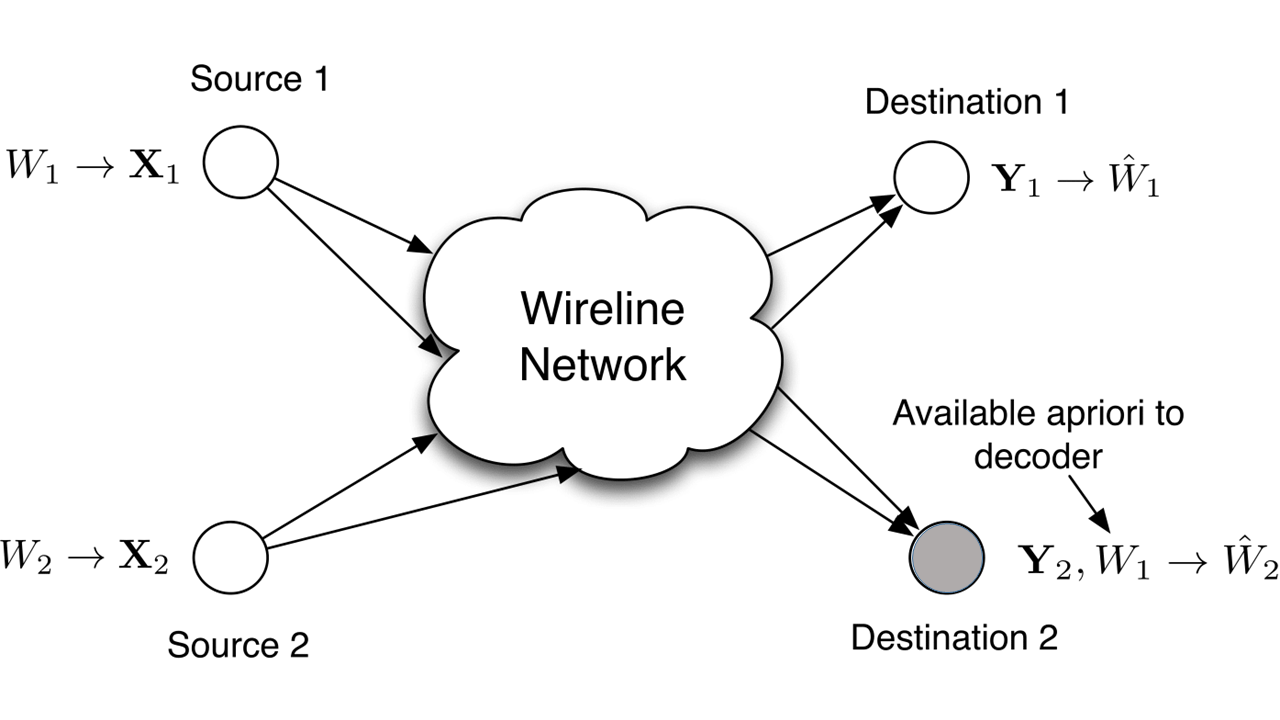}
    \caption{A two-unicast-$Z$ network}
    \label{fig:networksketch}
\end{figure}
We set up linear network coding schemes based on the algebraic framework of \cite{Koetter_Medard}. Let $\mathbb{K}$ denote the algebraic closure of the field $\mathbb{F}_2.$ Let $e_1, e_2, \dots, e_{|\mathcal{E}|}$ denote the edges of $\mathcal{E}$ in topological order, i.e., $i < j \Leftrightarrow \Ord(e_{i}) < \Ord(e_{j})$ . The \emph{local coding matrix} is an upper triangular matrix $\mathbf{F}^{\mathcal{G}}$
 whose
element in the $i$-th row and $j$-th column $F_{i,j}^{\mathcal{G}}$ is given as,
\begin{align}
    F^{\mathcal{G}}_{i,j} = \left\{
  \begin{array}{cl}
    \beta_{e_i, e_j} & \text{if Head}(e_i) = \text{Tail}(e_j) \\
    0 & \text{otherwise},
  \end{array}
\right.
\end{align}
where $\beta_{e_i,e_j}$ is a variable that represents the local coding coefficient relating  $e_i$ to $e_j$. Wherever the graph $\mathcal{G}$ being considered is clear, we will simply omit the superscript and simply express the local coding matrix as $\mathbf{F}.$  We denote the set whose elements are the (non-zero) entries of $\mathbf{F}$ as $\bar{\mathbf{F}}$. That is $\bar{\mathbf{F}}=\{\beta_{e_i,e_j}: \text{Head}({e_i})=\text{Tail}({e_j}) \text{ and } e_i,e_j\in \mathcal{E}\}$. Note that $\bar{\mathbf{F}}$ is the set of all local coding coefficients. We denote the polynomial ring with field $\mathbb{K}$ and set of variables  $\bar{\mathbf{F}}$ as $\mathbb{K}[\mathbf{F}]$. 

For a path $p = (e_{m_1},e_{m_2},\ldots, e_{m_l})$, the \emph{weight} of the path is a function that maps the path to an element of $\mathbb{K}[\mathbf{F}]$ defined as $w(p) = \prod_{i=1}^{l-1} \beta_{e_{m_i},e_{m_{i+1}}}.$  For two edges $e_{i},e_{j},$ let $H_{i,j}(\mathbf{F}) = \sum_{p \in e_i \rightarrow e_j} w(p).$ The network extended transfer matrix $\mathbf{H}(\mathbf{F})$ is a $|\mathcal{E}| \times |\mathcal{E}|$ matrix whose entry in the $i$-th row and the $j$-th column is $H_{i,j}$. Note that every element of $\mathbf{H}(\mathbf{F})$ lies in the polynomial ring $\mathbb{K}[\mathbf{F}]$. It can be shown that $\mathbf{H}(\mathbf{F}) = (\mathbf{I}-\mathbf{F})^{-1},$ where $\mathbf{I}$ is the $|\mathcal{E}|\times |\mathcal{E}|$ identity matrix in $\mathbb{K}$ \cite{Koetter_Medard}.

A scalar linear network code is specified by a  matrix $\mathbf{F}^{*}=\mathcal{C}(\mathbf{F})$, where $\mathcal{C}$ is a mapping $\mathcal{C}:\mathbb{K}[\mathbf{F}]^{|\mathcal{E}|\times|\mathcal{E}|} \rightarrow \mathbb{K}^{|\mathcal{E}|\times|\mathcal{E}|}$. The network extended transfer matrix, for this specific scalar linear network code, is simply obtained by evaluating the corresponding polynomials, $\mathbf{H}(\mathbf{F}^{*})$.

\subsubsection*{\textbf{Algebraic framework for vector linear network coding}}
For ease of exposition, the algebraic framework described above is specified for scalar linear network coding schemes. However, the framework can be extended to vector linear network coding schemes as well. For a vector linear network code with vectors of dimension $v > 1$, the local coding matrix is a $v|\mathcal{E}|\times v|\mathcal{E}|$ upper triangular matrix $\mathbf{F}=(\mathbf{F}_{i,j})$  whose $(i,j)$-th  block (submatrix) of dimension $v \times v$ is
\begin{align}
    \mathbf{F}_{i,j} = \left\{
  \begin{array}{cl}
   \mathbf{B}_{e_i, e_j} & \text{if Head}(e_i) = \text{Tail}(e_j) \\
    0 & \text{otherwise},
  \end{array}
\right.
\end{align}
where $\mathbf{B}_{e_i,e_j}$ is a matrix of variables $\beta_{e_i,e_j}^{k,l}$ with $k,l \in \{1, \cdots, v\}$ with the variable $\beta_{e_i,e_j}^{k,l}$ in the $k$-th row and $l$-th column of $\mathbf{B}_{e_i,e_j}$  and the variables $\beta_{e_i,e_j}^{k,l}$ for $k,l \in \{1, \cdots, v\}$  represent the local coding coefficients relating  $e_i$ to $e_j$. The notions of the weight of a path and the network extended transfer matrix change accordingly. That is, for a path $p = (e_{m_1},e_{m_2},\ldots, e_{m_l})$, the \emph{weight} of the path is a function that maps the path to an element of $\mathbb{K}[\mathbf{F}]^{v \times v}$ defined as $w(p) = \prod_{i=1}^{l-1} \mathbf{B}_{e_{m_i},e_{m_{i+1}}}.$  For two edges $e_{i},e_{j},$ let $\mathbf{H}_{i,j}(\mathbf{F}) = \sum_{p \in e_i \rightarrow e_j} w(p).$ The network extended transfer matrix $\mathbf{H}(\mathbf{F})$ is a $v|\mathcal{E}| \times v|\mathcal{E}|$ matrix  whose $(i,j)$-th  block (submatrix) of dimension $v \times v$ is $\mathbf{H}_{i,j}$.

A vector linear network code is specified by a  matrix $\mathbf{F}^{*}=\mathcal{C}_{vec}(\mathbf{F})$, where $\mathcal{C}_{vec}$ is a mapping $\mathcal{C}_{vec}:\mathbb{K}[\mathbf{F}]^{v|\mathcal{E}|\times v|\mathcal{E}|} \rightarrow \mathbb{K}^{v|\mathcal{E}|\times v|\mathcal{E}|}$. The network extended transfer matrix, for this specific vector linear network code, is simply obtained by evaluating the corresponding polynomials, $\mathbf{H}(\mathbf{F}^{*})$. 
\subsection{Two-unicast-$Z$ network}

We depict a two-unicast-$Z$ network in Fig. \ref{fig:networksketch}. Note that, throughout this paper, we shade the color of the destination node that possesses the unintended message source as side information. 
\begin{definition}[Two-Unicast-$Z$ Network]
A $(\mathcal{G},\mathcal{S}_{1},\mathcal{T}_{1},\mathcal{S}_{2},\mathcal{T}_{2})$ two unicast-$Z$ network consists of a graph $\mathcal{G}=(\mathcal{V},\mathcal{E})$, and sets $\mathcal{S}_{1},\mathcal{S}_{2},\mathcal{T}_{1},\mathcal{T}_{2} \subseteq \mathcal{E}$. For $i \in \{1,2\}$ the sets $\mathcal{S}_{i},\mathcal{T}_{i}$ are respectively referred to the edges of the $i$-th source and destination, respectively. 
\end{definition}

Throughout this paper, for $i \in \{1,2\}$, we denote the node $v \in \mathcal{V}$ such that $\In(v)=\mathcal{S}_i$ by \emph{Source} $i$. Similarly, for $i \in \{1,2\}$, we denote the node $v \in \mathcal{V}$ such that $\Out(v)=\mathcal{T}_i$ by \emph{Destination} $i$. Whenever it is clear, throughout this paper, we omit the edges of $\mathcal{S}_i, \mathcal{T}_i, i\in\{1,2\}$ in the figures of the two-unicast-$Z$ networks and just keep the nodes \emph{Source} $i$, \emph{Destination} $i$, $i \in \{1,2\}$.

\subsection{Achievability of rate $(R_1,R_2)$ in the two-unicast-$Z$ network }
For a $(\mathcal{G},\mathcal{S}_{1},\mathcal{T}_{1},\mathcal{S}_{2},\mathcal{T}_{2})$ two-unicast-$Z$ network, a rate  $(R_1, R_2)$  is achievable if, for any $\epsilon \geq 0$, there exist a positive integer block length $n$, a finite alphabet $\mathcal{A}$, local encoding functions: 
\begin{itemize}
\item $f_{e_i}: \mathcal{A}^{\lceil n R_i \rceil} \rightarrow \mathcal{A}^n$, for any $e_i \in \Out(\mathcal{S}_i), i\in \{1,2\}$, and

\item $f_{e}: \mathcal{A}^{n |\In(e)|} \rightarrow \mathcal{A}^n$, for any $e \in \mathcal{E} - (\Out(\mathcal{S}_1) \cup \Out(\mathcal{S}_2) \cup \mathcal{T}_1 \cup \mathcal{T}_2)$,
\end{itemize}
 and local decoding functions  $g_{t_i}: \mathcal{A}^{n |\In(t_i)|} \rightarrow \mathcal{A}^n$, for any $t_i \in \mathcal{T}_i, i\in \{1,2\}$ such that, under the uniform probability distribution of $\mathcal{A}^{\lceil nR_1\rceil} \times \mathcal{A}^{\lceil nR_2\rceil}$,   $\text{Pr}(g(w_1,w_2)) \neq (w_1,w_2)) \leq \epsilon$, where $g: \mathcal{A}^{\lceil nR_1\rceil} \times \mathcal{A}^{\lceil nR_2\rceil} \rightarrow \mathcal{A}^{\lceil nR_1\rceil} \times \mathcal{A}^{\lceil nR_2\rceil}$ is a global decoding function induced by the local encoding and decoding functions.  Moreover, the closure of the set of all achievable rates in a network is called \emph{the rate region} of the network, and the supremum of the rate region is called \emph{the capacity} of the network.  
 
 If rate $(R_1,R_2)$ is achievable in a two-unicast-$Z$ network with $\epsilon =0$, then, we say that rate $(R_1,R_2)$ is \emph{zero-error achievable} in this network. Similarly, the closure of the set of all zero-error achievable rates in a network is called \emph{the zero-error rate region} of the network, and the supremum of the zero-error rate region is called \emph{the zero-error capacity} of the network.

 Notice that, for any coding scheme that achieves rate $(R_1,R_2)$ in a two-unicast-$Z$ network, if the encoding and decoding functions are linear, then the coding scheme is \emph{linear} and rate $(R_1,R_2)$ is said to be \emph{linearly achievable} in the two-unicast-$Z$ network.
 Specifically,  we describe next the linear achievability of rate $(R_1,R_2)$ in the two-unicast-$Z$ network under the algebraic framework described in Section \ref{sec:algfrmwrk}. 

\subsubsection*{\textbf{Linear achievability of rate ${(R_1,R_2)}$ in the two-unicast-$Z$ network}}
Consider the linear network coding algebraic framework for a two-unicast-$Z$ network with vectors of dimension $v \geq 1$. Note that $v=1$ corresponds to the scalar linear network coding framework. Let the two-unicast-$Z$ network has sources $\mathcal{S}_{i},i=1,2$ and destinations $\mathcal{T}_{i},i=1,2$.  For the source edge set $\mathcal{S}_{i}, i=1,2$ and the destination edge set $\mathcal{T}_{j},j=1,2,$ the transfer matrix $\mathbf{G}_{i,j}(\mathbf{F})$ is a $v|\mathcal{S}_{i}| \times v|\mathcal{T}_{j}|$ matrix with entries in $\mathbb{K}[\mathbf{F}]$ whose rows (columns) are the  rows (columns) of $\mathbf{H}$  corresponding to $\mathcal{S}_{i}$ ($\mathcal{T}_{j}$).

We now define the notion of achievability  of rate $(R_1,R_2)$ via  linear coding  in the two-unicast-$Z$ network. We assume that source $\mathcal{S}_1$ wants to convey a message $W_1$ of $R_1$ symbols to destination $\mathcal{T}_1$. Similarly, we assume that source $\mathcal{S}_2$ wants to convey a message $W_2$ of $R_2$ symbols to destination $\mathcal{T}_2$.  To understand our definition of achievability of $(R_1,R_2)$ via  linear coding, it is useful to imagine source vectors $\mathbf{X}_{1}=(\mathbf{X}_{1,1}, \cdots, \mathbf{X}_{1,R_1}),\mathbf{X}_{2}=(\mathbf{X}_{2,1}, \cdots, \mathbf{X}_{2, R_2})$ with entries in $\mathbb{K}$ of dimensions $1 \times v R_1$ and $1 \times v R_2$, respectively, where $\mathbf{X}_{i,j_i}, j_i\in\{1,\cdots, R_i\},i \in \{1,2\}$ are vectors of dimension $1 \times v$ such that  $\mathbf{X}_{1,j}$, $j\in\{1,\cdots, R_1\}$, represents the $j$-th symbol sent by the first source, and   $\mathbf{X}_{2,j}$, $j\in\{1,\cdots, R_2\}$, represents the $j$-th symbol sent by the second source. The goal of a network coding scheme is to convey these vectors to their respective destinations.

It is worth noting that for a  linear  coding scheme, there is no loss in generality in assuming that $|\mathcal{S}_{1}|=|\mathcal{T}_{1}|=R_1$ and $|\mathcal{S}_{2}|=|\mathcal{T}_{2}|=R_2$. Clearly, for a  linear coding problem, $|\mathcal{S}_i|$ cannot exceed $|\mathcal{T}_i|$, for $i=1,2$, otherwise the destination with less cardinality than its associated source receives an underdetermined set of equations on its destination edges. On the other hand, if $|\mathcal{S}_{i}|<|\mathcal{T}_{i}|$ for some $i\in\{1,2\}$, then the achievability of rate $|\mathcal{S}_{i}|$ on the communication session between source $\mathcal{S}_i$ and destination $\mathcal{T}_i$ requires the achievability of rate $|\mathcal{S}_{i}|$ on the communication session between source $\mathcal{S}_i$ and  a subset of destination edges  $\tilde{\mathcal{T}}_i$ for some $\tilde{\mathcal{T}}_i \subset {\mathcal{T}}_i$ with $|\tilde{\mathcal{T}}_i| = |{\mathcal{S}}_i|$.

For a specific linear network coding scheme with $\mathbf{F}^{*} \in \mathbb{K}^{v|\mathcal{E}|\times v|\mathcal{E}|}$, the vectors $\mathbf{Y}_{1},\mathbf{Y}_{2}$ received respectively by the two receivers in a two-unicast-$Z$ network can be written as 
\begin{eqnarray} \mathbf{Y}_1 &=& \mathbf{X}_{1}\mathbf{G}_{1,1}(\mathbf{F}^{*})+\mathbf{X}_{2}\mathbf{G}_{2,1}(\mathbf{F}^{*})\\
\mathbf{Y}_2 &=& \mathbf{X}_{2}\mathbf{G}_{2,2}(\mathbf{F}^{*}).
\end{eqnarray}
We have not written the effect of $\mathbf{X}_{1}$ at receiver $2$, since the receiver can subtract the effect of $\mathbf{X}_{1}$ from the side information that it possesses.
 We refer to the linear coding scheme $\mathbf{F}^{*}$ as an achievable scheme if $\mathbf{X}_{1},\mathbf{X}_{2}$ are recoverable from $\mathbf{Y}_{1},\mathbf{Y}_{2}$, respectively. For successful recovery of the two sources from the respective destinations, we require
\begin{eqnarray}
&\det\left(\mathbf{G}_{i,i}(\mathbf{F}^{*})\right) \neq 0, i =1,2, ~~~~\mathbf{G}_{2,1}(\mathbf{F}^{*}) = \mathbf{0}_{v R_2 \times v R_1}&
\label{eq:achievability}
\end{eqnarray}
Note that the above conditions are necessary and sufficient since we restricted $|\mathcal{S}_{1}|=|\mathcal{T}_{1}|=R_1, |\mathcal{S}_{2}|=|\mathcal{T}_{2}|=R_2$. We now define our notion of linear achievability formally.

\begin{definition}[Linear achievability of rate $(R_1,R_2)$]\label{def:scalarachievable}
In a $(\mathcal{G},\mathcal{S}_{1},\mathcal{T}_{1},\mathcal{S}_{2},\mathcal{T}_{2})$ two-unicast-$Z$ network with  $|\mathcal{S}_{1}|=|\mathcal{T}_{1}|=R_1, |\mathcal{S}_{2}|=|\mathcal{T}_{2}|=R_2,$ the rate $(R_1,R_2)$ is said to be achievable via  linear network coding with vectors of dimension $v \geq 1$, if there exists a  linear network coding scheme $\mathbf{F}^{*} \in \mathbb{K}^{v|\mathcal{E}|\times v|\mathcal{E}|}$ such that (\ref{eq:achievability}) holds. 
\end{definition}
\begin{remark}
Definition \ref{def:scalarachievable} applies to both scalar and vector linear achievability of rate $(R_1,R_2)$ in two-unicast-$Z$ networks. In particular, when $v=1$, the definition describes scalar linear achievability, and when $v>1$, the definition describes vector linear achievability.
\end{remark}


An upper bound to the two-unicast-$Z$ problem is  the generalized network sharing (GNS) bound introduced in \cite{GNSbound}.

\subsection{GNS Bound}
Before introducing the GNS upper bound that has been proposed in \cite{GNSbound}, first, a generalized network sharing cut set  can be defined as follows.
\begin{definition}[The generalized network sharing (GNS) cut set\cite{GNSbound}] 
Let $(\mathcal{G},\mathcal{S}_{1},\mathcal{T}_{1},\mathcal{S}_{2},\mathcal{T}_{2})$ be a two-unicast-$Z$ network, a set $\mathcal{S}\subseteq \mathcal{E}$ is defined as a GNS cut set if $\mathcal{G} \backslash \mathcal{S}$ has no $\mathcal{S}_1 \rightarrow \mathcal{T}_1$ paths, no $\mathcal{S}_2 \rightarrow \mathcal{T}_2$ paths and no $\mathcal{S}_2 \rightarrow \mathcal{T}_1$ paths.
\end{definition}
It has been shown in \cite{GNSbound} that the minimum size of a GNS cut set in a two-unicast-$Z$ network provides an upper bound on the achievable sum-rates in this network. More formally, let $(\mathcal{G},\mathcal{S}_{1},\mathcal{T}_{1},\mathcal{S}_{2},\mathcal{T}_{2})$ be a two-unicast-$Z$ network with unit-capacity edges, $R_1+R_2 \leq |\mathcal{S}|$, where $(R_1,R_2)$ is any achievable rate   and $\mathcal{S}$ is any GNS cut set in the network. In addition, it has been shown in \cite{Kamath_twounicast}, \cite{GNS-np} that computing the GNS bound in multiple unicast networks, including two-unicast networks, is NP-hard.  

\begin{remark}[Notation]
	In the remainder of this paper, 
	We drop the dependence on $\mathbf{F}$ with the understanding that, unless otherwise specified, all network transfer polynomials lie in the ring $\mathbb{K}(\mathbf{F}).$ In instances where we refer to a specific network code, $\mathbf{F}^{*} \in \mathbb{K}^{|\mathcal{E}|\times |\mathcal{E}|},$ we specify this explicitly.
\end{remark}

\begin{remark}
It is worth noting that we have chosen the field of operation $\mathbb{K}$ as the algebraic closure of $\mathbb{F}_{2}$ in the above definitions. The algebraic closure of $\mathbb{F}_{2}$ consists of every finite extension of $\mathbb{F}_{2}$ as a sub-field. It is therefore useful to note that as per Definition \ref{def:scalarachievable}, a rate $(R_1,R_2)$ is achievable via  linear coding if and only if there is \emph{some} finite extension of $\mathbb{F}_{2}$ over which the rate is achievable. It is also worth noting that there is, for general networks beyond two-unicast-$Z$ networks, a loss of generality in restricting to extensions of $\mathbb{F}_{2}$, since there exist networks where the notion of solvability depends on the characteristic of the field \cite{Dougherty_insufficiency}. However, the field characteristic does not influence the results of this paper, so we restrict ourselves to extensions of $\mathbb{F}_{2}$ in this document. 
\end{remark}
\begin{remark}[Notation]
Let $\mathcal{G}=(\mathcal{V},\mathcal{E})$ be a directed acyclic graph, $s,t,s_i,t_i \in \mathcal{E}$, $i \in\{1,2\}$, and let $\mathcal{U}_1, \mathcal{U}_2 \subseteq\mathcal{E}$.

\begin{itemize}
\item $\sum\limits_{p :  s \rightarrow t} \hspace{-1mm}w(p)$ denotes the sum of the weights of all  $s\rightarrow t$ paths and is called the transfer polynomial from $s$ to $t$.
\item $\sum\limits_{p: s \rightarrow t \text{ via } \mathcal{U}_1 \backslash \mathcal{U}_2} \hspace{-1mm}w(p)$ denotes the sum of the weights of $s\rightarrow t$ paths such that each of these $s \rightarrow t$ paths goes through at least on edge in $\mathcal{U}_1$ and does not go through any edge in $\mathcal{U}_2$.
\item  $\sum\limits_{\substack{p_1: s_1 \rightarrow t_1 \\  p_2:s_2 \rightarrow t_2}} \hspace{-3mm}w(p_1)w(p_2)$ denotes the sum of the weights of all $(p_1,p_2)$  pairs of paths such  that $p_1$ is an $s_1 \rightarrow t_1$ path and $p_2$ is an $s_2 \rightarrow t_2$ path.
\end{itemize}
\end{remark}
\section{Commutative Algebra Background}\label{sec:commbkgnd}
In this section, we describe some elementary concepts of commutative algebra \cite{Cox_Little}, and state a central  result: Hilbert's Nullstellensatz. Afterwards, in the next section, we state and describe conditions equivalent to (\ref{eq:achievability}) for achievability of rate $(1,1)$ as a corollary to Hilbert's Nullstellensatz. We begin with some definitions.

\begin{definition}[Ideals]
Let $\mathbb{K}$ be a field. A subset  $ I $ of the polynomial ring $\mathbb{K}[x_1, x_2, \cdots, x_n]$ is an ideal if it satisfies:
\begin{enumerate}[label=(\alph*)]
\item $0 \in I$,
\item if $f$, $g \in I $, then $f + g \in I$, and
\item if $f \in I$ and $h \in \mathbb{K}[x_1, x_2, \cdots, x_n]$, then $hf \in I$.  
\end{enumerate}
\end{definition}
\begin{definition}[Ideals generated by polynomials]
Let $\mathbb{K}$ be a field, and let $f_1, f_2, \cdots, f_m$ be polynomials in the polynomial ring $\mathbb{K}[x_1, x_2, \cdots, x_n]$. The ideal generated by polynomials   $f_1, f_2, \cdots, f_m$ in $\mathbb{K}[x_1, x_2, \cdots, x_n]$ is denoted as  $<f_1, f_2, \cdots, f_m>$ and defined as  $$<f_1, f_2, \cdots, f_m> = \left\{\sum\limits_{i=1}^m h_i f_i: h_1, h_2, \cdots, h_m \in \mathbb{K}[x_1, x_2, \cdots, x_n]\right\}.$$
\end{definition}
\begin{definition}[Affine varieties]
Let $\mathbb{K}$ be a field, and let $f_1, f_2, \cdots, f_m$ be polynomials in the polynomial ring $\mathbb{K}[x_1, x_2, \cdots, x_n]$. The affine variety denoted by $\mathbf{V}(f_1, f_2, \cdots, f_m) \subseteq \mathbb{K}^n$ is defined to be its set of ``roots'', that is,  $$\mathbf{V}(f_1, f_2, \cdots, f_m) = \{(a_1, a_2, \cdots, a_n) \in \mathbb{K}^n : f_i(a_1, a_2, \cdots, a_n) = 0 \hspace{2mm}\forall i \in \{1, \cdots, m\}\}.$$ 
\end{definition}
\begin{definition}[Ideals of varieties]
Let $\mathbb{K}$ be a field, $\mathbb{K}[x_1, x_2, \cdots, x_n]$ be its associated polynomial ring, and let $V \subset \mathbb{K}^n$ be an affine variety. The ideal of the variety $V$ is denoted as $\mathbf{I}(V)$ and defined as $$\mathbf{I}(V) = \left\{ f \in \mathbb{K}[x_1, x_2, \cdots, x_n]: f(a_1, a_2, \cdots, a_n ) = 0 \hspace{2mm}\forall (a_1, a_2, \cdots, a_n) \in V\right\}.$$
\end{definition}
\begin{remark}[A reversing-inclusion property\cite{Cox_Little}]\label{rev}
Let $\mathbb{K}$ be a field. Let $V$ and $W$ be affine varieties in $\mathbb{K}^n$. Then, $V \subseteq W$ if, and only if, $\mathbf{I}(V) \supseteq \mathbf{I}(W)$.
\end{remark}
\begin{theorem}[Hilbert's Nullstellensatz\cite{Cox_Little}]\label{thm:Hilb}
 Let $\mathbb{K}$ be an algebraically closed field and $f, f_1,f_2, \cdots,$ $ f_m$ $\in \mathbb{K}[x_1,x_2, \cdots, x_n]$.  $f \in \mathbf{I}(\mathbf{V}(f_1, f_2, \cdots, f_m))$ if, and only if, there exists a positive integer $L$ such that $f^L \in <f_1,f_2, \cdots,f_m>$. 
\end{theorem}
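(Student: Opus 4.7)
The plan is to split the biconditional into its easy and hard directions. The forward implication ($f^L \in \langle f_1,\ldots,f_m\rangle \Rightarrow f \in \mathbf{I}(\mathbf{V}(f_1,\ldots,f_m))$) is immediate: any common root $a$ of the $f_i$ zeroes every combination $\sum h_i f_i$, so $f(a)^L = 0$, and since $\mathbb{K}$ is a field this forces $f(a)=0$. The substantive direction is the converse, where one must manufacture an algebraic certificate $f^L = \sum h_i f_i$ out of the purely geometric hypothesis that $f$ vanishes on $\mathbf{V}(f_1,\ldots,f_m)$.

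For the converse I would use the standard Rabinowitsch trick to reduce to the weak form of the theorem. Introduce a fresh indeterminate $y$ and form the enlarged ideal $J = \langle f_1, \ldots, f_m, 1 - yf\rangle$ in the ring $\mathbb{K}[x_1,\ldots,x_n,y]$. Any point $(a_1,\ldots,a_n,b) \in \mathbf{V}(J)$ would satisfy $f_i(a)=0$ for every $i$, so that $(a_1,\ldots,a_n) \in \mathbf{V}(f_1,\ldots,f_m)$ and hence $f(a)=0$ by hypothesis; combined with $1 - b\,f(a) = 0$ this yields $1=0$, a contradiction. Thus $\mathbf{V}(J)=\emptyset$. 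Assuming the weak Nullstellensatz (any proper ideal in $\mathbb{K}[x_1,\ldots,x_n,y]$ has a nonempty variety when $\mathbb{K}$ is algebraically closed), emptiness forces $J$ to be the whole ring, giving an identity
\[
1 \;=\; \sum_{i=1}^{m} h_i(x,y)\,f_i(x) \;+\; g(x,y)\bigl(1 - y\,f(x)\bigr).
\]
Substituting $y = 1/f(x)$ kills the last term, leaving a rational-function equation; clearing denominators by multiplying through by a large enough power $f^L$ produces $f^L = \sum_i \tilde h_i(x) f_i(x)$, which is exactly $f^L \in \langle f_1,\ldots,f_m\rangle$.

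The genuinely difficult step, and the main obstacle, is the weak Nullstellensatz itself. The cleanest route I would take is via Zariski's lemma: if a field $L\supseteq\mathbb{K}$ is finitely generated as a $\mathbb{K}$-algebra, then $L$ is algebraic over $\mathbb{K}$. Given the nonzero proper ideal $J$, pick any maximal ideal $\mathfrak{m}\supseteq J$; then $L := \mathbb{K}[x_1,\ldots,x_n,y]/\mathfrak{m}$ is a field finitely generated as a $\mathbb{K}$-algebra, hence algebraic over $\mathbb{K}$, hence equal to $\mathbb{K}$ because $\mathbb{K}$ is algebraically closed. The images of the coordinate functions under the quotient map supply a point of $\mathbf{V}(\mathfrak{m})\subseteq\mathbf{V}(J)$, contradicting emptiness. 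Zariski's lemma itself would be proved via Noether normalization, producing algebraically independent $z_1,\ldots,z_d\in L$ with $L$ finite over $\mathbb{K}[z_1,\ldots,z_d]$; because $L$ is a field, one shows $d=0$, and then finiteness forces $L$ to be algebraic over $\mathbb{K}$. This chain (Noether normalization $\Rightarrow$ Zariski's lemma $\Rightarrow$ weak Nullstellensatz $\Rightarrow$ strong Nullstellensatz via Rabinowitsch) is where essentially all of the work lives; the remaining manipulations are formal.
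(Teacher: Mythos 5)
The paper does not prove this theorem; it is cited directly from \cite{Cox_Little} as a known result and used as a black box. Your outline is nonetheless a correct and entirely standard proof: the forward direction is the trivial substitution argument you give, and the converse via the Rabinowitsch trick (adjoin $y$, form $J = \langle f_1,\ldots,f_m, 1-yf\rangle$, observe $\mathbf{V}(J)=\emptyset$, apply the weak Nullstellensatz, substitute $y=1/f$ and clear denominators) reducing to the weak form, with the weak form obtained from Zariski's lemma and Noether normalization, is exactly the chain found in Cox--Little--O'Shea and most commutative-algebra texts. There is no gap; you have correctly identified that essentially all of the mathematical content resides in the weak Nullstellensatz, and the route you chose to it is the cleanest of the standard ones.
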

\section{Application of Nullstellensatz to Two-Unicast-Z Networks}\label{sec:cons}
In this section,  we use  Hilbert's Nullstellensatz to describe an  equivalent condition to (\ref{eq:achievability}) for achievability of rate $(1,1)$ in two-unicast-$Z$ networks.  
\begin{corollary}\label{corol}
	The rate $(1,1)$ is not achievable in a $(\mathcal{G},\mathcal{S}_{1},\mathcal{T}_{1},\mathcal{S}_{2},\mathcal{T}_{2})$ two-unicast-$Z$ network with a minimum GNS cut set of size two using scalar linear coding if, and only if, for some $L \in \mathbb{Z}_{+},$ there exists a polynomial  $P$ such that 
	\begin{equation}   \mathbf{G}_{2,1} P\label{eq:ach_1}=\left(\mathbf{G}_{1,1}\mathbf{G}_{2,2}\right)^{L},\end{equation}\label{lem:ach_1}
	where, $\mathbf{G}_{i,j}$ is the transfer polynomial from source $\mathcal{S}_i$ to destination $\mathcal{T}_j$,  $i,j \in\{1,2\}$.
\end{corollary}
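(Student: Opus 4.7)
The plan is to obtain the stated equivalence as a direct consequence of Hilbert's Nullstellensatz applied to the algebraic achievability condition (\ref{eq:achievability}). Since $R_1 = R_2 = 1$ and, without loss of generality, $|\mathcal{S}_i| = |\mathcal{T}_i| = 1$ for $i \in \{1,2\}$, each transfer matrix $\mathbf{G}_{i,j}$ is a single polynomial in $\mathbb{K}[\mathbf{F}]$ (expressible as $\sum_{p: \mathcal{S}_i \to \mathcal{T}_j} w(p)$), and the determinant conditions in (\ref{eq:achievability}) collapse to non-vanishing of these scalar polynomials.

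First, I would restate non-achievability in terms of polynomial evaluation: rate $(1,1)$ fails via scalar linear coding iff there is no $\mathbf{F}^* \in \mathbb{K}^{|\mathcal{E}| \times |\mathcal{E}|}$ with $\mathbf{G}_{1,1}(\mathbf{F}^*) \neq 0$, $\mathbf{G}_{2,2}(\mathbf{F}^*) \neq 0$, and $\mathbf{G}_{2,1}(\mathbf{F}^*) = 0$. Contrapositively, this says that every common root of $\mathbf{G}_{2,1}$ is also a root of the product $\mathbf{G}_{1,1}\mathbf{G}_{2,2}$, i.e.,
\begin{equation*}
    \mathbf{G}_{1,1}\mathbf{G}_{2,2} \text{ vanishes on } \mathbf{V}(\mathbf{G}_{2,1}),
\end{equation*}
which by definition is the membership $\mathbf{G}_{1,1}\mathbf{G}_{2,2} \in \mathbf{I}(\mathbf{V}(\mathbf{G}_{2,1}))$.

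Next, since $\mathbb{K}$ is the algebraic closure of $\mathbb{F}_2$ and hence algebraically closed, I would invoke Theorem \ref{thm:Hilb}: the membership $\mathbf{G}_{1,1}\mathbf{G}_{2,2} \in \mathbf{I}(\mathbf{V}(\mathbf{G}_{2,1}))$ is equivalent to the existence of $L \in \mathbb{Z}_+$ such that $(\mathbf{G}_{1,1}\mathbf{G}_{2,2})^L \in \langle \mathbf{G}_{2,1} \rangle$. Unfolding the definition of the ideal generated by a single polynomial, this is precisely the existence of a polynomial $P \in \mathbb{K}[\mathbf{F}]$ with $(\mathbf{G}_{1,1}\mathbf{G}_{2,2})^L = P \, \mathbf{G}_{2,1}$, which is (\ref{eq:ach_1}). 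Both directions of the ``if and only if'' follow simultaneously because Nullstellensatz is an equivalence.

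The main points to verify carefully are (i) that each $\mathbf{G}_{i,j}$ genuinely lies in the polynomial ring $\mathbb{K}[\mathbf{F}]$ rather than just the field of fractions $\mathbb{K}(\mathbf{F})$, which is assured by the path-sum expression for the entries of $\mathbf{H} = (\mathbf{I}-\mathbf{F})^{-1}$, and (ii) that the hypothesis of minimum GNS cut equal to two merely excises the trivial regime: if the minimum GNS cut were less than two, rate $(1,1)$ would already be ruled out by the GNS bound $R_1 + R_2 \leq |\mathcal{S}|$, so the algebraic equivalence is most informative precisely in the stated regime. No substantial obstacle is anticipated; the corollary is essentially a direct translation of (\ref{eq:achievability}) into the language of varieties and ideals followed by one invocation of Nullstellensatz.
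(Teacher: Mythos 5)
Your proof is correct and takes essentially the same route as the paper: reduce non-achievability to the implication that every root of $\mathbf{G}_{2,1}$ is a root of $\mathbf{G}_{1,1}\mathbf{G}_{2,2}$, identify this with $\mathbf{G}_{1,1}\mathbf{G}_{2,2}\in\mathbf{I}(\mathbf{V}(\mathbf{G}_{2,1}))$, and invoke Hilbert's Nullstellensatz over the algebraically closed field $\mathbb{K}$. The only cosmetic difference is that you carry both directions through the same chain of equivalences, while the paper proves the easy (``if'') direction directly and routes only the harder direction through the variety-union/reversing-inclusion steps before applying the Nullstellensatz.
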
 
\begin{proof}
First, suppose that there exists a polynomial ${P}$  such that $\mathbf{G}_{2,1} P=\left(\mathbf{G}_{1,1}\mathbf{G}_{2,2}\right)^{L}$, for some positive integer $L$. In order to achieve the rate pair (1,1), we need to satisfy  the conditions in (\ref{eq:achievability}), that is, to set $\mathbf{G}_{2,1}= 0$ such that $\mathbf{G}_{1,1} \neq 0$ and $\mathbf{G}_{2,2} \neq 0$. However, from (\ref{eq:ach_1}), setting  $\mathbf{G}_{2,1}=0$ gives $\mathbf{G}_{1,1} =0$ or $\mathbf{G}_{2,2} =0$. Hence, the first condition in  (\ref{eq:achievability}) cannot be satisfied and rate $(1,1)$ is not achievable in the network using scalar linear coding.

For the other direction, suppose that the rate pair $(1,1)$ is not achievable in the network using scalar linear coding.  Thus the conditions in  (\ref{eq:achievability}) cannot be satisfied simultaneously. Therefore, whenever $\mathbf{G}_{2,1}=0$, we have $\mathbf{G}_{1,1} =0$ or $\mathbf{G}_{2,2} =0$ . In other words, 
\begin{align}\label{eqva}
\mathbf{V}\big(\mathbf{G}_{1,1}\big) \cup \mathbf{V}\big(\mathbf{G}_{2,2}\big) \supseteq \mathbf{V}\big(\mathbf{G}_{2,1}\big).
 \end{align} 
 Since $\mathbf{V}\big(\mathbf{G}_{1,1}\big) \cup \mathbf{V}\big(\mathbf{G}_{2,2}\big)= \mathbf{V}\big(\mathbf{G}_{1,1}\mathbf{G}_{2,2}\big)$ \cite{Cox_Little}, substituting in (\ref{eqva}) gives
 \begin{align}\label{eqvaaa}
\mathbf{V}\big(\mathbf{G}_{1,1}\mathbf{G}_{2,2}\big)\supseteq \mathbf{V}\big(\mathbf{G}_{2,1}\big).
 \end{align} 
 

Then, from Remark \ref{rev},
 \begin{align}     \label{revI}                                                 
\mathbf{I}\big(\mathbf{V}\big(\mathbf{G}_{1,1}\mathbf{G}_{2,2}\big)\big) \subseteq \mathbf{I}\big(\mathbf{V}\big(\mathbf{G}_{2,1}\big)\big).
 \end{align}
  
 Since $\mathbf{G}_{1,1}\mathbf{G}_{2,2} \in \mathbf{I}\big(\mathbf{V}\big(\mathbf{G}_{1,1}\mathbf{G}_{2,2}\big)\big)$, from (\ref{revI}), we get
 \begin{align}
  \mathbf{G}_{1,1}\mathbf{G}_{2,2}\in\mathbf{I}\big(\mathbf{V}\big(\mathbf{G}_{2,1}\big)\big)
  \end{align}

The last equation satisfies  the hypothesis of Theorem \ref{thm:Hilb} (Hilbert's Nullstellensatz), hence there exists an integer $L\geq 1$ such that $\left(\mathbf{G}_{1,1}\mathbf{G}_{2,2}\right)^L \in  \hspace{2mm}\big<\hspace{0mm}\mathbf{G}_{2,1} \big>$. In other words, there exists a polynomial ${P}$ such that $ \mathbf{G}_{2,1} P=\left(\mathbf{G}_{1,1}\mathbf{G}_{2,2}\right)^{L}$, for some positive integer $L$.
\end{proof}

The contrapositive of Corollary \ref{corol} stated in the next corollary provides the equivalent condition to (\ref{eq:achievability}) for the rate $(1,1)$ achievability in the two-unicast-$Z$ network using scalar linear codes.
\begin{corollary}\label{corolf}
	The rate $(1,1)$ is achievable in a $(\mathcal{G},\mathcal{S}_{1},\mathcal{T}_{1},\mathcal{S}_{2},\mathcal{T}_{2})$ two-unicast-$Z$ network with a minimum GNS cut set of size two using scalar linear coding if, and only if, there  does not exist a polynomial  $P$ such that 
	\begin{equation}   \mathbf{G}_{2,1} P\label{eq:ach_1}=\left(\mathbf{G}_{1,1}\mathbf{G}_{2,2}\right)^{L},\end{equation}\label{lem:ach_1}
	for all  $L \in \mathbb{Z}_{+}$, where $\mathbf{G}_{i,j}$ is the transfer polynomial from source $\mathcal{S}_i$ to destination $\mathcal{T}_j$,  $i,j \in\{1,2\}$.
\end{corollary}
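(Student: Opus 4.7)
The plan is to observe that Corollary \ref{corolf} is the logical negation of both sides of the biconditional in Corollary \ref{corol}, so no new mathematical content is required; the proof amounts to a one-line appeal. Since Corollary \ref{corol} states that rate $(1,1)$ is \emph{not} achievable by scalar linear coding if and only if there exist $L \in \mathbb{Z}_+$ and a polynomial $P$ with $\mathbf{G}_{2,1}P = (\mathbf{G}_{1,1}\mathbf{G}_{2,2})^L$, negating both sides simultaneously preserves the biconditional and yields exactly the statement of Corollary \ref{corolf}.

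Concretely, I would first recall Corollary \ref{corol}, then apply the tautology $(A \Leftrightarrow B) \Leftrightarrow (\neg A \Leftrightarrow \neg B)$. The negation of the left-hand side turns ``rate $(1,1)$ is not achievable'' into ``rate $(1,1)$ is achievable'' (this is a strict boolean negation, given the binary nature of scalar linear achievability under Definition \ref{def:scalarachievable} with the ambient field being the algebraic closure of $\mathbb{F}_2$). The negation of the right-hand side turns the existential ``for some $L \in \mathbb{Z}_+$ there exists a polynomial $P$ such that $\mathbf{G}_{2,1}P = (\mathbf{G}_{1,1}\mathbf{G}_{2,2})^L$'' into the universal ``for every $L \in \mathbb{Z}_+$ there is no polynomial $P$ with $\mathbf{G}_{2,1}P = (\mathbf{G}_{1,1}\mathbf{G}_{2,2})^L$,'' which is precisely the condition stated in Corollary \ref{corolf}.

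There is no real technical obstacle; the only mild point worth flagging is to make sure the reader parses the quantifiers in the statement correctly, namely that ``there does not exist $P$ such that $\ldots$ for all $L$'' is to be read as ``for all $L$, there does not exist $P$ such that $\ldots$,'' matching the negation of Corollary \ref{corol}'s existential quantification over $L$. Accordingly, the proof will be two or three sentences, simply invoking Corollary \ref{corol} and taking contrapositives.
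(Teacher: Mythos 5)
Your proposal matches the paper exactly: the text preceding Corollary~\ref{corolf} explicitly describes it as the contrapositive of Corollary~\ref{corol}, and no separate proof is given. Your observation about the quantifier parsing (that the statement should be read as ``for all $L$, no such $P$ exists,'' the negation of the existential in Corollary~\ref{corol}) is a fair and useful clarification, but the underlying logic is the same one-line negation the authors intend.
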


\begin{figure}[]
    \centering
    \includegraphics[scale=0.35]{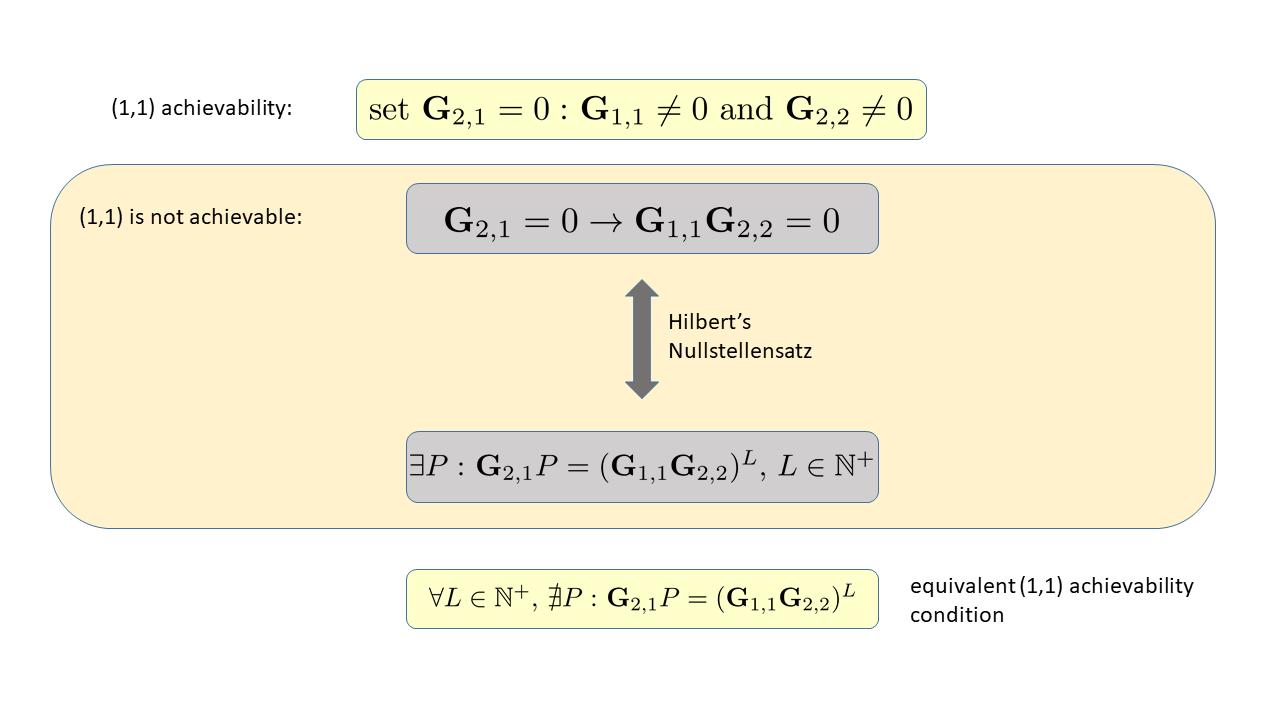}
    \caption{A diagram showing the rate $(1,1)$ commutative algebraic achievability condition for two-unicast-$Z$ networks via scalar linear codes}
    \label{fig:lemproof}
\vspace{-0pt}
\end{figure}
Fig. \ref{fig:lemproof} shows a diagram demonstrating the relations between different achievability conditions for rate $(1,1)$ using scalar linear codes in two-unicast-$Z$ networks.

In the following two sections, we aim to show that the generalized network sharing (GNS) bound is tight in two-unicast-$Z$ networks at rate $(1,1)$. That is, we prove that whenever a two-unicast-$Z$ network has a minimum GNS cut of size two, rate $(1,1)$ is achievable in the network, specifically, using scalar linear codes.  In order to prove that, we first introduce a decomposition of networks, in the next section, that allows representing any network in terms of smaller sub-networks with connections to each other. Using this network decomposition, a variant of the achievability condition in Corollary \ref{corolf} is obtained in Lemma \ref{lem0} and facilitates the achievability proof afterwards. In order to prove the achievability, we use degree arguments to prove the  inexistence of specific polynomials.    

\section{Network Transfer Matrix Decomposition}\label{sec:netdecomp}
In this section, we develop a network  decomposition method that is central to our achievability proof. While our method is more generally applicable, we present our decomposition for the case of a two-unicast-$Z$ network with two GNS edges $\mathcal{C}_{GNS}=\{e_1,e_2\}$ where  $\operatorname{Ord}(e_1) < \operatorname{Ord}(e_2)$. This network decomposition is valid under the scalar linear algebraic framework described in Section \ref{sec:background}, and is formulated in Lemma \ref{lem:netdecomp}. A more general network decomposition for two-unicast-$Z$ networks can be found in Appendix A.  In Section \ref{subsec:cons}, we state and describe a condition, in Lemma \ref{lem0}, that is equivalent to the condition stated in Corollary  \ref{corolf} for achievability of rate $(1,1)$. This  new condition  has  useful properties as will be shown at the end of this section.

In order to understand the motivation behind our network decomposition, 
let $\mathcal{G}=(\mathcal{V},\mathcal{E})$ be a directed acyclic graph and $e_1,e_2 \in \mathcal{E}$. Suppose that all $e_1 \rightarrow e_2$ paths contain some edge $e \in \mathcal{E}$, then $\sum\limits_{p: e_1 \rightarrow e_2} \hspace{-1mm}w(p)$ can be factorized as  $\sum\limits_{p: e_1 \rightarrow e_2} \hspace{-1mm}w(p)= \sum\limits_{p: e_1 \rightarrow e} \hspace{-1mm}w(p) \sum\limits_{p: e \rightarrow e_2} \hspace{-3mm}w(p)$.
This implies that if an edge $e \in \mathcal{E}$ is a single edge cut set in a single unicast network with source edge and destination edge $s,t \in \mathcal{E}$, respectively, then the single unicast network can be decomposed into two concatenated smaller single unicast sub-networks such that one of these smaller sub-networks has $s$ and $e$ as source and destination edges, respectively, and the other    has $e$ and $t$ as source and destination edges, respectively. In addition,  the transfer polynomial of the network, i.e., $\sum\limits_{p: s \rightarrow t} \hspace{-1mm}w(p)$ can be written in terms of the transfer polynomials of the  sub-networks, i.e., $\sum\limits_{p: s \rightarrow e} \hspace{-1mm}w(p)$ and $\sum\limits_{p: e \rightarrow t} \hspace{-1mm}w(p)$, in  the following product form, 
\begin{equation}
    \sum\limits_{p: s \rightarrow t} \hspace{-1mm}w(p)=\sum\limits_{p: s \rightarrow e} \hspace{-1mm}w(p)\sum\limits_{p: e \rightarrow t} \hspace{-1mm}w(p).
\end{equation}
Fig. \ref{fig:singdecomp} illustrates the decomposition of a single unicast network with respect to  the single edge cut set $e$, and the corresponding relation between the transfer polynomials of the original network and the resultant sub-networks.

\begin{figure}[]
\vspace{5mm}
    \centering
    \includegraphics[scale=0.5]{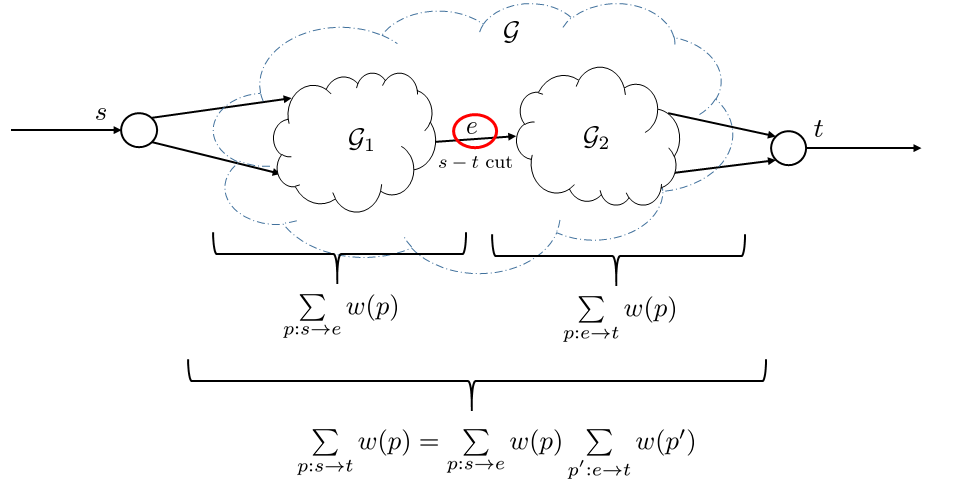}
    \caption{A  single unicast network decomposed into two single unicast sub-networks with respect to the single edge cut set $e$ and the corresponding relation between the transfer polynomials of the different networks. }
    \label{fig:singdecomp}
\vspace{0pt}
\end{figure}

The network decomposition lemma  presented in this section aims to generalize  the idea of decomposing the single unicast network with respect to any single edge cut set in the network to the two-unicast-$Z$ networks. In particular, the network decomposition lemma describes the decomposition of any two-unicast-$Z$ with respect to any  GNS cut set of size two in the network. The main challenge is that if the edges of the GNS cut communicate with each other, then you will have to carefully untangle the effect of the weight of this path that goes through both edges. Before we present the network decomposition lemma for two-unicast-$Z$ networks, we first provide some definitions.

Let $(\mathcal{G},\mathcal{S}_{1},\mathcal{T}_{1},\mathcal{S}_{2},\mathcal{T}_{2})$ be a two-unicast-$Z$ network containing a GNS cut set $\mathcal{C}_{GNS}=\{e_1,e_2\}$ of size two. This two-unicast-$Z$ network can be decomposed into two two-unicast-$Z$ networks: The first one, denoted \emph{the left-side network}, is a two-unicast-$Z$ network with $\mathcal{S}_1,\mathcal{S}_2$ as sources and $e_1,e_2$ as destinations, and the second one, denoted \emph{the right-side network}, is a two-unicast-$Z$ network with $e_1,e_2$ as sources and $\mathcal{T}_1,\mathcal{T}_2$ as destinations. The left-side and right-side networks are formally described in the following two definitions.
\begin{definition}[Left-side  network]
Consider a $(\mathcal{G},\mathcal{S}_1,\mathcal{T}_1,\mathcal{S}_2, \mathcal{T}_2)$ two-unicast-$Z$ network  with GNS cut set  $\mathcal{C}_{GNS}=\{e_1,e_2\}$,  $\operatorname{Ord}(e_1) < \operatorname{Ord}(e_2)$.    
The \emph{left-side}    network is defined as the subgraph $\mathcal{G}_1=(\mathcal{V}_1,\mathcal{E}_1) \subseteq \mathcal{G}$, where $\mathcal{E}_1=\{e \in \mathcal{E}: e \text{ belongs to some }\mathcal{S}_1 \cup \mathcal{S}_2 \rightarrow \mathcal{C}_{GNS} \text{ path}\}$ and $\mathcal{V}_1=\{v \in \mathcal{V}: v \text{ is the head or tail of edge $e$, for some $e \in \mathcal{E}_{1}$}\}$. 
\end{definition}
\begin{definition}[Right-side network]
Consider a $(\mathcal{G},\mathcal{S}_1,\mathcal{T}_1,\mathcal{S}_2, \mathcal{T}_2)$ two-unicast-$Z$ network  with GNS cut set  $\mathcal{C}_{GNS}=\{e_1,e_2\}$,  $\operatorname{Ord}(e_1) < \operatorname{Ord}(e_2)$.  The \emph{right-side}   network is defined as the subgraph $\mathcal{G}_2=(\mathcal{V}_2,\mathcal{E}_2) \subseteq \mathcal{G}$, where $\mathcal{E}_2=\{e \in \mathcal{E}: e \text{ belongs to some }\mathcal{C}_{GNS} \rightarrow \mathcal{T}_1 \cup \mathcal{T}_2 \text{ path}\}$ and  $\mathcal{V}_2=\{v \in \mathcal{V}: v \text{ is the head or tail of edge $e$, for some $e \in \mathcal{E}_{2}$}\}$. 
\end{definition}


Now, we have the following definitions.
\begin{definition}[Transfer matrix]
Consider a DAG $\mathcal{G}=(\mathcal{V},\mathcal{E})$, let $\mathcal{S}'=\{s'_1,s'_2, \cdots, s'_m\}$ and $\mathcal{T}'=\{t'_1,t'_2, \cdots, t'_n\}$ be any two subsets of $\mathcal{E}$. The transfer matrix $\mathbf{M}_{(\mathcal{S}',\mathcal{T}')}$ is defined as the $m \times n$  matrix whose entry at the index  $(i,j)$ is $\sum\limits_{p: s'_{i} \rightarrow t'_{j}} \hspace{-3mm}w(p)$.
\end{definition}
Note that $\mathbf{M}_{(\mathcal{S}',\mathcal{T}')}$ is the submatrix of the network extended transfer matrix $\mathbf{H}$ with rows (columns) corresponding to $\mathcal{S}'$ $(\mathcal{T}')$. Now, we need to define a specific transfer matrix that describes the received outputs on the destination edges with respect to the source messages. In order to do so, we define the network transfer matrix. Similar to the transfer matrix, the network transfer matrix is a submatrix of the network extended transfer matrix $\mathbf{H}$. However, for the network transfer matrix, the rows are only associated to the source edges, i.e., $\mathcal{S}_1\cup\mathcal{S}_2$, and the columns are only associated to the destination edges, i.e., $\mathcal{T}_1\cup\mathcal{T}_2$. This this different from the transfer matrices  where  rows and columns can correspond to any arbitrary set of edges in the network. A formal definition for the network transfer matrix can be as follows.
\begin{definition}[Network transfer matrix]
 Consider a $(\mathcal{G},\mathcal{S}_1,\mathcal{T}_1,\mathcal{S}_2, \mathcal{T}_2)$ two-unicast-$Z$ network. The network transfer matrix is  $\mathbf{M}_{(\mathcal{S}_1\cup\mathcal{S}_2,\mathcal{T}_1\cup\mathcal{T}_2)}$.
\end{definition}

 We also define another special case of a transfer matrix that captures the underlying algebraic properties among any subset of edges in the network. This type of transfer matrix is denoted by the coupling matrix and is formally defined as follows. 

\begin{definition}[Coupling matrix]
Consider a DAG $\mathcal{G}=(\mathcal{V},\mathcal{E})$, let $\mathcal{U}=\{u_1,u_2, \cdots, u_m\}$ be any subset of $\mathcal{E}$, let the edges of the set be ordered such that $\Ord(u_j) > \Ord(u_i)$ if $j>i$. The coupling matrix $\mathbf{\Lambda}^{\mathcal{U}}$ is defined as the $m \times m$ upper triangular matrix  whose $(i,j)$-th entry is $\sum\limits_{p: u_{i} \rightarrow u_{j}} \hspace{-3mm}w(p)$ if $i < j$, or one if $i=j$. 
\end{definition}


\begin{figure*}
$$
\mathbf{M}= \left(\begin{array}{ccc} 
\sum\limits_{p: s_{1} \rightarrow t_{1}} \hspace{-0mm}w(p) & \sum\limits_{p: s_{1} \rightarrow t_{2} \hspace {1mm} \text{via} \hspace {1mm}\mathcal{C}_{GNS}} \hspace{-3mm}w(p)   \\
\sum\limits_{p: s_{2} \rightarrow t_{1}} \hspace{-0mm}w(p) & \sum\limits_{p: s_{2} \rightarrow t_{2}} \hspace{-0mm}w(p)
\end{array}\right), \mathbf{\Lambda}= \left(\begin{array}{ccc} 
1 & \sum\limits_{p: e_{1} \rightarrow e_{2} } \hspace{-0mm}w(p) \\
0 & 1  
\end{array}\right),
$$
\begin{align}\label{eq:Ms}
\mathbf{M}_1= \left(\begin{array}{ccc} 
\sum\limits_{p: s_{1} \rightarrow e_{1}} \hspace{-0mm}w(p) & \sum\limits_{p: s_{1} \rightarrow e_{2} \backslash e_{1}} \hspace{-3mm}w(p)  \\
\sum\limits_{p: s_{2} \rightarrow e_{1}} \hspace{-0mm}w(p) & \sum\limits_{p: s_{2} \rightarrow e_{2}\backslash e_{1}} \hspace{-3mm}w(p) 
\end{array}\right), 
\mathbf{M}_2= \left(\begin{array}{ccc} 
\hspace{-0mm}\sum\limits_{p: e_{1} \rightarrow t_{1} \backslash {e_{2}}} \hspace{-3mm}w(p) & \sum\limits_{p: e_{1} \rightarrow t_{2} \backslash {e_{2}}} \hspace{-3mm}w(p)  \\
\sum\limits_{p: e_{2} \rightarrow t_{1}} \hspace{-0mm}w(p) & \sum\limits_{p: e_{2} \rightarrow t_{2}} \hspace{-0mm}w(p)  \\
\end{array}\right).
\end{align}
\hrulefill
\end{figure*}
Now, given the above definitions and 
recalling that $\mathbf{F}$ is the local coding matrix of the network and   $\bar{\mathbf{F}}$ is the set whose elements are the (non-zero) entries of $\mathbf{F}$, we propose the following lemma on network decomposition. 
\begin{lemma}[Network Decomposition Lemma]
\label{lem:netdecomp}
Consider a $(\mathcal{G},\mathcal{S}_1,\mathcal{T}_1,\mathcal{S}_2, \mathcal{T}_2)$ two-unicast-$Z$ network  with GNS cut set  $\mathcal{C}_{GNS}=\{e_1,e_2\}$,  $\operatorname{Ord}(e_1) < \operatorname{Ord}(e_2)$. Let $\mathcal{G}_1$ and $\mathcal{G}_2$ be the graphs of the left-side network and right-side  network, respectively, with respect to $\mathcal{C}_{GNS}$. Then 
\begin{enumerate}[label=(\alph*)]
\item $\mathbf{M}=\mathbf{M_1}\mathbf{\Lambda}\mathbf{M_2}$, where explicit expressions for matrices $\mathbf{M},\mathbf{M}_{1},\mathbf{M}_{2},\mathbf{\Lambda}$ are shown at the top of this page.
	\item In graph $\mathcal{G}_{1}$, the network transfer matrix from the source $\{s_{1},s_2\}$ to edges $\{e_1,e_2\}$ is $\mathbf{M}_{1} \mathbf{\Lambda}.$ In graph $\mathcal{G}_{2}$, the network transfer matrix from $\{e_{1},e_{2}\}$ to $\{t_{1},t_{2}\}$ is $\mathbf{\Lambda}\mathbf{M}_{2}$ 
	\item Let $\mathbf{F}_{1} \subset \bar{\mathbf{F}}$ be the set of variables in  $ \sum\limits_{p: s_i \rightarrow e_1} \hspace{-3mm} w(p) )$, $i \in \{1,2\}$, and let $\mathbf{F}_2 \subset \bar{\mathbf{F}}$ be the set of variables in $\mathbf{M}_{2}$, we have $\mathbf{F}_{1} \cap \mathbf{F}_{2} = \phi$. 
	
	\item  Let $\mathbf{F}_{1} \subset \bar{\mathbf{F}}$ be the set of variables in  $ \mathbf{M}_1$, and let $\mathbf{F}_2 \subset \bar{\mathbf{F}}$ be the set of variables in $\sum\limits_{p: e_2 \rightarrow t_i} \hspace{-3mm} w(p)$, $i \in \{1,2\}$, we have $\mathbf{F}_{1} \cap \mathbf{F}_{2} = \phi$. 
	
	\item  Let $\mathbf{F}_{1} \subset \bar{\mathbf{F}}$ be the set of variables in  $ \mathbf{M}_1$, and let $\mathbf{F}_2 \subset \bar{\mathbf{F}}$ be the set of variables in $\mathbf{M}_2$. If there are no $e_1 \rightarrow e_2$ paths in $\mathcal{G}$, then $\mathbf{F}_{1} \cap \mathbf{F}_{2} = \phi$. 
	
	
	\end{enumerate}
\end{lemma}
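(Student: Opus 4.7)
The plan is to handle the matrix identity (a) first by a careful path-enumeration argument, observe that (b) follows essentially for free from the construction of $\mathcal{G}_1$ and $\mathcal{G}_2$, and then treat the three variable-disjointness statements (c)--(e) via a uniform topological-order argument, with (e) requiring one additional step that crucially invokes the no-$e_1\!\to\! e_2$-path hypothesis.

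For part (a), I would verify $\mathbf{M}=\mathbf{M}_1\mathbf{\Lambda}\mathbf{M}_2$ entry by entry. The key first step is to interpret $\mathbf{M}_1\mathbf{\Lambda}$: a direct computation shows its $(i,2)$ entry equals $\sum_{p:s_i\to e_1} w(p)\cdot\sum_{p:e_1\to e_2} w(p)+\sum_{p:s_i\to e_2\backslash e_1} w(p)$, which by splitting $s_i\to e_2$ paths according to whether they pass through $e_1$ equals $\sum_{p:s_i\to e_2} w(p)$. Thus $\mathbf{M}_1\mathbf{\Lambda}$ is the $\{s_1,s_2\}\to\{e_1,e_2\}$ transfer matrix of $\mathcal{G}$. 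Multiplying on the right by $\mathbf{M}_2$, the $(i,j)$-entry decomposes the corresponding entry of $\mathbf{M}$ into three disjoint classes of paths: those that hit $e_1$ and avoid $e_2$, those that hit $e_2$ and avoid $e_1$, and those that hit both (necessarily $e_1$ first by $\Ord(e_1)<\Ord(e_2)$). Using that every $s_1\!\to\! t_1$, $s_2\!\to\! t_1$, and $s_2\!\to\! t_2$ path must cross $\mathcal{C}_{GNS}$ by the GNS property (while the $(1,2)$ entry of $\mathbf{M}$ restricts to such crossing paths by definition), these three classes exhaust all contributing paths. The technical care, flagged in the introduction, is to avoid double-counting in the third class, which is exactly what the restriction $\backslash e_2$ in the first row of $\mathbf{M}_2$ is designed to prevent. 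Part (b) is then a definitional check: by construction of $\mathcal{E}_1$, any $s_i\!\to\! e_k$ path in $\mathcal{G}$ lies entirely inside $\mathcal{G}_1$ and vice versa, so the transfer matrix from $\{s_1,s_2\}$ to $\{e_1,e_2\}$ in $\mathcal{G}_1$ coincides with that in $\mathcal{G}$, which we have just identified as $\mathbf{M}_1\mathbf{\Lambda}$; the claim for $\mathcal{G}_2$ is symmetric.

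For parts (c) and (d), my approach exploits the observation that any local coding coefficient $\beta_{e_a,e_b}\in\bar{\mathbf{F}}$ forces $\Ord(e_a)<\Ord(e_b)$, and that the two sides of the alleged intersection impose opposing bounds on these orders. In part (c), a variable in $\sum_{p:s_i\to e_1} w(p)$ sits on a path ending at $e_1$, so $\Ord(e_b)\le\Ord(e_1)$; a variable in $\mathbf{M}_2$ sits on a path starting at $e_1$ or $e_2$, so $\Ord(e_a)\ge\Ord(e_1)$; chaining gives $\Ord(e_a)<\Ord(e_b)\le\Ord(e_1)\le\Ord(e_a)$, a contradiction. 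Part (d) is the mirror image, using that a variable $\beta_{e_a,e_b}$ in $\mathbf{M}_1$ has $e_a$ as a non-terminal edge of some $s_i\!\to\!\mathcal{C}_{GNS}$ path, hence $\Ord(e_a)<\Ord(e_2)$, which clashes with $\Ord(e_a)\ge\Ord(e_2)$ coming from any variable in $\sum_{p:e_2\to t_i} w(p)$.

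Part (e) is the main obstacle. I would case-split on which summand inside $\mathbf{M}_1$ (paths to $e_1$, or paths to $e_2$ avoiding $e_1$) and which summand inside $\mathbf{M}_2$ (paths from $e_1$ avoiding $e_2$, or paths from $e_2$) a shared variable $\beta_{e_a,e_b}$ could witness. Three of the four combinations are killed by the same topological-order pincer as in (c)--(d), or by acyclicity (if $e_b$ reaches $e_1$ while $e_a$ is reachable from $e_1$, we get a loop). The surviving case has $(e_a,e_b)$ consecutive on some $s_i\!\to\! e_2\backslash e_1$ path from $\mathbf{M}_1$ and on some $e_1\!\to\! t_j\backslash e_2$ path from $\mathbf{M}_2$; splicing the prefix $e_1\!\to\!\cdots\!\to\! e_a$ of the second path to the suffix $e_b\!\to\!\cdots\!\to\! e_2$ of the first yields an $e_1\!\to\! e_2$ walk, hence an $e_1\!\to\! e_2$ path by acyclicity, contradicting the hypothesis of (e). This splicing step is the only place where the no-$e_1\!\to\! e_2$-path assumption is actually used, and it is precisely what distinguishes (e) from (c) and (d).
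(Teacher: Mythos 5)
Your proposal is correct and follows essentially the same route as the paper: part (a) is verified by the same path-class enumeration that also underlies the paper's General Network Decomposition (Appendix A), parts (c) and (d) use an identical topological-order pincer on the two edges of any shared local coding coefficient, and part (e) is handled by the same case split, with the nontrivial case contradicted by splicing an $e_1\to e_a$ prefix with an $e_b\to e_2$ suffix to produce a forbidden $e_1\to e_2$ path. The only cosmetic difference is that the paper's pincer in (c)/(d) compares $\Ord(e_a)$ or $\Ord(e_b)$ to $\Ord(e_1)$ or $\Ord(e_2)$ on only one side, whereas you chain through both; this does not change the substance of the argument.
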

The proof of this lemma is in Appendix B. An illustration of the decomposed network and the resultant sub-networks with their specified network transfer matrices are shown in Fig. \ref{fig:decompo}.    
\begin{figure}[]
\vspace{5mm}
    \centering
    \includegraphics[scale=0.5]{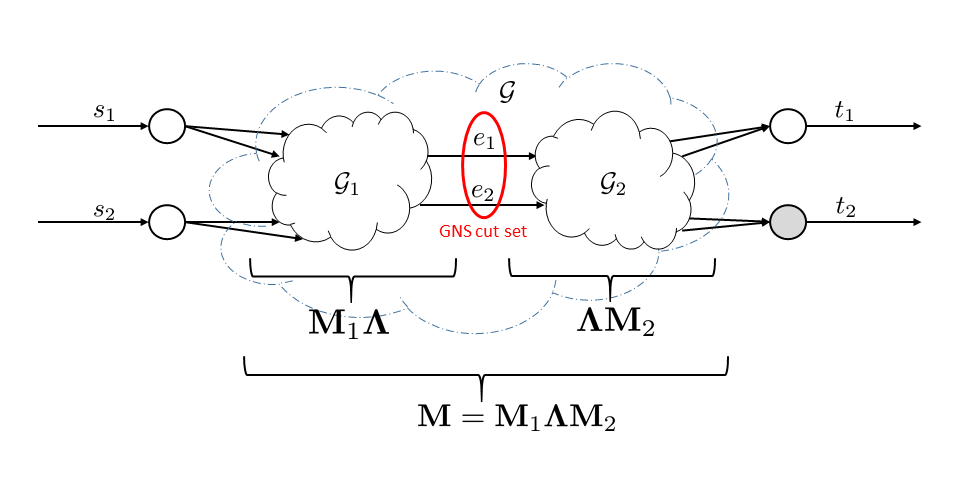}
    \caption{A two-unicast-$Z$ network decomposition into left-side and right-side networks and the corresponding network transfer matrices. }
    \label{fig:decompo}
\vspace{0pt}
\end{figure}
\subsection{Consequences of Network Decomposition}\label{subsec:cons}
In the following, we  describe an equivalent condition to the condition stated in Corollary  \ref{corolf} for achievability of rate $(1,1)$. This equivalent condition is stated in Lemma \ref{lem0} and makes advantage of the network decomposition lemma in order to get some favorable properties that will be discussed at the end of this section and will help in developing the rate $(1,1)$ feasibility proofs in Section \ref{sec:intermediate}.
\begin{lemma}\label{lem0}
The rate $(1,1)$ is  achievable in a $(\mathcal{G},\mathcal{S}_{1},\mathcal{T}_{1},\mathcal{S}_{2},\mathcal{T}_{2})$ two-unicast-$Z$ network with a minimum GNS cut set of size two using scalar linear coding if, and only if,  there does not exist  a polynomial $P$ such that 
\begin{equation}
P \sum\limits_{p: s_2 \rightarrow t_{1}}\hspace{-0mm}w(p) = \big(\det(\mathbf{M})\big)^{L}, \label{eq:ach_2}
\end{equation}
for all  $L \in \mathbb{Z}_{+},$ where $\mathbf{M}$ is as defined in (\ref{eq:Ms}).
\end{lemma}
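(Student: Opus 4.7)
The plan is to show the equivalence between condition \eqref{eq:ach_2} of Lemma \ref{lem0} and the condition of Corollary \ref{corolf}, exploiting the fact that in the scalar linear setting with $|\mathcal{S}_i|=|\mathcal{T}_i|=1$, the transfer polynomials $\mathbf{G}_{i,j}$ are precisely the scalar entries appearing in $\mathbf{M}$. Since Corollary \ref{corolf} already supplies the Nullstellensatz-based characterization of rate $(1,1)$ achievability, the task reduces to demonstrating that the polynomial divisibility condition ``$\mathbf{G}_{2,1} P = (\mathbf{G}_{1,1}\mathbf{G}_{2,2})^L$ for some $L$'' is equivalent to ``$\mathbf{G}_{2,1} P' = (\det(\mathbf{M}))^L$ for some $L$''. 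The bridge between the two will be the observation that $(\mathbf{G}_{1,1}\mathbf{G}_{2,2})$ and $\det(\mathbf{M})$ differ only by a multiple of $\mathbf{G}_{2,1}$.

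Concretely, I would first compute $\det(\mathbf{M})$ by direct cofactor expansion using the entries specified in \eqref{eq:Ms}:
\begin{equation*}
\det(\mathbf{M}) \;=\; \mathbf{G}_{1,1}\,\mathbf{G}_{2,2} \;-\; \mathbf{G}_{1,2}^{\mathcal{C}_{GNS}}\,\mathbf{G}_{2,1},
\end{equation*}
where I write $\mathbf{G}_{1,2}^{\mathcal{C}_{GNS}} := \sum_{p:s_1 \to t_2 \text{ via } \mathcal{C}_{GNS}} w(p)\in\mathbb{K}[\mathbf{F}]$ for the $(1,2)$-entry of $\mathbf{M}$. Here I would explicitly point out that the other three entries of $\mathbf{M}$ do \emph{not} require a ``via $\mathcal{C}_{GNS}$'' restriction because $\mathcal{C}_{GNS}$ is a GNS cut, so every $s_1\!\to\!t_1$, $s_2\!\to\!t_1$, and $s_2\!\to\!t_2$ path automatically traverses $\mathcal{C}_{GNS}$. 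This identity immediately yields the congruence $\mathbf{G}_{1,1}\mathbf{G}_{2,2}\equiv \det(\mathbf{M}) \pmod{\mathbf{G}_{2,1}}$ in the polynomial ring $\mathbb{K}[\mathbf{F}]$, and raising to the $L$-th power preserves the congruence.

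The final step is the standard binomial/substitution argument. Expanding $(\mathbf{G}_{1,1}\mathbf{G}_{2,2})^L = (\det(\mathbf{M}) + \mathbf{G}_{1,2}^{\mathcal{C}_{GNS}}\mathbf{G}_{2,1})^L$ term-by-term, every summand except $(\det(\mathbf{M}))^L$ carries a factor of $\mathbf{G}_{2,1}$, so there exists $R \in \mathbb{K}[\mathbf{F}]$ with $(\mathbf{G}_{1,1}\mathbf{G}_{2,2})^L = (\det(\mathbf{M}))^L + R\,\mathbf{G}_{2,1}$. Thus a polynomial $P$ solves $\mathbf{G}_{2,1}P = (\mathbf{G}_{1,1}\mathbf{G}_{2,2})^L$ if and only if $P - R$ solves $\mathbf{G}_{2,1}(P-R) = (\det(\mathbf{M}))^L$, and the reverse substitution $(\det(\mathbf{M}))^L = (\mathbf{G}_{1,1}\mathbf{G}_{2,2})^L - R\mathbf{G}_{2,1}$ gives the other direction. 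Applying the contrapositive together with Corollary \ref{corolf} then yields Lemma \ref{lem0}. There is no real technical obstacle here; the only thing to be careful about is the identification of the scalar $\mathbf{G}_{i,j}$ with path-sum polynomials and the explicit determinant computation, both of which are straightforward.
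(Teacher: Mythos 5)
Your proposal is correct and takes essentially the same route as the paper: the paper's proof also reduces to Corollary~\ref{corolf} by writing $\big(\det(\mathbf{M})\big)^{L}= (\mathbf{G}_{1,1}\mathbf{G}_{2,2})^L+ P_0\mathbf{G}_{2,1}$, i.e.\ the same congruence $(\det\mathbf{M})^L \equiv (\mathbf{G}_{1,1}\mathbf{G}_{2,2})^L \pmod{\mathbf{G}_{2,1}}$ you derive from the cofactor expansion $\det(\mathbf{M}) = \mathbf{G}_{1,1}\mathbf{G}_{2,2} - \mathbf{G}_{1,2}^{\mathcal{C}_{GNS}}\mathbf{G}_{2,1}$. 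The only difference is that the paper leaves the binomial expansion and the identification of the $\mathbf{M}$-entries with the $\mathbf{G}_{i,j}$ implicit, whereas you spell them out (and also make explicit the useful observation that three of the four entries of $\mathbf{M}$ need no ``via $\mathcal{C}_{GNS}$'' qualifier precisely because $\mathcal{C}_{GNS}$ is a GNS cut).
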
  
\begin{proof}

The proof of Lemma \ref{lem0} follows from Corollary \ref{corolf} by noting that $\sum\limits_{p: s_{1} \rightarrow t_{1}} \hspace{-0mm}w(p)=\mathbf{G}_{1,1}$, $\sum\limits_{p: s_{2} \rightarrow t_{2}} \hspace{-0mm}w(p)=\mathbf{G}_{2,2}$ and $\sum\limits_{p: s_{2} \rightarrow t_{1}} \hspace{-0mm}w(p)=\mathbf{G}_{2,1}$, in addition to the fact that $\big(\det(\mathbf{M})\big)^{L}$ can be written as $\big(\det(\mathbf{M})\big)^{L}= (\mathbf{G}_{1,1}\mathbf{G}_{2,2})^L+ P_0\mathbf{G}_{2,1}$ for some polynomial $P_0$.
\end{proof}
\begin{corollary}\label{cor:Mn0} Let $(\mathcal{G},\mathcal{S}_{1},\mathcal{T}_{1},\mathcal{S}_{2},\mathcal{T}_{2})$ be a two-unicast-$Z$ network with a minimum GNS cut set of size two, $ \det( \mathbf{M}) \neq 0$ where $\mathbf{M}$ is as defined in (\ref{eq:Ms}). 
\end{corollary}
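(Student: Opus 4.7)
The plan is to exploit the factorization provided by the Network Decomposition Lemma (Lemma \ref{lem:netdecomp}(a)). Writing $\mathbf{M} = \mathbf{M}_1 \mathbf{\Lambda} \mathbf{M}_2$ and observing that $\mathbf{\Lambda}$ is upper triangular with unit diagonal, one has $\det(\mathbf{\Lambda}) = 1$, and therefore $\det(\mathbf{M}) = \det(\mathbf{M}_1)\det(\mathbf{M}_2)$. The problem thus reduces to showing that both $\det(\mathbf{M}_1)$ and $\det(\mathbf{M}_2)$ are non-zero polynomials in $\mathbb{K}[\mathbf{F}]$.

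To handle $\det(\mathbf{M}_1)$, I will invoke Lemma \ref{lem:netdecomp}(b), which identifies $\mathbf{M}_1 \mathbf{\Lambda}$ as the network transfer matrix from $\{s_1, s_2\}$ to $\{e_1, e_2\}$ in the left-side network $\mathcal{G}_1$. By the algebraic max-flow min-cut theorem from \cite{Koetter_Medard}, the rank of this $2 \times 2$ polynomial matrix (over the rational function field) equals the edge-disjoint-paths min-cut from $\{s_1, s_2\}$ to $\{e_1, e_2\}$ in $\mathcal{G}_1$. Since $\det(\mathbf{M}_1 \mathbf{\Lambda}) = \det(\mathbf{M}_1)$, it suffices to show this min-cut equals $2$.

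The min-cut is trivially at most $2$ because $\{e_1, e_2\}$ itself is a cut. For the lower bound, suppose for contradiction that some single edge $e$ (or the empty set) disconnects $\{s_1, s_2\}$ from $\{e_1, e_2\}$ in $\mathcal{G}_1$. By the construction of $\mathcal{G}_1$, every edge there lies on some $\mathcal{S}_1 \cup \mathcal{S}_2 \to \mathcal{C}_{GNS}$ path in $\mathcal{G}$, so every such path in $\mathcal{G}$ must traverse $e$. Now, any $s_1 \to t_1$, $s_2 \to t_1$, or $s_2 \to t_2$ path in $\mathcal{G}$ passes through $\mathcal{C}_{GNS} = \{e_1, e_2\}$ (because $\mathcal{C}_{GNS}$ is a GNS cut); splitting such a path at its first visit to $\mathcal{C}_{GNS}$, the prefix is a $\mathcal{S}_1 \cup \mathcal{S}_2 \to \mathcal{C}_{GNS}$ path in $\mathcal{G}_1$, which must contain $e$. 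Hence $\{e\}$ severs every $s_1 \to t_1$, $s_2 \to t_1$, and $s_2 \to t_2$ path in $\mathcal{G}$, making it a GNS cut of size $1$, contradicting the hypothesis that the minimum GNS cut has size $2$. The argument for $\det(\mathbf{M}_2) \neq 0$ is symmetric: Lemma \ref{lem:netdecomp}(b) identifies $\mathbf{\Lambda} \mathbf{M}_2$ as the transfer matrix from $\{e_1, e_2\}$ to $\{t_1, t_2\}$ in $\mathcal{G}_2$, and a single bottleneck edge there similarly produces a GNS cut of size $1$ in $\mathcal{G}$ by splitting the relevant source-destination paths at their last visit to $\mathcal{C}_{GNS}$.

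The main subtlety is verifying that a single bottleneck edge in $\mathcal{G}_1$ or $\mathcal{G}_2$ genuinely severs the three path classes $\mathcal{S}_1 \to \mathcal{T}_1$, $\mathcal{S}_2 \to \mathcal{T}_1$, $\mathcal{S}_2 \to \mathcal{T}_2$ in the ambient graph $\mathcal{G}$. This works precisely because $\mathcal{S}_1 \to \mathcal{T}_2$ paths are exempt from the GNS requirement and can safely bypass $\mathcal{C}_{GNS}$, while the three relevant path classes are exactly those that the GNS assumption forces through $\mathcal{C}_{GNS}$, and hence through the subgraph under consideration.
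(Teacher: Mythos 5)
Your proof is correct and follows essentially the same route as the paper's: factor $\det(\mathbf{M})=\det(\mathbf{M}_1)\det(\mathbf{M}_2)$ via Lemma \ref{lem:netdecomp}(a), convert singularity of either factor into a single-edge cut in the corresponding side-network via the algebraic max-flow min-cut theorem, and observe that such an edge would be a GNS cut of size one in $\mathcal{G}$, contradicting minimality. You merely make explicit two steps the paper leaves implicit — identifying $\det(\mathbf{M}_1)$ with $\det(\mathbf{M}_1\mathbf{\Lambda})$ so Lemma \ref{lem:netdecomp}(b) applies, and the path-splitting argument showing a bottleneck in $\mathcal{G}_1$ (or $\mathcal{G}_2$) actually severs the three GNS path classes in $\mathcal{G}$ — which is sound and tightens the exposition.
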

\begin{proof}
Let the minimum GNS cut set of size two in the network be $\{e_1,e_2\}$. For the sake of contradiction, suppose  that $\det( \mathbf{M})=0$. Thus, by part (a) of Lemma \ref{lem:netdecomp}, $\det( \mathbf{M}_i)=0$, $i =1 \text{ or } 2$. 
 Therefore, by max-flow min-cut theorem, there is a  single edge cut set  in $(\mathcal{G}_1, \mathcal{S}_1, e_1, \mathcal{S}_2, e_2)$ or $(\mathcal{G}_2, e_1, \mathcal{T}_1, e_2, \mathcal{T}_2)$.  However, a single edge cut set  in $(\mathcal{G}_1, \mathcal{S}_1, e_1, \mathcal{S}_2, e_2)$ or $(\mathcal{G}_2, e_1, \mathcal{T}_1, e_2, \mathcal{T}_2)$ is a  single edge GNS cut set  in $(\mathcal{G},\mathcal{S}_1, \mathcal{T}_1, \mathcal{S}_2,\mathcal{T}_2)$, a contradiction to the fact that  the network $(\mathcal{G},\mathcal{S}_1, \mathcal{T}_1, \mathcal{S}_2,\mathcal{T}_2)$ has a minimum GNS cut set of size two. Hence, $ \det( \mathbf{M}) \neq 0$.
\end{proof}
\subsection{Notations and Observations}
In the following, we introduce some notations which will be used for the rest of the paper. In addition, based on these  notations, we give some observations on the advantage of the achievability condition derived in Lemma \ref{lem0}.

Recalling that $\mathcal{C}_{GNS}=\{e_1,e_2\}$ is a GNS cut set in our two-unicast-$Z$ network, for $i \in \{1,2\}$, let $\mathbf{u}_{i}$ denote an $\In(e_i) \times 1$ vector of indeterminate variables representing the local coding coefficients from the edges incoming into $e_i$ to $e_i$. Specifically, denoting $\In(e_i)=\{e_{i,1},e_{i,2},\ldots, e_{i,|\In(e_i)|}\}$, the vector $\mathbf{u}_{i}$ is equal to $(\beta_{e_{i,1},e_{i}}, \beta_{e_{i,2},e_{i}}, \ldots, \beta_{e_{i,|\In(e_{i}|},e_i})$. We now aim to express the polynomials in $\mathbf{M},\mathbf{M}_{1},\mathbf{\Lambda}$ as polynomials in $\mathbf{u}_{1},\mathbf{u}_{2}$. We write
\begin{align}\label{eq:M1}
\mathbf{M}_1= \left( \begin{array}{ccc}
	\mathbf{a}_1\mathbf{u}_{1} & \mathbf{a}_2\mathbf{u}_2\\
	\mathbf{b}_1\mathbf{u}_{1} & \mathbf{b}_2\mathbf{u}_2 \end{array} \right), ~~
\mathbf{\Lambda}=\left(\begin{array}{ccc} 1 & \mathbf{\lambda}_{12}  \mathbf{u}_2 \\ 0&1\\ \end{array}\right),
\end{align}
where $\mathbf{a}_i$, with $i \in \{1,2\}$, is the  $1 \times \In(e_i) $ vector of transfer polynomials from $s_{1}$ to $\In(e_{i})$ containing paths that do not go through $e_j, j\neq i$. Specifically, $\mathbf{a}_{i}= (a_{i,1}, a_{i,2}, \cdots, a_{i,|\In(e_i)|})$, where $a_{i,j} = \sum\limits_{p:s_{1} \rightarrow e_{i,j} \backslash \{e_{k}:k\neq i\}} \hspace{-9mm} w(p)$, $j \in \{1, \cdots, |\In(e_i)|\}$. The row vectors $\mathbf{b_1}$ and  $\mathbf{b_2}$  are defined similarly but with respect to $s_{2}$. Let  $\mathbf{\lambda}_{12}= (\lambda_{12,1}, \lambda_{12,2}, \cdots, \lambda_{12,|\In(e_2)|})$ be a $1 \times  \In(e_2)$ vector where  $ \lambda_{12,|\In(e_2)|} = \sum\limits_{p:e_{1} \rightarrow e_{2,j} } \hspace{-0mm} w(p)$, $j \in \{1, \cdots, |\In(e_2)|\}$. Finally, we write
\begin{align}\label{eq:M2}
\mathbf{M}_2= \left( \begin{array}{ccc}
	\mu_{11} & \mu_{12}\\
	\mu_{21} & \mu_{22} \end{array} \right),
\end{align}
where  $\mu_{ij}=\sum\limits_{p:e_{i} \rightarrow t_{j} \backslash \{e_{k}: k\neq i\}} \hspace{-9mm} w(p)$, $i,j \in \{1,2\}$.

 Now, recalling that $\mathbf{M}=\mathbf{M}_1\mathbf{\Lambda}\mathbf{M}_2$ (Lemma \ref{lem:netdecomp}) where $\det(\mathbf{\Lambda})=1$, we have 
 \begin{align}
 \det(\mathbf{M})&= \det(\mathbf{M}_1)\det(\mathbf{M}_2) \notag\\
 &=  \big( \mathbf{a}_1\mathbf{u}_1 \hspace{1mm}\mathbf{b}_2\mathbf{u}_2 - \mathbf{b}_1\mathbf{u}_1\hspace{1mm} \mathbf{a}_2\mathbf{u}_2\big)\big(\mu_{11}\mu_{22}-\mu_{12}\mu_{21}\big).\notag
 \end{align} 
 Moreover, $\mathbf{M}=\mathbf{M}_1\mathbf{\Lambda}\mathbf{M}_2$ also implies that
 \begin{align}\label{eq:s2t1}
 \sum\limits_{p: s_2 \rightarrow t_{1}}\hspace{-0mm}w(p)&= \sum\limits_{p: s_2 \rightarrow e_{1} }\hspace{-0mm}w(p) \big(\sum\limits_{p: e_1 \rightarrow t_{1}\backslash e_2}\hspace{-0mm}w(p)+\sum\limits_{p: e_1 \rightarrow e_{2}}\hspace{-0mm}w(p)\sum\limits_{p: e_2 \rightarrow t_{1}}\hspace{-0mm}w(p)\big)+\sum\limits_{p: s_2 \rightarrow e_{2}\backslash e_1}\hspace{-0mm}w(p)\sum\limits_{p: e_2 \rightarrow t_{1}}\hspace{-0mm}w(p)\notag\\
 &=\mathbf{b}_1\mathbf{u}_1 (\mu_{11}+\mathbf{\lambda}_{12}\mathbf{u}_{2} \mu_{21})+ \mathbf{b}_2\mathbf{u}_2 \mu_{21}.
 \end{align}
 
 Therefore, (\ref{eq:ach_2}) can be written as  
\begin{equation}\label{eq:ach_3}
\big(\mathbf{b}_1\mathbf{u}_1 (\mu_{11}+\mathbf{\lambda}_{12}\mathbf{u}_{2} \mu_{21})+ \mathbf{b}_2\mathbf{u}_2 \mu_{21}\big){P}= \big(\mu_{11}\mu_{22}-\mu_{12}\mu_{21}\big)^L\big( \mathbf{a}_1\mathbf{u}_1 \hspace{1mm}\mathbf{b}_2\mathbf{u}_2 - \mathbf{b}_1\mathbf{u}_1\hspace{1mm} \mathbf{a}_2\mathbf{u}_2\big)^L.
\end{equation}

The main utility of Lemma \ref{lem0} is that it ``homogenizes'' the right hand side of Corollary \ref{lem:ach_1} with respect to variables $\mathbf{u}_{i},i=1,2$. To see this more clearly, we state some basic definitions related to the degree of multi-variate polynomials and orderings on monomials.
\subsubsection*{\textbf{Background on orderings on monomials}}

We introduce a brief background on orderings on monomials \cite{Cox_Little}. 
\begin{definition}[Multi-degree of a monomial]
For any monomial $m= x_1^{\alpha_1} x_2^{\alpha_2} \cdots  x_n^{\alpha_n}$  in the polynomial ring $\mathbb{K}[x_1, \cdots, x_n]$, the multi-degree of this monomial is  $\underset{\mathbb{K}[x_1, \cdots, x_n]}{\operatorname{multideg}}(m)=(\alpha_1, \alpha_2, \cdots, \alpha_n) \in \mathbb{Z}^n_{\geq 0}$.
\end{definition}
\begin{definition}[Sum-degree of a monomial]
For any monomial $m= x_1^{\alpha_1} x_2^{\alpha_2} \cdots  x_n^{\alpha_n}$ in the polynomial ring $\mathbb{K}[x_1, \cdots, x_n]$ with $\underset{\mathbb{K}[x_1, \cdots, x_n]}{\operatorname{multideg}}(m)=(\alpha_1, \alpha_2, \cdots, \alpha_n)$, the sum-degree of this monomial is $\underset{\mathbb{K}[x_1, \cdots, x_n]}{\operatorname{sumdeg}}(m)=\sum\limits_{i=1}^n \alpha_i $.
\end{definition}
\begin{definition}[Monomial ordering]
A monomial ordering in $\mathbb{K}[x_1,\cdots, x_n]$ is any relation $>$ on the set of monomials $\mathcal{M}=\{x_1^{\alpha_1} x_2^{\alpha_2} \cdots x_n^{\alpha_n}:(\alpha_1, \alpha_2, \cdots, \alpha_n) \in \mathbb{Z}^n_{\geq 0}\}$ such that:
\begin{enumerate}[label=(\alph*)]
    \item $>$ is a total ordering on $\mathcal{M}$.
    \item $>$ respects multiplication. That is, for any $m_1,m_2,m_3 \in \mathcal{M}$, if $m_1 > m_2$, then $m_1m_3 > m_2m_3$.
    \item $>$ is a well ordering. That is, every nonempty subset of $\mathcal{M}$ has a smallest element under $>$.
\end{enumerate}
\end{definition}
\begin{definition}[Multi-degree of a polynomial]
Let $p=\sum_i^N a_i m_i$ be a nonzero polynomial in the polynomial ring $\mathbb{K}[x_1, \cdots, x_n]$  where, for all $i \in \{1, \cdots, N\}$, $a_i \in \mathbb{K}$ and $m_i$ is a monomial in $\mathbb{K}[x_1, \cdots, x_n]$ and let $>$ be a monomial order. Then, the multi-degree of $p$ is $$\underset{\mathbb{K}[x_1, \cdots, x_n]}{\operatorname{multideg}}(p)=\underset{\mathbb{K}[x_1, \cdots, x_n]}{\operatorname{multideg}}\big(\underset{{i\in \{1,\cdots,N\}}}{\max} (m_i)\big),$$ where the maximum is taken with respect to $>$. 
\end{definition}

\begin{definition}[Sum-degree of a polynomial]
Let $p=\sum_i^N a_i m_i$ be a nonzero polynomial in the polynomial ring $\mathbb{K}[x_1, \cdots, x_n]$ where, for all $i \in \{1, \cdots, N\}$, $a_i \in \mathbb{K}$ and $m_i$ is a monomial in $\mathbb{K}[x_1, \cdots, x_n]$ and let $>$ be a monomial order. Then, the sum-degree of $p$ is $$\underset{\mathbb{K}[x_1, \cdots, x_n]}{\operatorname{sumdeg}}(p)=\underset{\mathbb{K}[x_1, \cdots, x_n]}{\operatorname{sumdeg}}\big( \underset{{i\in \{1, \cdots, N\}}}{\max}(m_i)\big),$$ where the maximum is taken with respect to $>$.
\end{definition}

\begin{definition}[Homogeneous polynomials]
A polynomial $p$ in the polynomial ring $\mathbb{K}[x_1, \cdots, x_n]$ is  homogeneous of sum-degree $s$ if every monomial in $p$ has sum-degree $s$. 
\end{definition}

\begin{lemma}\label{rmk:homog}
Let $p,g,h$ be non-zero polynomials in the polynomial ring $\mathbb{K}[x_1, \cdots, x_n]$ such that $p=gh$. If $p$ is homogeneous, then $g$ and $h$ are also homogeneous.
\end{lemma}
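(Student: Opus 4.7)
The plan is to decompose each of $g$ and $h$ into its homogeneous components with respect to sum-degree and then examine the extreme components of the product. Concretely, write
\begin{equation*}
g = g_{i_1} + g_{i_2} + \cdots + g_{i_k}, \qquad h = h_{j_1} + h_{j_2} + \cdots + h_{j_l},
\end{equation*}
where each $g_{i_a}$ is a nonzero homogeneous polynomial of sum-degree $i_a$, each $h_{j_b}$ is a nonzero homogeneous polynomial of sum-degree $j_b$, and the indices are strictly increasing: $i_1 < i_2 < \cdots < i_k$ and $j_1 < j_2 < \cdots < j_l$. The goal is to show $k = l = 1$ under the hypothesis that $p = gh$ is homogeneous.

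Next, I would expand $p = gh = \sum_{a,b} g_{i_a} h_{j_b}$ and collect by sum-degree. The crucial observation is that the summand $g_{i_a} h_{j_b}$ is homogeneous of sum-degree $i_a + j_b$ (since a product of a sum-degree-$\alpha$ monomial and a sum-degree-$\beta$ monomial is a monomial of sum-degree $\alpha+\beta$). The smallest sum-degree appearing among these summands is $i_1 + j_1$, achieved uniquely by the pair $(a,b) = (1,1)$, and the largest is $i_k + j_l$, achieved uniquely by $(a,b) = (k,l)$. Therefore, collecting terms by sum-degree, the sum-degree-$(i_1+j_1)$ homogeneous component of $p$ is exactly $g_{i_1} h_{j_1}$, and the sum-degree-$(i_k+j_l)$ component is exactly $g_{i_k} h_{j_l}$, with no contributions from any other cross term.

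The key sub-step is to argue $g_{i_1} h_{j_1} \neq 0$ and $g_{i_k} h_{j_l} \neq 0$. This follows because $\mathbb{K}[x_1,\ldots,x_n]$ is an integral domain (being a polynomial ring over a field), and each of $g_{i_1}, h_{j_1}, g_{i_k}, h_{j_l}$ is nonzero by construction. Hence $p$ really does contain monomials of sum-degree $i_1 + j_1$ and monomials of sum-degree $i_k + j_l$.

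Finally, since $p$ is homogeneous, all of its monomials share a single sum-degree, which forces $i_1 + j_1 = i_k + j_l$. Combined with $i_1 \leq i_k$ and $j_1 \leq j_l$, this yields $i_1 = i_k$ and $j_1 = j_l$, i.e.\ $k = l = 1$, so $g = g_{i_1}$ and $h = h_{j_1}$ are each homogeneous, as claimed. I do not expect serious obstacles here; the only point requiring care is the non-vanishing of the extreme cross terms, which is handled by the integral domain property.
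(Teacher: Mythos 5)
Your argument is correct: decomposing $g$ and $h$ into homogeneous components by sum-degree, isolating the unique lowest-degree cross term $g_{i_1}h_{j_1}$ and the unique highest-degree cross term $g_{i_k}h_{j_l}$, invoking that $\mathbb{K}[x_1,\ldots,x_n]$ is an integral domain to keep those extreme products nonzero, and then forcing $i_1+j_1=i_k+j_l$ from homogeneity of $p$ is exactly the standard proof. The paper itself does not supply a proof and instead defers to Cox, Little, and O'Shea; your write-up is precisely the argument that reference gives, so there is nothing to reconcile.
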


The proof of this lemma can be found in \cite[Chapter~7]{Cox_Little}.



\vspace{5pt}
\subsubsection*{\textbf{Observations}}
 For any field $\mathbb{K}$ and any set of indeterminates $x_1, x_2, \ldots, x_n,$ we denote the field of fractions containing the polynomial ring $\mathbb{K}[x_1, x_2,\ldots, x_n]$ as $\mathbb{K}(x_1, x_2, \ldots, x_n)$. Let us denote by $\overline{\mathbb{K}},$ the field of fractions $\mathbb{K}(\mathbf{F}-\{\mathbf{u}_{1}, \mathbf{u}_2\}).$ For $i \in \{1,2\}$, we will also denote by $\overline{\mathbb{K}}^{(i)}$, the polynomial ring $\overline{\mathbb{K}}(\mathbf{u}_{j})[\mathbf{u}_{i}]$ where ${j} \in \{1,2\}-\{i\}$. In $\overline{\mathbb{K}}^{(i)}$, the elements of $\mathbf{u}_i$ define the variables and the coefficients are rational functions of $\mathbf{u}_j$.
 
 Notice that for a network coding coefficient polynomial $P$,  the quantity ${\operatorname{sumdeg}}_{\overline{\mathbb{K}}^{(i)}}(P)$ represents the sum-degree of polynomial $P$ with respect to the indeterminates in $\mathbf{u}_{i}$ alone. Based on this notation, we can make the following important observation: For every monomial $m$ in $\det(\mathbf{M})^L$, we have
$  {\operatorname{sumdeg}}_{{\overline{\mathbb{K}}^{(i)}}}(m)= L, i=1,2.$ That is, the polynomial $\det(\mathbf{M})^L$ is homogeneous  of sum-degree $L$ in $\overline{\mathbb{K}}^{(1)}$ and $\overline{\mathbb{K}}^{(2)}$.
In effect, the above equation means that every monomial on the left hand side of (\ref{eq:ach_2}) of Lemma \ref{lem0} should also have a sum-degree of $L$ with respect to the variables in $\mathbf{u}_{i}$ alone, for each $i=1,2.$ Notice that, in contrast, the right hand side of  Corollary \ref{lem:ach_1} does not necessarily satisfy this property. Lemma \ref{lem0} will be used to show Theorem \ref{bathm}. In particular, we will show that if the graph in a two-unicast-$Z$ network satisfies certain properties, then it is not possible to find polynomial $P$ satisfying (\ref{eq:ach_2}). 
\section{Feasibility of rate $(1,1)$: The Alternate Proof}
\label{sec:intermediate}

In this section, we aim to provide an alternate proof to the  the results of \cite{wang-shroff},\cite{CCWang_twounicast}, \cite{rate11-tight}, and \cite{shenvi2009}, which establish the feasibility rate $(1,1)$ for two-unicast networks. In particular, we show that, for any two-unicast-$Z$ network, whenever the generalized network sharing cut set bound   is at least $2$, and the individual source destination pairs have their cut sets of size at least $1$, rate $(1,1)$ is achievable using scalar linear coding. The result is stated in the following theorem.
\begin{theorem}\label{mathm} 
	Consider a $(\mathcal{G},\{s_{1}\},\{t_{1}\},\{s_2\},\{t_2\})$ two-unicast-$Z$ network such that there is a path from  $s_{1}$ to $t_{1}$ and there is a path from $s_{2}$ to $t_{2}$. If $\mathcal{G}$ has a minimum GNS cut set  of size at least two,  then the rate $(1,1)$ is achievable in the network using scalar linear coding. 
\end{theorem}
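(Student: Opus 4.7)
The plan is to derive a contradiction with the ``only if'' direction of Lemma \ref{lem0}. First I would reduce to the case where the minimum GNS cut has size exactly two, since the case when the minimum is strictly larger can be handled by an analogous degree argument applied to any size-two edge set that serves as a cut in the decomposition. Suppose then, for contradiction, that rate $(1,1)$ is not scalar-linearly achievable; by Lemma \ref{lem0} there exist $P \in \mathbb{K}[\mathbf{F}]$ and $L \in \mathbb{Z}_+$ satisfying (\ref{eq:ach_3}), namely $P \cdot \sum_{p: s_2 \rightarrow t_1} w(p) = (\det \mathbf{M}_1 \det \mathbf{M}_2)^L$.

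The key lever is the homogeneity observation developed at the end of Section \ref{sec:netdecomp}: the right-hand side is homogeneous of sum-degree exactly $L$ both as a polynomial in $\mathbf{u}_1$ over $\overline{\mathbb{K}}^{(1)}$ and as a polynomial in $\mathbf{u}_2$ over $\overline{\mathbb{K}}^{(2)}$. By Lemma \ref{rmk:homog}, both $P$ and the factor $\sum_{p: s_2 \rightarrow t_1} w(p)$ must be homogeneous of fixed sum-degrees in each of these rings. Using (\ref{eq:s2t1}) I would write this factor as $A_1 + A_{12} + A_2$, where $A_1 = \mathbf{b}_1 \mathbf{u}_1 \mu_{11}$, $A_{12} = \mathbf{b}_1 \mathbf{u}_1 \mathbf{\lambda}_{12}\mathbf{u}_2 \mu_{21}$, and $A_2 = \mathbf{b}_2 \mathbf{u}_2 \mu_{21}$, with $(\mathbf{u}_1, \mathbf{u}_2)$ sum-degrees $(1,0)$, $(1,1)$, $(0,1)$ respectively. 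Homogeneity in $\mathbf{u}_1$ forces either $A_2 \equiv 0$ or $A_1 + A_{12} \equiv 0$; homogeneity in $\mathbf{u}_2$ forces either $A_1 \equiv 0$ or $A_{12} + A_2 \equiv 0$. Exploiting the variable-support disjointness in parts (c)--(e) of Lemma \ref{lem:netdecomp} together with the algebraic independence of the entries of $\mathbf{u}_1$ and $\mathbf{u}_2$, each such cancellation reduces to the identical vanishing of a concrete path-sum polynomial (for instance, $\mathbf{b}_1 \mathbf{u}_1 \equiv 0$, $\mu_{21} \equiv 0$, or $\mu_{11} \equiv 0$), leaving only a short list of surviving configurations of the $A_j$'s to consider.

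The final step is to rule out each surviving configuration by converting it into a structural statement about $\mathcal{G}$ and showing it contradicts the size-two GNS assumption. For instance, if only $A_2$ survives, then there is no $s_2 \rightarrow e_1$ path avoiding $e_2$ and no $e_1 \rightarrow t_1$ path avoiding $e_2$; combined with the definitions of the left- and right-side subgraphs this will imply that $\{e_2\}$ alone is a GNS cut, contradicting the hypothesis. Analogous contradictions must be produced in the remaining configurations, invoking the assumed existence of an $s_1 \rightarrow t_1$ path and an $s_2 \rightarrow t_2$ path to rule out degenerate solutions. The main obstacle, and the place where care is most needed, is handling the nontrivial cancellations $A_1 + A_{12} \equiv 0$ and $A_{12} + A_2 \equiv 0$: here one cannot simply invoke the vanishing of an individual $A_j$, so the argument must use the decomposition lemma's support-disjointness to split each cancellation into separated path-polynomial vanishings, each of which then produces a single-edge GNS cut and yields the desired contradiction.
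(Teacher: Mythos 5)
Your use of the homogeneity mechanism correctly reproduces the content of Theorem~\ref{bathm} and Corollary~\ref{cor:alt}: since $A_1, A_{12}, A_2$ have $(\mathbf{u}_1,\mathbf{u}_2)$-bidegrees $(1,0),(1,1),(0,1)$ respectively, no two of them can cancel against each other (their bidegrees differ), so homogeneity of $\sum_{p:s_2\to t_1}w(p)$ in both $\overline{\mathbb{K}}^{(1)}$ and $\overline{\mathbb{K}}^{(2)}$ forces at most one of $A_1,A_{12},A_2$ to be nonzero. That much is sound, and it reproduces the paper's ``two paths in different classes implies achievability'' step.

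The gap is in your final paragraph. You claim that every surviving single-class configuration ``will imply that $\{e_2\}$ alone is a GNS cut'' (or similarly for $e_1$), contradicting the minimum--GNS-cut hypothesis. This is false. Suppose only $A_2$ survives, i.e.\ every $s_2\to t_1$ path goes via $e_2$ and not $e_1$. That means $e_2$ cuts all $s_2\to t_1$ paths, but it says nothing about $s_1\to t_1$ paths or $s_2\to t_2$ paths, so $\{e_2\}$ need not be a GNS cut. Indeed there are perfectly legitimate networks with minimum GNS cut of size exactly two in which, for the chosen GNS pair $\{e_1,e_2\}$, all $s_2\to t_1$ paths lie in a single class; rate $(1,1)$ is achievable there, but one cannot establish the non-existence of $P$ purely from the homogeneity of the Nullstellensatz identity. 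The paper exits the algebraic route precisely at this point: it first passes to a \emph{critical} network (Definition~\ref{def:critical}, Remark~\ref{rmk:critical}), in which the single-class hypothesis holds simultaneously for every size-two GNS cut; then Corollary~\ref{cor:s2t1} upgrades this to ``there is a unique $s_2\to t_1$ path,'' and Lemma~\ref{lem:join} together with a case analysis on where $s_1\to t_1$ paths join and $s_2\to t_2$ paths leave that unique path either produces a genuine single-edge GNS cut (contradiction) or exhibits a routing scheme directly. Your proof sketch lacks both the criticality reduction and this graph-theoretic step, and the contradiction you invoke does not hold. Additionally, your opening reduction to minimum GNS cut of size exactly two (``an analogous degree argument applied to any size-two edge set'') is unsupported: if the minimum GNS cut has size $\geq 3$ there is no size-two GNS cut to decompose along, and the paper instead obtains a minimum of exactly two by iteratively deleting edges until the network becomes critical.
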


In order to prove this theorem, we first give an intermediate result.
\subsection{An intermediate result}
In this section, we provide an intermediate result, in Corollary \ref{cor:alt}, which establishes the feasibility of rate $(1,1)$ for a specific class of two-unicast-$Z$ networks before generalizing the feasibility of rate $(1,1)$ for any two-unicast-$Z$ network. First, we introduce the following theorem.

\begin{theorem}\label{bathm} 
	Consider a $(\mathcal{G},\{s_{1}\},\{t_{1}\},\{s_2\},\{t_2\})$ two-unicast-$Z$ network such that there is a path from  $s_{1}$ to  $t_{1}$, there is a path from $s_{2}$ to $t_{2}$, and $\mathcal{G}$ has a minimum GNS cut set $\{e_1, e_2\}$ of size two. If  there is an $s_2 \rightarrow t_{1} \textrm{ via } e_i$ path and an $s_2 \rightarrow t_{1} \backslash e_i$ path for some $i \in \{1,2\}$,  then the rate $(1,1)$ is achievable in the network using scalar linear coding. 
\end{theorem}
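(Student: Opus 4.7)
The plan is to invoke Lemma \ref{lem0} and derive a contradiction via a bi-homogeneity argument in the variables $\mathbf{u}_1,\mathbf{u}_2$ introduced in Section \ref{subsec:cons}. Suppose, for contradiction, that rate $(1,1)$ is not achievable; then there exist a positive integer $L$ and a polynomial $P$ satisfying $P\cdot \sum_{p:\,s_2\rightarrow t_1} w(p) = (\det\mathbf{M})^{L}$. From the factorization just before (\ref{eq:ach_3}), $\det\mathbf{M}=(\mathbf{a}_1\mathbf{u}_1\,\mathbf{b}_2\mathbf{u}_2-\mathbf{b}_1\mathbf{u}_1\,\mathbf{a}_2\mathbf{u}_2)(\mu_{11}\mu_{22}-\mu_{12}\mu_{21})$, I observe that $(\det\mathbf{M})^L$ is homogeneous of sum-degree $L$ in $\mathbf{u}_1$ when viewed in $\overline{\mathbb{K}}^{(1)}=\overline{\mathbb{K}}(\mathbf{u}_2)[\mathbf{u}_1]$, since the first factor is linear in $\mathbf{u}_1$ and the second factor is free of $\mathbf{u}_1$; the symmetric statement holds in $\overline{\mathbb{K}}^{(2)}$. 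Because $\det\mathbf{M}\neq 0$ by Corollary \ref{cor:Mn0}, Lemma \ref{rmk:homog} applied in $\overline{\mathbb{K}}^{(1)}$ then forces the factor $\sum_{p:\,s_2\to t_1} w(p)$ itself to be homogeneous in $\mathbf{u}_1$ (and, by the parallel argument, in $\mathbf{u}_2$).

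The bulk of the proof is to show that the theorem's hypothesis violates this homogeneity. Treating the case $i=1$ first, consider the expansion from (\ref{eq:s2t1}),
\[
\sum_{p:\,s_2\rightarrow t_1} w(p) \;=\; \mathbf{b}_1\mathbf{u}_1\,\mu_{11} \;+\; \mathbf{b}_1\mathbf{u}_1\,\mathbf{\lambda}_{12}\mathbf{u}_2\,\mu_{21} \;+\; \mathbf{b}_2\mathbf{u}_2\,\mu_{21},
\]
whose three summands have $\mathbf{u}_1$-sum-degrees $1,1,0$, respectively. By hypothesis there is an $s_2\to t_1\backslash e_1$ path; since $\{e_1,e_2\}$ is a GNS cut, this path must traverse $e_2$, and since $\Ord(e_1)<\Ord(e_2)$ the $s_2\to e_2$ prefix automatically avoids $e_1$, so the path contributes a nonzero monomial to $\mathbf{b}_2\mathbf{u}_2\,\mu_{21}$, yielding a nonzero $\mathbf{u}_1$-degree-$0$ component. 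Likewise, the assumed $s_2\to t_1$-via-$e_1$ path contributes a nonzero monomial to $\mathbf{b}_1\mathbf{u}_1\,(\mu_{11}+\mathbf{\lambda}_{12}\mathbf{u}_2\,\mu_{21})$, giving a nonzero $\mathbf{u}_1$-degree-$1$ component.

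Distinct paths in the DAG produce distinct squarefree monomials in $\mathbb{K}[\mathbf{F}]$, and because the variables of $\mathbf{u}_i$ enter only at the unique incoming transition into $e_i$, the three summands above lie in disjoint $(\mathbf{u}_1,\mathbf{u}_2)$-bidegree components and cannot cancel one another. Hence $\sum_{p:\,s_2\to t_1} w(p)$ contains monomials of both $\mathbf{u}_1$-degree $0$ and $\mathbf{u}_1$-degree $1$, contradicting the deduced homogeneity in $\mathbf{u}_1$. The case $i=2$ is proved analogously, swapping the roles of $(e_1,\mathbf{u}_1)$ and $(e_2,\mathbf{u}_2)$ and working in $\overline{\mathbb{K}}^{(2)}$: the $s_2\to t_1\backslash e_2$ path is forced through $e_1$ and contributes a nonzero $\mathbf{u}_2$-degree-$0$ monomial to $\mathbf{b}_1\mathbf{u}_1\,\mu_{11}$, while the $s_2\to t_1$-via-$e_2$ path contributes a nonzero $\mathbf{u}_2$-degree-$1$ component (to either $\mathbf{b}_2\mathbf{u}_2\,\mu_{21}$ or $\mathbf{b}_1\mathbf{u}_1\,\mathbf{\lambda}_{12}\mathbf{u}_2\,\mu_{21}$).

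The main obstacle I anticipate is certifying the non-cancellation among the three bidegree summands, which requires carefully tracking that the entries of $\mathbf{u}_1$ appear exactly once and exactly in the summands built from $\mathbf{b}_1\mathbf{u}_1$, and likewise for $\mathbf{u}_2$ in the summands built from $\mathbf{b}_2\mathbf{u}_2$ and $\mathbf{\lambda}_{12}\mathbf{u}_2$. This falls out of the DAG structure together with the definitions of $\mathbf{a}_i,\mathbf{b}_i,\mathbf{\lambda}_{12},\mu_{ij}$ from (\ref{eq:M1})--(\ref{eq:M2}), but the bidegree bookkeeping must be written out explicitly to make the application of Lemma \ref{rmk:homog} airtight.
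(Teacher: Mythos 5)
Your proposal is correct and follows essentially the same route as the paper: invoke Lemma \ref{lem0} to set up the contradiction, observe that $(\det\mathbf{M})^{L}$ is homogeneous in $\overline{\mathbb{K}}^{(i)}$ via the factorization $\det\mathbf{M}=\det(\mathbf{M}_1)\det(\mathbf{M}_2)$, use the two hypothesized path classes to exhibit monomials of $\mathbf{u}_i$-degree $0$ and $1$ in $\sum_{p:\,s_2\to t_1} w(p)$, and contradict Lemma \ref{rmk:homog}. The one point you flag as a potential obstacle—cancellation among the bidegree summands—is a non-issue: terms of distinct $\mathbf{u}_i$-degree live in different graded pieces of the polynomial ring and cannot cancel, so it suffices that each summand is a nonzero product in the integral domain $\mathbb{K}[\mathbf{F}]$, which follows immediately once the corresponding path segments are shown to exist.
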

\begin{proof}
 Consider a two-unicast-$Z$ network which satisfies the hypothesis of the theorem for some  $i\in\{1,2\}$, i.e., there are  $s_2 \rightarrow t_{1}\textrm{ via }  e_i$ and $s_2 \rightarrow t_{1} \backslash e_i$  paths in the network. For contradiction, suppose that the rate $(1,1)$ is not achievable in the two-unicast-$Z$ network using scalar linear coding.  Therefore, by Lemma \ref{lem0}, there exists  a polynomial $P$ such that $P \sum\limits_{p: s_2 \rightarrow t_{1}}\hspace{-0mm}w(p) = \big(\det(\mathbf{M})\big)^{L}$ for some  $L \in \mathbb{Z}_{+}$.  
 
 Now, we investigate the different values of $i$. If $i=1$, this means that there are  $s_2 \rightarrow t_{1}\textrm{ via }  e_1$ paths and $s_2 \rightarrow t_{1} \backslash e_1$  paths in the network. That is, all of $\sum\limits_{p: s_2 \rightarrow e_{1}}\hspace{-0mm}w(p)= \mathbf{b}_1\mathbf{u}_1$, $\sum\limits_{p: e_1 \rightarrow t_{1}}\hspace{-0mm}w(p)
 = \mu_{11}+\mathbf{\lambda}_{12}\mathbf{u}_{2} \mu_{21}$,   $\sum\limits_{p: s_2 \rightarrow e_{2}\backslash e_1}\hspace{-0mm}w(p)=\mathbf{b}_2\mathbf{u}_2$, and $\mu_{21}=\sum\limits_{p: e_2 \rightarrow t_{1}}\hspace{-0mm}w(p)$ are nonzero polynomials.  Now, notice that any monomial in $\mathbf{b}_1\mathbf{u}_1 (\mu_{11}+\mathbf{\lambda}_{12}\mathbf{u}_{2} \mu_{21})$ has sum-degree $1$ in $\overline{\mathbb{K}}^{(1)}$ and any monomial in $ \mathbf{b}_2\mathbf{u}_2 \mu_{21}$ has sum-degree $0$ in  $\overline{\mathbb{K}}^{(1)}$. Recalling that 
$\sum\limits_{p: s_2 \rightarrow t_{1}}\hspace{-0mm}w(p)
 =\mathbf{b}_1\mathbf{u}_1 (\mu_{11}+\mathbf{\lambda}_{12}\mathbf{u}_{2} \mu_{21})+ \mathbf{b}_2\mathbf{u}_2 \mu_{21}$, we conclude that if $i=1$, $\sum\limits_{p: s_2 \rightarrow t_{1}}\hspace{-0mm}w(p)$ is not homogeneous in  $\overline{\mathbb{K}}^{(1)}$. Similarly, if $i=2$, this means that there are  $s_2 \rightarrow t_{1}\textrm{ via }  e_2$ paths and $s_2 \rightarrow t_{1} \backslash e_2$  paths in the network. That is, all of $\sum\limits_{p: s_2 \rightarrow e_{1}}\hspace{-0mm}w(p)= \mathbf{b}_1\mathbf{u}_1$, $\sum\limits_{p: e_1 \rightarrow t_{1}\backslash e_2}\hspace{-0mm}w(p)
 = \mu_{11}$,    $\sum\limits_{p: s_2 \rightarrow e_{2}}\hspace{-0mm}w(p)=\mathbf{b}_1\mathbf{u}_1\mathbf{\lambda}_{12}\mathbf{u}_{2}+ \mathbf{b}_{2}\mathbf{u}_{2}$, and $\sum\limits_{p: e_2 \rightarrow t_{1}}\hspace{-0mm}w(p)=\mu_{21}$ are nonzero polynomials. Now, notice that any monomial in $(\mathbf{b}_1\mathbf{u}_1 \mathbf{\lambda}_{12}\mathbf{u}_{2} +\mathbf{b}_2\mathbf{u}_2)\mu_{21}$ has sum-degree $1$ in $\overline{\mathbb{K}}^{(2)}$ and any monomial in $ \mathbf{b}_1\mathbf{u}_1 \mu_{11}$ has sum-degree $0$ in  $\overline{\mathbb{K}}^{(2)}$. Recalling that 
$\sum\limits_{p: s_2 \rightarrow t_{1}}\hspace{-0mm}w(p)
 =\mathbf{b}_1\mathbf{u}_1 \mu_{11}+(\mathbf{b}_1\mathbf{u}_1\mathbf{\lambda}_{12}\mathbf{u}_{2} + \mathbf{b}_2\mathbf{u}_2) \mu_{21}$, we conclude that if $i=2$, $\sum\limits_{p: s_2 \rightarrow t_{1}}\hspace{-0mm}w(p)$ is not homogeneous in  $\overline{\mathbb{K}}^{(2)}$.
 
This means that, for any $i\in\{1,2\}$, the homogeneous polynomial $\big(\det(\mathbf{M})\big)^{L}$ in  $\overline{\mathbb{K}}^{(i)}$  has a non-homogeneous polynomial (i.e., $\sum\limits_{p: s_2 \rightarrow t_{1}}\hspace{-0mm}w(p)$) in $\overline{\mathbb{K}}^{(i)}$ as a factor, a contradiction to the fact that the factors of any homogeneous polynomial are also homogeneous (Lemma \ref{rmk:homog}). 
 \end{proof}
\label{sec:main}


\begin{corollary}\label{cor:alt}
Consider a $(\mathcal{G},\{s_{1}\},\{t_{1}\},\{s_2\},\{t_2\})$ two-unicast-$Z$ network  such that there is a path from  $s_{1}$ to  $t_{1}$, there is a path from $s_{2}$ to $t_{2}$, and $\mathcal{G}$ has a minimum GNS cut set $\{e_1, e_2\}$ of size two. If  the network has two paths that belong to different two of the following classes of $s_2 \rightarrow t_1$ paths,
\begin{enumerate}[label=(\alph*)]
\item the class of $s_2 \rightarrow t_1 \textrm{ via } e_1 \backslash e_2$ paths,
\item  the class of $s_2   \rightarrow t_1 \textrm{ via } e_2 \backslash e_1$ paths, and
\item the class of $s_2  \rightarrow t_1 \textrm{ via } \{e_1 , e_2\}$ paths,
\end{enumerate}   
then, the rate $(1,1)$ is achievable in the network using scalar linear coding. 
\end{corollary}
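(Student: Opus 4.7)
The plan is to reduce Corollary \ref{cor:alt} directly to Theorem \ref{bathm} by a simple case analysis. Theorem \ref{bathm} asserts rate $(1,1)$ is achievable whenever, for some $i\in\{1,2\}$, the network contains both an $s_2 \rightarrow t_1$ via $e_i$ path and an $s_2 \rightarrow t_1 \backslash e_i$ path. The three path classes (a), (b), (c) in the statement of the corollary partition the set of $s_2 \rightarrow t_1$ paths in $\mathcal{G}$ according to which of the GNS-cut edges they traverse (using only $e_1$, using only $e_2$, or using both). Thus it suffices to check that any pair of distinct classes (a)-(b), (a)-(c), or (b)-(c) supplies the hypothesis of Theorem \ref{bathm} for some $i$.

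First I would handle the pair (a) and (b). A path of class (a) traverses $e_1$ but avoids $e_2$, so in particular it is an $s_2 \rightarrow t_1 \backslash e_2$ path; a path of class (b) traverses $e_2$, so it is an $s_2 \rightarrow t_1$ via $e_2$ path. Applying Theorem \ref{bathm} with $i = 2$ yields rate $(1,1)$. Next, for the pair (a) and (c): the (a) path is an $s_2 \rightarrow t_1 \backslash e_2$ path and the (c) path passes through $e_2$ (and $e_1$), so it is an $s_2 \rightarrow t_1$ via $e_2$ path; again Theorem \ref{bathm} applies with $i = 2$. Finally, for the pair (b) and (c): the (b) path avoids $e_1$, giving an $s_2 \rightarrow t_1 \backslash e_1$ path, while the (c) path uses $e_1$, giving an $s_2 \rightarrow t_1$ via $e_1$ path; Theorem \ref{bathm} applies with $i = 1$.

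Since the three pairs exhaust the possible ways of selecting paths from two distinct classes among (a), (b), (c), each case yields the hypothesis of Theorem \ref{bathm} for some $i \in \{1,2\}$, and hence scalar linear achievability of $(1,1)$ follows. There is no real technical obstacle here; the only thing to be careful about is to verify that \emph{different} classes always produce a path traversing $e_i$ and a path avoiding $e_i$ for a common choice of $i$, which the above case check confirms. The corollary is therefore essentially a bookkeeping reformulation of Theorem \ref{bathm} in terms of the more symmetric three-class description of $s_2 \rightarrow t_1$ paths.
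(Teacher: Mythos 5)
Your proof is correct and matches the paper's approach: the paper simply states that the corollary ``follows directly from Theorem~\ref{bathm},'' and your case analysis over the three pairs of classes is exactly the bookkeeping the paper leaves implicit.
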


The proof of the corollary follows directly from Theorem \ref{bathm}. 
\subsection{Proof of Theorem \ref{mathm}}

Inspired by \cite{shannon1956} where the authors define the notion of reduced networks,  we define  critical two-unicast-$Z$ networks.
\begin{definition}[Critical two-unicast-$Z$ network]\label{def:critical}
A $(\mathcal{G},\{s_{1}\},\{t_{1}\},\{s_2\},\{t_2\})$ two-unicast-$Z$ network  with a minimum GNS cut of size  two is  \emph{critical} if removing any edge from the network  reduces the minimum GNS cut size to one. 
\end{definition}
\begin{remark}\label{rmk:critical}
Every edge in a critical two-unicast-$Z$ network belongs to some GNS cut set of size two.
\end{remark}
The remark follows by observing that if any edge in the critical network does not belong to a GNS cut set of size two, then removing this edge does not reduce the size of the minimum GNS cut set of the network to one. That is, the network is not critical, a contradiction. 
\begin{corollary}\label{cor:s2t1}
Consider a critical $(\mathcal{G},\{s_{1}\},\{t_{1}\},\{s_2\},\{t_2\})$ two-unicast-$Z$ network such that there is a path from  $s_{1}$ to  $t_{1}$ and there is a path from $s_{2}$ to $t_{2}$. If, for every GNS cut $\{e_1,e_2\}$ of size two in the network, all the $s_2 \rightarrow t_1$ paths in the network belong to only one class of the following classes of $s_2 \rightarrow t_1$ paths, 
\begin{enumerate}[label=(\alph*)]
\item the class of $s_2 \rightarrow t_1 \textrm{ via } e_1 \backslash e_2$ paths,
\item  the class of $s_2   \rightarrow t_1 \textrm{ via } e_2 \backslash e_1$ paths, and
\item the class of $s_2  \rightarrow t_1 \textrm{ via } \{e_1 , e_2\}$ paths,
\end{enumerate}
 then there is only one $s_2 \rightarrow t_1$ path in the network. 
\end{corollary}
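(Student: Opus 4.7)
The plan is to argue by contradiction, constructing a specific GNS cut of size two relative to which two hypothetical distinct $s_2 \to t_1$ paths would be forced into two different classes among (a), (b), (c). Suppose, toward a contradiction, that there are two distinct $s_2 \to t_1$ paths $p_1$ and $p_2$ in the network.

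The first step is to exploit the distinctness of $p_1$ and $p_2$ by picking an edge $e$ that lies on exactly one of the two paths; without loss of generality $e \in p_1$ and $e \notin p_2$. The second step is to invoke criticality of the network via Remark \ref{rmk:critical}, which guarantees that $e$ belongs to some GNS cut of size two, say $\{e, e'\}$. The third step uses the defining property of a GNS cut set: deleting $\{e, e'\}$ must eliminate every $s_2 \to t_1$ path, so in particular every $s_2 \to t_1$ path must meet $\{e, e'\}$. Applied to $p_2$, which does not use $e$, this forces $e' \in p_2$.

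The final step is to classify $p_1$ and $p_2$ with respect to the GNS cut $\{e_1, e_2\} := \{e, e'\}$. Since $p_2$ uses $e_2 = e'$ but not $e_1 = e$, the path $p_2$ belongs to class (b). Since $p_1$ uses $e_1 = e$, the path $p_1$ belongs to class (a) (if it does not also use $e'$) or to class (c) (if it does), but in either case $p_1$ does not belong to class (b). Thus $p_1$ and $p_2$ sit in different classes for this particular GNS cut of size two, contradicting the standing hypothesis that all $s_2 \to t_1$ paths fall in a single class for every such cut. This completes the contradiction, proving that there can be at most one $s_2 \to t_1$ path.

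The main technical content is the observation that every GNS cut must intersect every $s_2 \to t_1$ path; this is an immediate consequence of the definition but is the bridge that couples criticality (which gives us a size-two cut through any chosen edge) to the classification hypothesis (which only constrains paths relative to size-two cuts). I do not anticipate any real obstacle beyond keeping the labeling of $e_1$ and $e_2$ straight — the argument is symmetric in the two GNS edges, so the contradiction is produced regardless of which of $e, e'$ is designated as $e_1$.
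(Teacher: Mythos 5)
Your proof is correct and follows essentially the same route as the paper: pick an edge $e$ on exactly one of two hypothetical distinct $s_2 \to t_1$ paths, use criticality (Remark \ref{rmk:critical}) to embed $e$ in a size-two GNS cut $\{e,e'\}$, observe $e'$ must lie on the other path, and read off a two-class contradiction. The only cosmetic difference is that you note explicitly that the argument is insensitive to which of $e,e'$ is labeled $e_1$, which the paper leaves implicit.
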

\begin{proof}
For contradiction, assume that there exist more than one $s_2 \rightarrow t_1$ paths in the critical network, pick any two of such paths, and let one of them be named $p_1$ and the other be named $p_2$. Now, pick an edge $e_1$ that belongs to $p_1$ and does not belong to $p_2$ (such an edge exists since $p_1\neq p_2$), and form a GNS cut set of size two that contains $e_1$ (such a GNS cut set of size two exists since the network is critical, Remark \ref{rmk:critical}), let this GNS cut set be $\{e_1,e_2\}$. This means, for the GNS cut set $\{e_1,e_2\}$, there exists an $s_2 \rightarrow t_1$ path in the network that goes through $e_1$ (i.e. $p_1$). Notice that $p_1$  is either an $s_2 \rightarrow t_1$ via $e_1 \backslash e_2$ path or an $s_2 \rightarrow t_1$ via $\{e_1,e_2\}$ path in the network (i.e., $p_1$ belongs to the first or the third class of $s_2 \rightarrow t_1$ paths stated in the corollary). Moreover, $p_2$ is an $s_2 \rightarrow t_1 \backslash e_1$ path. Since the GNS cut set $\{e_1,e_2\}$ cuts every $s_2 \rightarrow t_1$ path and $e_1$ does not belong to $p_2$, $e_2$ cuts $p_2$ (i.e., $e_2$ belongs to $p_2$). Thus, $p_2$ is an $s_2 \rightarrow t_1$ via $e_2 \backslash e_1$ path in the network (i.e., $p_2$ belongs to the second class of $s_2 \rightarrow t_1$ paths stated in the corollary), a contradiction to the hypothesis of the corollary that all the $s_2 \rightarrow t_1$ paths in the network belong to only one class of the $s_2 \rightarrow t_1$ paths. 
\end{proof}


\begin{lemma}\label{lem:join}
Consider a $(\mathcal{G},\{s_{1}\},\{t_{1}\},\{s_2\},\{t_2\})$ two-unicast-$Z$ network such that there is a path from  $s_{1}$ to $t_{1}$, there is a path from $s_{2}$ to $t_{2}$, and there is only one path from $s_2$ to $t_1$. If an $s_1 \rightarrow t_1$ path joins the $s_2 \rightarrow t_1$ path, it cannot leave it. Similarly,  if an $s_2 \rightarrow t_2$ path leaves the $s_2 \rightarrow t_1$ path, they cannot rejoin.    
\end{lemma}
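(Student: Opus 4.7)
The plan is to prove both claims by contradiction, each time exploiting the uniqueness of the $s_2 \to t_1$ path $q$: if the claim were false, I would splice together portions of the relevant paths to produce a second $s_2 \to t_1$ path $\tilde{q} \neq q$, violating the hypothesis. A key background fact is that $\mathcal{G}$ is a DAG, so the topological order $\Ord$ forces the vertices visited along any directed walk to be strictly increasing in topological order. Consequently, any walk obtained by concatenating forward-directed subpaths is automatically a simple directed path, with no repeated vertices apart from the gluing points.

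For the first claim, suppose for contradiction that some $s_1 \to t_1$ path $p$ shares a vertex $v$ with $q$ and then ``leaves'' $q$, in the sense that the suffix of $p$ from $v$ to $t_1$ is not identical to the suffix of $q$ from $v$ to $t_1$. I would then define $\tilde{q}$ as the concatenation of the prefix of $q$ from $s_2$ to $v$ with the suffix of $p$ from $v$ to $t_1$. The DAG observation above guarantees that $\tilde{q}$ is a valid simple $s_2 \to t_1$ path, and by construction it disagrees with $q$ on the segment after $v$, contradicting the uniqueness hypothesis.

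For the second claim, suppose that an $s_2 \to t_2$ path $r$ leaves $q$ at a vertex $u$ and later rejoins $q$ at a vertex $v$; formally, $u$ and $v$ are both on $r$ and on $q$, but the subpath of $r$ from $u$ to $v$ differs from the subpath of $q$ from $u$ to $v$. I would construct $\tilde{q}$ by following $q$ from $s_2$ to $u$, then $r$ from $u$ to $v$, and finally $q$ from $v$ to $t_1$. Again the DAG property ensures $\tilde{q}$ is a simple directed path, since the three concatenated segments occupy disjoint intervals of the topological order except at the gluing vertices $u$ and $v$. By construction $\tilde{q} \neq q$, contradicting uniqueness.

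The main obstacle is the bookkeeping required to formalize the informal verbs ``joins,'' ``leaves,'' and ``rejoins'' in terms of shared vertices and divergent outgoing edges, and to verify that each spliced walk is genuinely a simple directed path rather than a walk with repeated vertices. Once these points are handled via the topological ordering afforded by the DAG, both parts reduce to a direct application of the hypothesis that $q$ is the unique $s_2 \to t_1$ path.
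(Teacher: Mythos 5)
Your proposal is correct and follows essentially the same approach as the paper: the paper's proof is a one-sentence observation that such a join-then-leave or leave-then-rejoin would splice into a second $s_2 \rightarrow t_1$ path, contradicting uniqueness. You have merely expanded this into an explicit construction and invoked the DAG/topological-order property to justify that the spliced walk is indeed a valid simple path.
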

\begin{proof}
The lemma follows from noticing that if an $s_1 \rightarrow t_1$ path that joined the $s_2 \rightarrow t_1$ path left it, or if an $s_2 \rightarrow t_2$  path that left the $s_2 \rightarrow t_1$ path rejoined it, then the network would contain two different $s_2 \rightarrow t_1$ paths, a contradiction to the fact that the network has only one $s_2 \rightarrow t_1$ path.  
\end{proof}
Now, we can introduce the proof of Theorem \ref{mathm}.
\begin{proof}[Proof of Theorem \ref{mathm}:] In our proof, we assume, without loss of generality, that the two-unicast-$Z$ network is critical with a minimum GNS cut of size two. Indeed, consider any    two-unicast-$Z$ network such that there is a path from  $s_{1}$ to $t_{1}$ and there is a path from $s_{2}$ to $t_{2}$ with minimum GNS cut set  of size at least two, call this network the \emph{original} network. If this original network is not critical, then edges can be removed iteratively till the point such that every edge in the resultant graph, denoted by $\mathcal{G}'$, belongs to some GNS cut of size two, i.e., the resultant graph $\mathcal{G}'$ is critical.

Now, if   rate $(1,1)$ is achievable in the critical network $(\mathcal{G}',\{s_{1}\},\{t_{1}\},\{s_2\},\{t_2\})$ using scalar linear coding, then $(1,1)$ is achievable in the original network $(\mathcal{G},\{s_{1}\},\{t_{1}\},\{s_2\},\{t_2\})$ using scalar linear coding. Moreover, in our proof, we  assume that the critical network has at least one $s_2  \rightarrow t_1$ path, i.e. interference at  $t_1$. Otherwise, the network has two edge disjoint $s_1 \rightarrow t_1$ and $s_2 \rightarrow t_2$ paths, and the rate (1,1) achievability directly follows using routing.  
\begin{figure*}[]
  \subfloat[Case 1:  $e'$ is a single edge GNS cut set.]{%
      \includegraphics[width=0.4\textwidth]{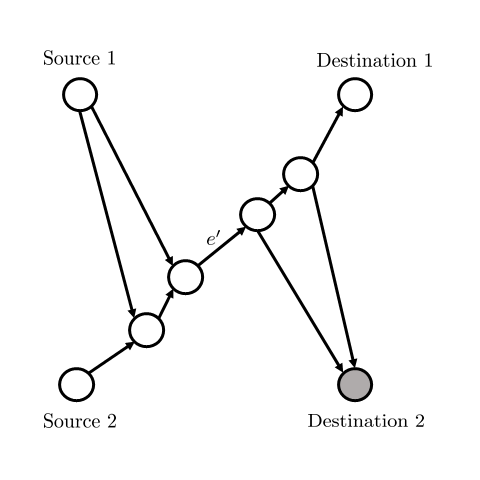}}\hfill
  \subfloat[Case 2:  Rate $(1,1)$ is achievable by routing.]{%
      \includegraphics[width=0.4\textwidth]{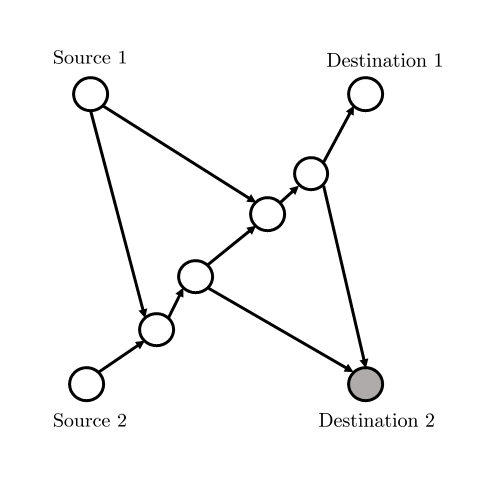}}\hfill
  \caption{Two examples of critical two-unicast-$Z$ networks for the two possible cases: case 1 where all the $s_1\rightarrow t_1$  paths join the only $s_2 \rightarrow t_1$ path in the network  before any $s_2 \rightarrow t_2$ path   leaves the $s_2 \rightarrow t_1$ path, and case 2 where an $s_1 \rightarrow t_1$ path joins the $s_2 \rightarrow t_1$ path after an $s_2 \rightarrow t_2$ path leaves the $s_2\rightarrow t_1$ path.}
\label{fig:one_s2_t1} 
\end{figure*}
If the critical network has two paths that belong to different two of the following classes of $s_2 \rightarrow t_1$ paths:
1) the class of $s_2 \rightarrow t_1 \textrm{ via } e_1 \backslash e_2$ path,
2)  the class of $s_2   \rightarrow t_1 \textrm{ via } e_2 \backslash e_1$ paths, and
3) the class of $s_2  \rightarrow t_1 \textrm{ via } \{e_1 , e_2\}$ paths, where $\{e_1,e_2\}$ is any GNS cut set of size two, 
then, from Corollary \ref{cor:alt}, rate $(1,1)$ is achievable in the network using scalar linear coding. Otherwise, for every GNS cut $\{e_1,e_2\}$ of size two in the network, all the $s_2 \rightarrow t_1$ paths in the network belong to only one class of the following classes of $s_2 \rightarrow t_1$ paths: 1)  the class of $s_2 \rightarrow t_1 \textrm{ via } e_1 \backslash e_2$ paths, 2)  the class of $s_2   \rightarrow t_1 \textrm{ via } e_2 \backslash e_1$ paths, and 3) the class of $s_2  \rightarrow t_1 \textrm{ via } \{e_1 , e_2\}$ paths,
 then, from Corollary \ref{cor:s2t1}, there is only one $s_2 \rightarrow t_1$ path in the network.  Let $p_1$ be the last $s_1 \rightarrow t_1$ path to join this $s_2 \rightarrow t_1$ path. Similarly,  let $p_2$ be the first $s_2 \rightarrow t_2$ path to leave the $s_2 \rightarrow t_1$ path. Now, we have two cases: \textbf{Case 1}:  $p_1$ joins the $s_2 \rightarrow t_1$ path before $p_2$ leaves the $s_2 \rightarrow t_1$ path. In this case, let $e'$ be the first edge in the intersection of $p_1$ and the $s_2 \rightarrow t_1$ path, then, from Lemma \ref{lem:join}, every $s_1 \rightarrow t_1$ path and every $s_2 \rightarrow t_2$ path go through $e'$. Therefore, $e'$ is a single edge GNS cut in the network, a contradiction to the fact that the network has a minimum GNS cut of size two, implying that case 2 must be true which is as follows: \textbf{Case 2}: $p_1$ joins the $s_2 \rightarrow t_1$ path after  $p_2$ leaves the $s_2 \rightarrow t_1$ path,  or $p_1$ joins  the $s_2 \rightarrow t_1$ and $p_2$ leaves  the $s_2 \rightarrow t_1$ at the same node. In this case, $p_1$ and $p_2$ are edge disjoint and the rate (1,1) is achievable by routing.  Examples of critical networks for cases 1 and 2 are shown in Fig. \ref{fig:one_s2_t1}. 
 \end{proof}
\vspace{-0pt}
\section{Insufficiency of edge cut bounds and scalar linear network codes}
\label{sec:GNS}
\begin{figure}[ht]
    \centering
    \includegraphics[scale=0.45]{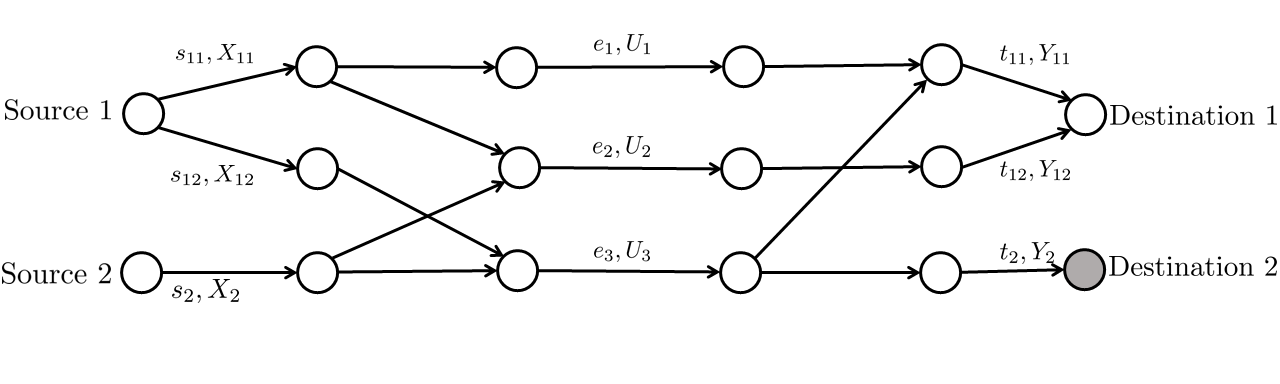}
    \caption{The two-unicast-$Z$ instance $\mathcal{I}$ where GNS bound is $3$ and the maximum achievable sum-rate is $2.5$; the network requires vector linear codes.}
    \label{fig:counterex}
\vspace{-0pt}
\end{figure}
In this section, we show that the generalized network sharing (GNS) bound is not tight for two-unicast-$Z$ networks, and that vector linear codes outperform scalar linear codes in two-unicast-$Z$ networks. We prove these results by constructing a  two-unicast-$Z$ instance where both the GNS bound is not tight and vector linear codes outperform scalar linear codes. 

\subsection{Insufficiency of GNS bound}
The main result of this section regarding the insufficiency of the GNS bound is formulated in  the following theorem.
\begin{theorem}\label{thm:GNSnottight}
There exists a two-unicast-$Z$ instance   where, for any  rate $(R_1,R_2)$ in its rate region, the sum-rate $R_1+R_2$ is strictly less than the cardinality of any GNS cut set. That is, for  the two-unicast-$Z$ network,  the GNS bound is not tight.
\end{theorem}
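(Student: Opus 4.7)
The plan is to exhibit the specific two-unicast-$Z$ instance $\mathcal{I}$ depicted in Fig.~\ref{fig:counterex} and prove two facts about it: (i) every GNS cut set of $\mathcal{I}$ has cardinality at least three, with at least one GNS cut set of cardinality exactly three, so that the GNS upper bound equals $3$; and (ii) every achievable rate pair $(R_1,R_2)$ in $\mathcal{I}$ satisfies $R_1+R_2 \leq 5/2$, strictly less than the GNS value. Together these yield the theorem. For (i), I would argue by finite inspection: first exhibit a specific triple of edges whose removal kills all $s_1\!\to\!t_1$, $s_2\!\to\!t_2$ and $s_2\!\to\!t_1$ paths, then check that no pair of edges accomplishes the same (this is a small case check since the GNS cut must simultaneously disconnect three source-destination routings).

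For (ii), the converse, I would use an information-theoretic block-coding argument rather than a single cut. Consider $n$ uses of the network, independent messages $W_1,W_2$ uniform over $[2^{nR_1}]\times[2^{nR_2}]$, and a coding scheme with vanishing error probability; denote by $Y_e\in\mathcal{A}^n$ the symbols sent on edge $e$, so $H(Y_e)\leq n$. Standard applications of Fano's inequality yield $H(W_1\mid Y_{\Tail^{-1}(t_1)})\leq n\epsilon_n$ and $H(W_2\mid Y_{\Tail^{-1}(t_2)},W_1)\leq n\epsilon_n$. The objective is then to identify two suitably chosen edge-subsets $\mathcal{A},\mathcal{B}\subseteq\mathcal{E}$ of $\mathcal{I}$ such that combining the resulting entropy inequalities, together with the Z-side-information of $W_1$ at destination $2$ and sub-modularity of entropy, produces
\[
2(R_1+R_2) \;\leq\; |\mathcal{A}|+|\mathcal{B}| \;+\; o(1) \;=\; 5.
\]
Roughly, the two subsets will be overlapping ``soft cuts'' that together double-count exactly those edges forming the bottleneck, so that the combined bound comes out to $5$ total edge-capacities, giving $R_1+R_2\leq 5/2$.

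The principal obstacle will be pinpointing the correct pair of edge subsets $(\mathcal{A},\mathcal{B})$, and in particular organizing the chain of conditional entropy manipulations so that the Z-side-information is used to ``replace'' the interference from $W_1$ at destination $2$ cleanly. Unlike the GNS bound, which is a purely combinatorial single-cut bound, a $5/2$ bound requires exploiting the joint decodability of $W_1$ at $t_1$ and $W_2$ at $t_2$ and must leverage a network structure where some interior edges carry ``mixed'' information that cannot be split without loss: this is exactly the phenomenon the instance is designed to exhibit. I would verify the non-achievability of $R_1+R_2>5/2$ by explicitly writing out the chain of inequalities for the specific topology of $\mathcal{I}$, then confirm achievability of $R_1+R_2=5/2$ (for example by a vector linear code of blocklength two, which is consistent with the theorem in the next subsection asserting insufficiency of scalar linear codes in the same instance).
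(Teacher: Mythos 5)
Your high-level plan matches the paper's: exhibit the instance $\mathcal{I}$ of Fig.~\ref{fig:counterex}, note that its minimum GNS cut has size $3$, and show that every achievable $(R_1,R_2)$ obeys $R_1+R_2\leq 5/2$. But your sketch of the converse is a schematic rather than a proof, and the schematic you propose does not quite match how the bound is actually obtained. You envision two ``soft'' edge subsets $\mathcal{A},\mathcal{B}$ with $|\mathcal{A}|+|\mathcal{B}|=5$ and a direct inequality $2(R_1+R_2)\leq|\mathcal{A}|+|\mathcal{B}|$. The paper's Claim~\ref{cl:sumrate}, by contrast, derives an \emph{asymmetric} linear constraint, $2R_1+R_2\leq 4$, by starting from Fano at destination~$1$ and destination~$2$, splitting $I(X_{11},X_{12};Y_{11},U_2)$ by the chain rule, and then bounding the three resulting mutual-information terms in three different pairings (each pairing consumes one edge-capacity constraint, e.g.\ $H(X_{11})\leq n$, $H(Y_{11})\leq n$, $H(U_2)\leq n$, with the last inequality also canceling the term $I(X_2;U_3|X_{11},X_{12})$ coming from $R_2$). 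The $5/2$ bound then comes from combining $2R_1+R_2\leq 4$ with the single min-cut bound on $R_2$. This is not a sum of two cuts; it is one multi-term entropy argument that double-counts $W_1$, plus one ordinary cut on the second session.

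The genuine gap, which you acknowledge yourself (``the principal obstacle will be pinpointing the correct pair of edge subsets''), is precisely the chain of entropy manipulations that makes the argument work. A symmetric ``two overlapping cuts of total size $5$'' framing does not obviously produce $2(R_1+R_2)\leq 5$ for this instance: a naive union of two cuts cannot exploit the fact that $X_{11}$ must be simultaneously recoverable from $Y_{11}$ and must determine the interference delivered on $e_2$, which is the structural feature that forces $2R_1$ rather than $R_1$ into the inequality. Without specifying which random variables are conditioned on which, the claim $2(R_1+R_2)\leq|\mathcal{A}|+|\mathcal{B}|$ is not established. Your proposal is correct in spirit and identifies the right target, but the core technical step is left as a plan rather than carried out, and the plan as stated (symmetric soft cuts) would need to be reshaped into the asymmetric $2R_1+R_2$ form to close.
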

\vspace{3mm}

In order to prove Theorem \ref{thm:GNSnottight}, we aim to construct a two-unicast-$Z$ instance in which there is a gap between the maximum  sum-rate $R_1+R_2$, over all rates $(R_1,R_2)$ in the rate region of this instance,  and the minimum size of a GNS cut set. In the following, we prove that a candidate instance is the two-unicast-$Z$ instance $\mathcal{I}$ depicted in Fig. \ref{fig:counterex}. First, we establish  an upper bound on the achievable sum-rates in $\mathcal{I}$.

\begin{claim}\label{cl:sumrate}
Let $(R_1,R_2)$ belong to the rate region of the two-unicast-$Z$ instance $\mathcal{I}$ depicted in Fig. \ref{fig:counterex}, $R_1+R_2 \leq 2.5$
\end{claim}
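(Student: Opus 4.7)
The plan is to establish an upper bound strictly sharper than the GNS cut-set value of $3$ by combining information-theoretic cut-set arguments with the side-information structure at destination $2$. Fix any achievable rate $(R_1,R_2)$ with a sequence of block-length $n$ codes whose error probability $\epsilon_n \to 0$. For each edge $e \in \mathcal{E}$, write $Y_e^n$ for the transmitted symbol vector on $e$; by the unit-capacity constraint, $H(Y_e^n) \leq n \log |\mathcal{A}|$. Fano's inequality gives $H(W_1 \mid Y^n_{\mathcal{T}_1}) = o(n)$ and, because destination $2$ has $W_1$ as side information, $H(W_2 \mid Y^n_{\mathcal{T}_2}, W_1) = o(n)$.

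The next step is to identify two edge subsets $\mathcal{E}^{(1)}, \mathcal{E}^{(2)} \subseteq \mathcal{E}$ in the instance $\mathcal{I}$ such that each subset yields an asymmetric cut-set inequality of the form $2 n R_1 + n R_2 \leq \alpha_1 n + o(n)$ and $n R_1 + 2 n R_2 \leq \alpha_2 n + o(n)$, respectively. These asymmetric inequalities arise by conditioning certain entropy terms on $W_1$ (to exploit the side information at $T_2$) and by combining a GNS-style disconnection of the $\mathcal{S}_1 \rightarrow \mathcal{T}_1$ paths with a supplementary bound on the $\mathcal{S}_2 \rightarrow \mathcal{T}_2$ flow that is forced to share edges with the $\mathcal{S}_1 \rightarrow \mathcal{T}_1$ flow. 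If the engineered topology of $\mathcal{I}$ permits $\alpha_1 + \alpha_2 = 15/2$, then summing the two inequalities and dividing by $3$ yields the desired bound $R_1 + R_2 \leq 5/2$.

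An alternative route, which I would attempt in parallel, is to directly derive a single inequality of the form $2n(R_1+R_2) \leq 5n + o(n)$ via a Shearer-type submodular argument applied to a carefully chosen collection of edges in $\mathcal{I}$. Here the idea is to partition (with overlap) an appropriate $5$-edge set into two three-edge covers, each of which satisfies a GNS disconnection condition, and then apply the submodularity of entropy to conclude $H(Y^n_{\mathcal{E}^{(1)}}) + H(Y^n_{\mathcal{E}^{(2)}}) \geq 2n(R_1+R_2) - o(n)$ while bounding each term by $3n$.

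The main obstacle is the selection of the edge subsets so that the ``overlap'' tightens the GNS bound from $3$ to $5/2$; any scheme that merely reuses a single GNS cut can at best recover the looser bound $3$. I expect the crucial ingredient to be the side information $W_1$ at $T_2$, which enables conditioning that strictly reduces the entropy of $Y^n_{\mathcal{T}_2}$ and forces half of the rate burden of $R_2$ onto edges that are already saturated by $W_1$'s flow. Verifying that the particular topology of $\mathcal{I}$ (as drawn in Fig.~\ref{fig:counterex}) supports such an argument, and that no looser choice of subsets accidentally yields $3$ again, will require a careful case analysis of the network's edges.
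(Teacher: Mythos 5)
Your high-level instinct is right: the bound must come from an information inequality that is \emph{asymmetric} in $(R_1,R_2)$, and the side information $W_1$ at destination $2$ must be exploited to condition entropies. However, the proposal never actually produces such an inequality. You twice write the key step as a conditional (``if the engineered topology of $\mathcal{I}$ permits $\alpha_1 + \alpha_2 = 15/2$ \ldots'', ``verifying that the particular topology of $\mathcal{I}$ supports such an argument \ldots will require a careful case analysis'') and explicitly defer the work that constitutes the proof. As it stands this is a plan, not a derivation, and it would not settle Claim~\ref{cl:sumrate}.

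The specific structure you propose also does not match what is needed. The paper derives a \emph{single} asymmetric inequality $2R_1 + R_2 \le 4$, not a complementary pair $(2R_1+R_2, R_1+2R_2)$. It then combines this with the trivial per-session min-cut bound $R_2 \le 1$, which acts as a degenerate second inequality; adding $2R_1+R_2 \le 4$ and $R_2 \le 1$ and halving gives $R_1+R_2\le 5/2$. Your Shearer-type submodular alternative, while conceivable, faces the same issue: Shearer-style covers naturally produce inequalities that treat the sources roughly symmetrically, whereas the tight bound here is inherently lopsided (the capacity point is $(1.5,1)$). The heart of the paper's proof is not a cut-set or covering argument at all but a chain of nine mutual-information manipulations of terms such as $I(X_{11};Y_{11})$, $I(X_{11};U_2|Y_{11})$, $I(X_{12};Y_{11},U_2|X_{11})$, and $I(X_2;U_3|X_{11},X_{12})$, the crucial ingredient being a Markov-structure identity $H(X_{12}\mid X_{11},Y_{11}) = H(X_{12}\mid X_{11},U_3)$ that lets the bound on the $W_1$-carrying edges ``see'' the $W_2$-carrying edge $U_3$ and steal rate from it. Without that (or some equivalent exploitation of the interior topology beyond GNS-style disconnections), you cannot get the factor that drops the bound below $3$.
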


\begin{proof}
Consider a scheme that sends symbols of block length $n$  with probability of error bounded by $\epsilon$. Let $X_{1j}$ denote the symbol sent along edge $s_{1j}, j=1,2$, and $X_2$ denote the symbol sent by $s_2$. Let $Y_{1j}$ denote the symbol received along $t_{1j}, j=1,2$, and $Y_2$ denote the symbol received along $t_2$, and  let $U_{i}$ denote the symbol sent along $e_i, i=1,2,3$. Notice that there is no loss of generality in assuming that $U_1 = X_{11}, U_{2}=Y_{12}, U_3=Y_2$. For the first source, we have
\begin{align*}
	n R_{1} - \epsilon &\leq  I(X_{11},X_{12}; Y_{11},U_2)\\
	                             &\stackrel{(1)}{=} I(X_{11}; Y_{11}) + I(X_{11}; U_{2} | Y_{11})
	                              + I(X_{12};Y_{11}, U_{2}| X_{11}),
\end{align*}
where (1) follows from the application of the chain rule of mutual information.

 Similarly, for the second source, noting that $X_{11},X_{12}$ are available at the second destination as side information, we have
\begin{align*}
	n R_{2} - \epsilon &\stackrel{(2)}{\leq} I(X_{2}; U_{3}|X_{11},X_{12}).
\end{align*}

In addition, we also have
\begin{align*}
	I(X_{11};Y_{11}) + I(X_{11};U_2|Y_{11})&\stackrel{(3)}{=}I(X_{11}; Y_{11},U_2)\stackrel{(4)}{\leq} H(X_{11}) \stackrel{(5)}{\leq} n, 
\end{align*}
where (3) follows from the chain rule of mutual information, (4) follows from the fact that $I(X_{11}; Y_{11},U_2)= H(X_{11})- H(X_{11} |  Y_{11},U_2) $ where $H(X_{11} |  Y_{11},U_2) $ is non-negative, and (5) follows from the edge capacity constraint.
In addition, we can write
\begin{align*}
	I(X_{11};Y_{11}) + I(X_{12};Y_{11},U_2|X_{11}) 
	&= I(X_{11};Y_{11})  + I(X_{12};U_2|X_{11})+ I(X_{12};Y_{11}|X_{11},U_2)\\
	&\stackrel{(6)}{=} I(X_{11};Y_{11})  + I(X_{12};Y_{11}|X_{11},U_2) \\ 
	& \leq  I(X_{11};Y_{11}) + I(X_{12};Y_{11}|X_{11},U_2) \\
&\hspace{10pt}+I(U_2;Y_{11}|X_{11})+I(X_2; Y_{11}| X_{11},U_2,X_{12}) \\
&=I(X_{11},U_2,X_{12},X_2;Y_{11})\\
&\leq H(Y_{11})\\
& \stackrel{(7)}{\leq} n,
\end{align*}
where (6) follows since $X_{12}$ and $U_2$ are independent given $X_{11}.$
 Moreover, we have
\begin{align*}
	I(X_{11};U_2|Y_{11}) +I(X_{12};Y_{11}, U_{2}|X_{11})&=I(X_{11};U_2|Y_{11}) +I(X_{12};Y_{11}|X_{11}) +I(X_{12};U_{2}|X_{11},Y_{11})\\
	& =  I(X_{11},X_{12};U_2|Y_{11})+ I(X_{12};Y_{11}|X_{11})\\
	&\stackrel{(8)}{=}   I(X_{11},X_{12};U_2|Y_{11}) +I(X_{12};U_{3}|X_{11}) \\
	& \leq n+ I(X_{12};U_{3}|X_{11})\\
	&= n + I(X_{12},X_2;U_3|X_{11}) -  I(X_{2};U_{3}|X_{11},X_{12})\\
	& \stackrel{(9)}{\leq} 2n  - I(X_{2};U_{3}|X_{11},X_{12}),
\end{align*}
    where (8) follows from the fact that $H(X_{12}|X_{11},Y_{11})=H(X_{12}|X_{11},U_{3})$, therefore 
    \begin{align}
     I(X_{12};Y_{11}|X_{11})&=  H(X_{12}|X_{11})- H(X_{12}|X_{11},Y_{11})\notag\\
     &=H(X_{12}|X_{11})- H(X_{12}|X_{11},U_{3})\notag\\
     &=I(X_{12};U_{3}|X_{11}).
    \end{align}

Finally, performing $2 \times (1) + (2) + (5) + (7) + (9)$ and letting $n \to \infty$ gives $2R_1+R_2 \leq 4$. In conjunction with the cut set bound on the achievable rate of every source-destination communication session \cite{cutsetbnd}, i.e., $R_1 \leq 1$ and $R_2 \leq 2$, we infer that $R_1+R_2 \leq 2.5$. 
\end{proof}
Now, we prove Theorem \ref{thm:GNSnottight}.
\begin{proof}[Proof of Theorem \ref{thm:GNSnottight}:]
Notice that the two-unicast-$Z$ instance $\mathcal{I}$ shown in Fig. \ref{fig:counterex} has a minimum GNS cut set of size $3$. However, the sum-rate $R_1+R_2$ such that $(R_1,R_2)$ belongs to the rate region is  upper bounded by $2.5$ (Claim \ref{cl:sumrate}), i.e., the sum-rate is strictly less than the minimum size GNS cut set. This completes the proof.
\end{proof}}
\subsection{{Scalar linear codes} vs vector linear codes }
In this section, we show that vector linear codes (with vectors of dimension $> 1$) outperform scalar linear codes in the two-unicast-$Z$ network.  
The main result of this section is formulated in the following theorem.

\begin{theorem}\label{thm:sclincodes}
    There exists a two-unicast-$Z$ instance whose capacity is achievable by vector linear codes and not achievable by any scalar linear code. That is,  
    for the two-unicast-$Z$ network, vector linear codes outperform scalar linear codes and scalar linear codes are insufficient to achieve the capacity.
\end{theorem}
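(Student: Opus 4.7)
The plan is to exhibit the separation on the very instance $\mathcal{I}$ of Fig.~\ref{fig:counterex} that was already used for Theorem~\ref{thm:GNSnottight}. By Claim~\ref{cl:sumrate}, every rate pair $(R_1,R_2)$ in the rate region of $\mathcal{I}$ satisfies $R_1+R_2\le 5/2$, so it suffices to rule out sum-rate $5/2$ for scalar linear codes and to attain it with a vector linear code.

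First, I would dispose of the scalar linear side by appealing to the paper's own algebraic framework. Definition~\ref{def:scalarachievable} with $v=1$ forces $R_i = |\mathcal{S}_i| \in \mathbb{Z}_{\ge 0}$, so any scalar linearly achievable $(R_1,R_2)$ has integer coordinates. Combined with Claim~\ref{cl:sumrate}, scalar linearly achievable sum-rates are integers bounded by $5/2$, hence at most $2$, strictly below the capacity $5/2$.

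Next, I would construct an explicit vector linear code with $v=2$ attaining the rate pair $(R_1,R_2)=(3/2,1)$, so sum-rate $5/2$. Two consecutive uses of the network are bundled into a single block: source 1 injects $vR_1=3$ information symbols through its two source edges over the two uses, and source 2 injects $vR_2=2$ symbols through its single source edge over the two uses. The $2\times 2$ block coding matrices $\mathbf{B}_{e_i,e_j}$ would be chosen so that, at destination 1 (which collects four symbols through its two destination edges over two uses), the two source-2 interference symbols collapse into a single dimension, leaving three dimensions for the three source-1 symbols. At destination 2 (two received symbols through its single destination edge over two uses), the source-1 contribution is removed using the available side information, and the remaining $2\times 2$ block from source 2 is inverted.

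The main obstacle is to choose these block matrices so that the vector form of the achievability conditions (\ref{eq:achievability}) hold simultaneously, namely $\mathbf{G}_{2,1}(\mathbf{F}^*)=\mathbf{0}_{2\times 3}$ for the interference alignment at destination 1, together with $\det\mathbf{G}_{1,1}(\mathbf{F}^*)\neq 0$ and $\det\mathbf{G}_{2,2}(\mathbf{F}^*)\neq 0$ for decodability at the two destinations. The difficulty is that internal edges of $\mathcal{I}$ are traversed by paths carrying both source-1 and source-2 contributions, so the alignment requirement and the two rank requirements compete for the same degrees of freedom; this is precisely the regime where vector operations (arbitrary $2\times 2$ block matrices) provide strictly more flexibility than scalars, and the hard part will be a case analysis through the topology of $\mathcal{I}$ that produces a concrete assignment of the $\mathbf{B}_{e_i,e_j}$ meeting all three conditions at once.
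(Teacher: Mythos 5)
Your proposal follows exactly the same route as the paper: the witness instance is $\mathcal{I}$ from Fig.~\ref{fig:counterex}, the converse is Claim~\ref{cl:sumrate} ($R_1+R_2\le 5/2$), and the separation is $5/2$ (vector) vs.\ $2$ (scalar). Your argument that scalar linear rates must be integers because Definition~\ref{def:scalarachievable} with $v=1$ imposes $R_i=|\mathcal{S}_i|$ is a reasonable and slightly more explicit justification of what the paper asserts without elaboration (``restricting to scalar linear codes, rates higher than $(1,1)$ are not achievable''); that part is fine.

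The genuine gap is on the achievability side. The theorem requires exhibiting an instance where a vector linear code actually attains a rate that scalar linear cannot (here $(3/2,1)$), and this obligation is not discharged by observing that the vector framework ``provides more flexibility'' or by describing what a scheme \emph{would} have to do (align source-$2$ interference into one dimension at $t_1$, keep two ranks at $t_2$, etc.). You say yourself that producing the concrete $\mathbf{B}_{e_i,e_j}$ assignment ``is the hard part'' and defer to ``a case analysis through the topology of $\mathcal{I}$'' — but that case analysis \emph{is} the proof. Until you write down an explicit $v=2$ coding matrix $\mathbf{F}^{*}$ and check that $\mathbf{G}_{2,1}(\mathbf{F}^{*})=\mathbf{0}$, $\det\mathbf{G}_{1,1}(\mathbf{F}^{*})\ne 0$, and $\det\mathbf{G}_{2,2}(\mathbf{F}^{*})\ne 0$ for the topology of $\mathcal{I}$, the existence claim is unsubstantiated. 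The paper closes this exact gap by displaying the concrete $v=2$ scheme in Fig.~\ref{fig:vec}; your proposal stops one step short of it.
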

\begin{proof}
To prove the theorem, we  construct a two-unicast-$Z$ instance (namely, instance $\mathcal{I}$ in Fig. \ref{fig:counterex}) whose capacity is achievable by vector linear codes but not achievable by any scalar linear code. Recall from Claim \ref{cl:sumrate} that $R_1 + R_2 \leq 2.5$ in $\mathcal{I}$, for any $(R_1,R_2)$ in the rate region of $\mathcal{I}$.
The capacity of $\mathcal{I}$, i.e., rate $(1.5,1)$, can be achieved via vector linear network coding. 
An achievability scheme for the rate $(1.5,1)$ in $\mathcal{I}$ using vector linear codes is shown in Fig. \ref{fig:vec}. Thus, vector linear codes achieves the capacity of instance $\mathcal{I}$. However,  restricting to scalar linear codes, rates higher than $(1,1)$ are not achievable. Hence, for instance $\mathcal{I}$, capacity is achievable by vector linear codes but not achievable by any scalar linear code. That is,  for the two-unicast-$Z$ network, vector linear codes outperform scalar linear codes and scalar linear codes are insufficient to achieve the capacity.
\begin{figure}[ht]
    \centering
    \includegraphics[scale=0.6]{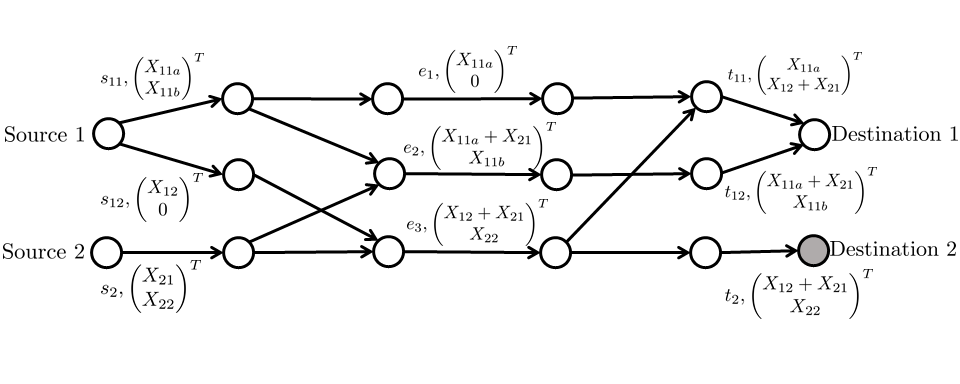}
    \caption{An achievability scheme for rate $(1.5,1)$ using  vector linear codes.}
    \label{fig:vec}
\vspace{-0pt}
\end{figure}
\vspace{-3mm}
\end{proof}

\section{ Insufficiency of Linear Codes}
\label{sec:nonlinear}
\begin{figure}[ht]
    \centering
    \includegraphics[scale=0.5]{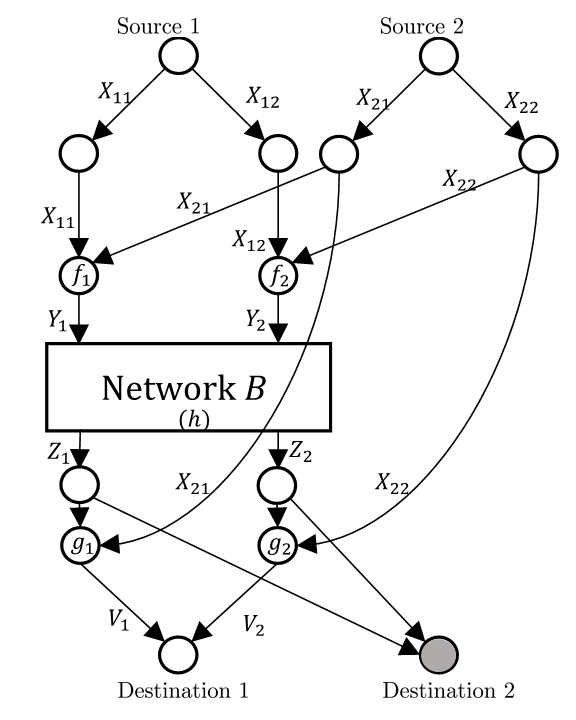}
    \caption{A two-unicast-$Z$ network where rate $(2,2)$ is achievable if and only if rate $(1,1)$ is achievable over two-unicast network $B$.}
    \label{fig:Zhard}
\vspace{-0mm}
\end{figure}
In this section, we show that non-linear codes outperform linear codes in the two-unicast-$Z$ network. In particular, we show that there exists a two-unicast-$Z$ instance where rate $(10,10)$ is not zero-error achievable using linear codes but zero-error achievable using non-linear codes. Our approach is inspired by the method of \cite{Kamath_twounicast}. We consider an arbitrary $m$-unicast network $B$ and construct a two-unicast-$Z$ network, where the zero-error achievability of rate $(m,m)$ in the two-unicast-$Z$ network necessarily requires the zero-error achievability of rate $(1,1, \ldots, 1)$ in the $m$-unicast network $B$.  Since there exists a $10$-unicast instance where linear codes are  insufficient to achieve rate $(1,1, \ldots, 1)$ with zero-error \cite{dougherty2006nonreversibility, dougherty2007matroids, Kamath_twounicast}, our construction implies that linear codes are insufficient to achieve rate $(10,10)$ in two-unicast-$Z$ networks with zero-error. Our construction is shown in Fig. \ref{fig:Zhard}; for simplicity, we describe our method for the special case of $m=2$. The random variables representing the symbol carried by each edge are defined as shown in Fig. \ref{fig:Zhard}.  
We formally state our result of the insufficiency of linear codes in the two-unicast-$Z$ networks in the following theorem.
\begin{theorem}
There exists a two-unicast-$Z$ instance in which rate $(10,10)$ is zero-error achievable by non-linear codes but  not zero-error achievable by any linear code. That is,  
    for the two-unicast-$Z$ network, non-linear codes outperform  linear codes and  linear codes are insufficient to achieve the zero-error capacity.
\end{theorem}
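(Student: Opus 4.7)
The plan is to follow the reduction template of \cite{Kamath_twounicast}: fix $m=10$ and let $B$ be any $m$-unicast instance for which rate $(1,1,\ldots,1)$ is zero-error achievable by non-linear codes but not by any linear code (such an instance exists by \cite{dougherty2006nonreversibility, dougherty2007matroids}). Embed $B$ inside the two-unicast-$Z$ instance shown in Fig. \ref{fig:Zhard}, which I will call $\mathcal{N}$, and then prove the two implications: (i) every zero-error linear scheme of rate $(m,m)$ on $\mathcal{N}$ induces a zero-error linear scheme of rate $(1,1,\ldots,1)$ on $B$, and (ii) any zero-error (possibly non-linear) scheme of rate $(1,1,\ldots,1)$ on $B$ lifts to a zero-error scheme of rate $(m,m)$ on $\mathcal{N}$. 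Combined with the assumed gap for $B$, these give the theorem.

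For step (i), I would set up the decoding identities at the two destinations using the random variables labelled on the edges of Fig. \ref{fig:Zhard}. Source $1$'s message $W_{1}=(W_{1,1},\ldots,W_{1,m})$ and source $2$'s message $W_{2}=(W_{2,1},\ldots,W_{2,m})$ are of block length $n$ each of rate $m$. The destinations that sit on the boundary of the embedded $B$-sub-network must recover their demands from the edges of $B$ after the cross-links of $\mathcal{N}$ have forced certain linear combinations to be carried; a careful entropy/dimension count on the cut-edges, together with the Source $2$ side-information at Destination $2$ (which lets us subtract contributions of $W_{1}$), will show that the only way a linear code on $\mathcal{N}$ can hit rate $(m,m)$ with zero error is if, on the $B$-subgraph, each of the $m$ internal source-destination pairs is decoded linearly with zero error. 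This is the standard ``simulation'' argument; linearity is preserved because both the encoders on the added edges outside $B$ and the decoders at $\mathcal{N}$'s destinations are linear, so the induced scheme on $B$ remains linear.

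For step (ii), I would construct an explicit scheme on $\mathcal{N}$ by letting the $m$-unicast non-linear code of $B$ handle the inner portion and then using routing/XOR-style combining on the outer edges of $\mathcal{N}$ so that Destination $1$ receives $W_{1}$ and Destination $2$ receives $W_{2}$ after cancelling its side information $W_{1}$. Because each inner source-destination pair is recovered with zero error in $B$, the outer combiners yield zero error at the two global destinations. This gives a non-linear scheme of rate $(m,m)$ for $\mathcal{N}$. Taking $m=10$ and invoking the known $10$-unicast gap finishes the proof.

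The main obstacle I expect is the forward simulation in step (i): proving that \emph{any} linear scheme of rate exactly $(m,m)$ must effectively route each of the $m$ ``pairs'' through $B$ independently, rather than using some clever mixing across pairs inside $B$ that still decodes at $\mathcal{N}$'s two destinations. This requires carefully identifying cut-edges in Fig. \ref{fig:Zhard} whose dimensions are pinned by the rate constraint, and then inverting the induced linear maps to exhibit an explicit rate-$(1,\ldots,1)$ linear scheme on $B$; the subtle point is showing that the invertibility holds over the \emph{same} field used by the original linear scheme on $\mathcal{N}$, so that the impossibility result for $B$ genuinely transfers. Once this ``linearity and field preservation'' is in hand, the rest is a direct combinatorial check on the structure of $\mathcal{N}$.
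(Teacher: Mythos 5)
Your overall plan matches the paper's: you pick the Kamath-style construction in Fig.~\ref{fig:Zhard}, reduce $(m,m)$ on $\mathcal{N}$ to $(1,\ldots,1)$ on an $m$-unicast network $B$, and invoke the known $10$-unicast gap. Your step~(ii) is literally what the paper does: set $Y_i = X_{1i}+X_{2i}$ and $V_i = Z_i - X_{2i}$ for an arbitrary group operation and run $B$'s code inside.

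The gap is in how you plan to prove step~(i). You restrict attention to \emph{linear} schemes and sketch an ``entropy/dimension count on cut-edges'' plus a concern about ``invertibility holding over the same field.'' That framing misidentifies the technical content. The paper does not prove step~(i) for linear codes separately; it proves that \emph{any} zero-error scheme of rate $(m,m)$ on $\mathcal{N}$ — linear or not — forces the inner edge functions on the $B$-subgraph to constitute a valid zero-error rate-$(1,\ldots,1)$ scheme for $B$. The argument is functional, not linear-algebraic: using the Dest-$1$ decoder (bijection from $(V_1,V_2)$ to $(X_{11},X_{12})$) and the Dest-$2$ decoder with side information $(X_{11},X_{12})$ (bijection onto $(X_{21},X_{22})$), one derives that the maps $f_i$ are per-coordinate bijective in $X_{1i}$ and in $X_{2i}$, that $(h_1,h_2)$ restricted appropriately is a bijection, and finally — by a contradiction using a ``free'' change $x_{22}\to x_{22}'$ that perturbs $Y_2$ but not $Y_1$ — that $h_1$ cannot depend on $Y_2$ (and symmetrically $h_2$ cannot depend on $Y_1$). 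This is what rules out the ``clever mixing inside $B$'' you were worried about, and it needs no assumption of linearity. Once you have this general equivalence, linearity of the induced $B$-scheme is automatic: the $B$-scheme consists of the very same edge functions $\mathcal{N}$ uses inside the $B$-subgraph, so if those are $\mathbb{F}$-linear on $\mathcal{N}$ they are $\mathbb{F}$-linear on $B$ — there is no separate ``field preservation'' step to prove. If you pursue your cut-edge/dimension route you will be re-deriving, in coordinates, the same bijectivity facts the paper obtains functionally, but with the extra burden of not having a clean way to exclude dependence of $h_1$ on $Y_2$ without essentially the same perturbation contradiction. I'd encourage you to prove the reverse reduction for arbitrary schemes (bijection argument) and then specialize; it is both simpler and strictly stronger.
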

\begin{proof}
 For arbitrary $m$, we give a construction in which the zero-error achievability of rate $(m,m)$ in the two-unicast-$Z$ network is equivalent to the zero-error achievability of  rate $(1,1, \cdots, 1)$ in the $m$-unicast network. Since   linear codes are insufficient and non-linear codes are required, in general, to achieve  rate $(1,1, \cdots, 1)$ in $m$-unicast networks for $m > 1$ \cite{ dougherty2007matroids}, our construction implies that linear codes are insufficient  and non-linear codes are required, in general, to achieve $(m,m)$ in two-unicast-$Z$ networks. Our construction is shown in Fig. \ref{fig:Zhard}, for the sake of illustration, we use $m=2$. 

We show that  rate $(1,1)$ is zero-error achievable in the two-unicast network $B$ if, and only if, rate $(2,2)$ is zero-error achievable in the overall two-unicast-$Z$ network; the same idea can be generalized for arbitrary $m$. First, if $(1,1)$ is zero-error achievable over alphabet $\mathcal{A}$ using an $n$ symbol extension in the two-unicast network $B$, then using this scheme for network $B$ in conjunction with setting $Y_{i}=X_{1i}+X_{2i}$ and $V_{i} = Z_{i}-X_{2i}$ is a valid zero-error achievability coding scheme for rate $(2,2)$ in the two-unicast-$Z$ network. Note that here $+$ represents an arbitrary group operation over $\mathcal{A}^{n}$ and $-$ represents its inverse.

For the other direction, let $(2,2)$ be zero-error achievable in the two-unicast-$Z$ network. This implies there exists  a finite alphabet $\mathcal{A}$, a positive integer $n$, and bijective function $f':\mathcal{A}^{2n} \rightarrow \mathcal{A}^{2n}$ between $(X_{11}, X_{12})$ and $(V_1, V_2)$ and, for every $X_{11},X_{12} \in \mathcal{A}^{n},$ bijective functions $g'_{X_{11},X_{12}}:\mathcal{A}^{2n} \rightarrow \mathcal{A}^{2n}$ between $(Z_1,Z_2)$ and $(X_{21},X_{22})$.  

Let $h:\mathcal{A}^{2n}\rightarrow \mathcal{A}^{2n}$ be the relation between the inputs and the outputs of the two-unicast network $B$. That is,   $(Z_1,Z_2)=h(Y_1,Y_2)=(h_1(Y_1,Y_2),h_2(Y_1,Y_2))$ where $h_1,h_2$ are projections of the output of of $h$ on the first and second coordinates respectively.  For a function $f$ on two variables $X,Y$, we use the notation $f|_{X=x}$ to be a function of $Y$ evaluated as $f(X=x, Y)$. Before we prove the result, we make some observations which are consequences of the achievability of $(2,2)$.

{\bf{(1)}} Note that $V_1 = g_1(h_1(f_1(X_{11},X_{21})),X_{21})$ and  $V_2 = g_2(h_2(f_2(X_{12},X_{22})),X_{22})$. Because there exists a bijection from $(V_1, V_2)$ to $(X_{11},X_{12}),$ irrespective of the values of $X_{21},X_{22}$ and because the alphabet of $V_1, V_2$ are each $\mathcal{A}^{n}$, it implies that  $(g_1, g_2)|_{(X_{21},X_{22})=(x_{21}, x_{22})}$ is a surjection on $X_{11},X_{12}$ for all $x_{21},x_{22} \in \mathcal{A}^{n}$. Since the domain of $(g_1, g_2)|_{(X_{21},X_{22})=(x_{21}, x_{22})}$ is $\mathcal{A}^{2n}$, we infer that $(g_1, g_2)|_{(X_{21},X_{22})=(x_{21}, x_{22})}$ is, in fact, a bijection.

{\bf{(2)}}  Using a similar argument as {\bf(1)}, we conclude that  $$\left(g_1(h_1(f_1(X_{11},X_{21})),X_{21})|_{(X_{21}, X_{22})=(x_{21},x_{22})}, g_2(h_2(f_2(X_{12},X_{22})),X_{22})|_{(X_{21}, X_{22})=(x_{21},x_{22})}\right)$$ is a bijection on $X_{11},X_{12}$ for all $x_{21}, x_{22}.$ This implies that, given $x_{11},x_{12}, x_{11}', x_{12}',x_{21},x_{22},$ where $(x_{11},x_{12}) \neq (x_{11}',x_{12}'),$  we have 
$$\left(g_1(h_1(f_1(x_{11},x_{21})),x_{21}), g_2(h_2(f_2(x_{12},x_{22})),x_{22})\right) \neq \left(g_1(h_1(f_1(x_{11}',x_{21})),x_{21}), g_2(h_2(f_2(x_{12}',x_{22})),x_{22})\right).$$

This implies that 
$$\left(f_1(x_{11},x_{21}),f_2(x_{12},x_{22})\right) \neq \left(f_1(x_{11}',x_{21}'), f_2(x_{12}',x_{22}')\right).$$

Thus, we conclude that $(f_1, f_2)|_{(X_{21},X_{22})=(x_{21}, x_{22})}$ is a bijection on $X_{11},X_{12}$ for all $x_{21},x_{22} \in \mathcal{A}^{n}$. 

{\bf{(3)}} Note that $Z_1 = h_1(f_1(X_{11},X_{21}), f_2(X_{12},X_{22}))$ and $Z_2=h_2(f_1(X_{11},X_{21}), f_2(X_{12},X_{22}))$. Because, for every $x_{11},x_{12}$ there exists a bijection from $(Z_1, Z_2)$ to $X_{21},X_{22},$ and because the alphabet of $Z_1, Z_2, Y_1, Y_2$ are each $\mathcal{A}^{n}$, it implies that  $(h_1, h_2)|_{(X_{11},X_{12})=(x_{11}, x_{12})}$ and $(f_1, f_2)|_{(X_{11},X_{12})=(x_{11}, x_{12})}$ are both surjections for all $x_{11},x_{12} \in \mathcal{A}^{n}$. Since $\mathcal{A}^{2n}$ is domain of functions $(h_1, h_2)|_{(X_{11},X_{12})=(x_{11}, x_{12})}$ and $(f_1, f_2)|_{(X_{11},X_{12})=(x_{11}, x_{12})}$, both these functions are, in fact, bijections.

{\bf{(4)}}
Because of {\bf{(2)}} and {\bf{(3)}}, for every $y_1, y_2, y_2' \in \mathcal{A}^{2n}$ where $y_2' \neq y_2$ , there exist $x_{11}, x_{12}, x_{21}, x_{22}, x_{22}'$ where $x_{22} \neq x_{22}'$ such that $y_1 = f_{1}(x_{11}, x_{21}), y_{2} = f_{2}(x_{12},x_{22}), y_{2}'=f_{2}(x_{12},x_{22}').$

We use properties {\bf{(1)}}-{\bf{(4)}} to show that $(1,1)$ is zero-error achievable in the $m$-unicast network $B$ for $m=2.$  To prove that $(1,1)$ is zero-error achievable in $B$, we need to prove that both destinations of network $B$ are satisfied with zero  error probability. In other words, we prove that  $h_1(Y_1,Y_2)=\tilde{h}_{1}(Y_1)$ for all $Y_1, Y_2$,  where $\tilde{h}_{1}:\mathcal{A}^{n}\rightarrow \mathcal{A}^{n}$ is a  bijective function between $Y_1$ and $Z_1$. Similarly, we show that $h_2(Y_1, Y_2) = \tilde{h}_{2}(Y_2)$ for all $Y_1,Y_2$, where $\tilde{h}_{2}$ is a  bijection between $Y_2$ and $Z_2$. 

First, we prove that ${h}_i$ depends only on $Y_i$, $i \in \{1,2\}$, that is, we show that $h_1(Y_1, Y_2) = h_1|_{Y_2=y_2}(Y_1)$ and $h_2(Y_1, Y_2) = h_2|_{Y_1=y_1}(Y_2)$ for all $y_1, y_2 \in \mathcal{A}^{n}.$ We show the result for $h_1$, the result for $h_2$ follows by symmetry. Suppose for the sake of contradiction, there exists $y_1, y_{2},y_{2}'$ such that $h_1|_{Y_2=y_2} (y_1) \neq h_1|_{Y_2=y_2'} (y_1).$ By property {\textbf{(4)}} there exist $x_{11}, x_{12}, x_{21}, x_{22}, x_{22}'$ where $x_{22} \neq x_{22}'$ such that $y_1 = f_{1}(x_{11}, x_{21}), y_{2} = f_{2}(x_{12},x_{22}), y_{2}'=f_{2}(x_{12},x_{22}').$

Since $h_1(y_1, y_2) \neq h_1(y_1, y_2')$, by property {\bf{(1)}}, we have $g_1(h_1(y_1, y_2), x_{21}) \neq g_1(h_1(y_1, y_2'), x_{21})$. 

We have thus found $x_{11},x_{12},x_{21},x_{22}'$ such that $$g_1(h_1( f_1(x_{11},x_{21}), f_{2},(x_{12},x_{22})), x_{21}) \neq g_1(h_1( f_1(x_{11},x_{21}), f_{2},(x_{12},x_{22}')), x_{21})$$

This implies that the end-to-end function from $(V_1, V_2)$ to $(X_{11},X_{12})$ is not a bijection, which is a contradiction.

We have thus shown that $h_1(Y_1, Y_2) = h_1|_{Y_2=y_2}(Y_1)$ for all $y_2 \in \mathcal{A}^{n}$ and, by symmetry, $h_2(Y_1, Y_2) = h_2|_{Y_1=y_1}(Y_2)$ for all $ y_1 \in \mathcal{A}^{n}.$ It remains to show that $h_1|_{Y_2=y_2}(Y_1)$ is bijective. It suffices to show that $h_1|_{Y_2=y_2}(Y_1)$ is surjective, since the domain and co-domain of the function are both $\mathcal{A}^{n}.$

Because of property {\bf{(3)}}, the function $(h_1, h_2)|_{(X_{11},X_{12})=(x_{11}, x_{12})}$ is bijective on $X_{21},X_{22}$. Since we have shown that $h_2$ does not depend on $Y_1,$ it does not depend on $X_{11},X_{21}.$ This means, $h_{2}|_{(X_{11},X_{12})=(x_{11}, x_{12})}$ does not depend on $X_{21},$ i.e, $h_{2}|_{(X_{11},X_{12}, X_{22})=(x_{11}, x_{12}, x_{22})}$ is a constant for all $x_{11},x_{12},x_{22}.$ Therefore, it has to be the case that $h_{1}|_{(X_{11}, X_{12},X_{22})=(x_{11},x_{12},x_{22})}$ is bijective on $X_{21}$ for all $x_{11},x_{12},x_{22}$. The function $h_{1}|_{Y_2=y_2}(Y_1)$ must be surjective, since by simply letting $X_{21}$ take all the values in $\mathcal{A}^{n},$ the function $ h_{1}|_{(X_{11}, X_{12},X_{22})=(x_{11},x_{12},x_{22})} = h_{1}|_{(Y_2, X_{11})=(y_2,x_{11})}$ must be able to evaluate to all values in $\mathcal{A}^{n}.$ 

This completes the proof.

\end{proof}

\section{Conclusion and Open Question}\label{sec:conc}
In this paper, we show that the generalized network sharing bound is not tight for two-unicast-$Z$ networks. In addition, we show that, for the two-unicast-$Z$ network, vector linear codes outperform scalar linear codes and non-linear codes outperform linear codes. Another contribution of this paper is introducing a commutative algebraic approach to deriving linear network coding achievability results. This commutative  algebraic approach is demonstrated by providing an alternate proof to the result of C. Wang et. al., I. Wang et. al. and Shenvi et. al. regarding the achievability of rate $(1,1)$ in the network.  

As this paper establishes a relation between the problem of solvability of networks and an equivalent commutative algebraic problem. 
An open question to this work includes exploring further the power of the developed commutative algebraic approach in deriving new feasibility results for different multiple unicast networks, e.g., the two-unicast network.
The two-unicast network has two independent message sources and two destinations, where each destination is interested in one of the two sources. Unlike two-unicast-$Z$ networks, destinations has no apriori side information of any sources in two-unicast networks. Fig. \ref{fig:2unicast} depicts a two-unicast network.
\begin{figure}[ht]
    \centering
    \includegraphics[scale=0.3]{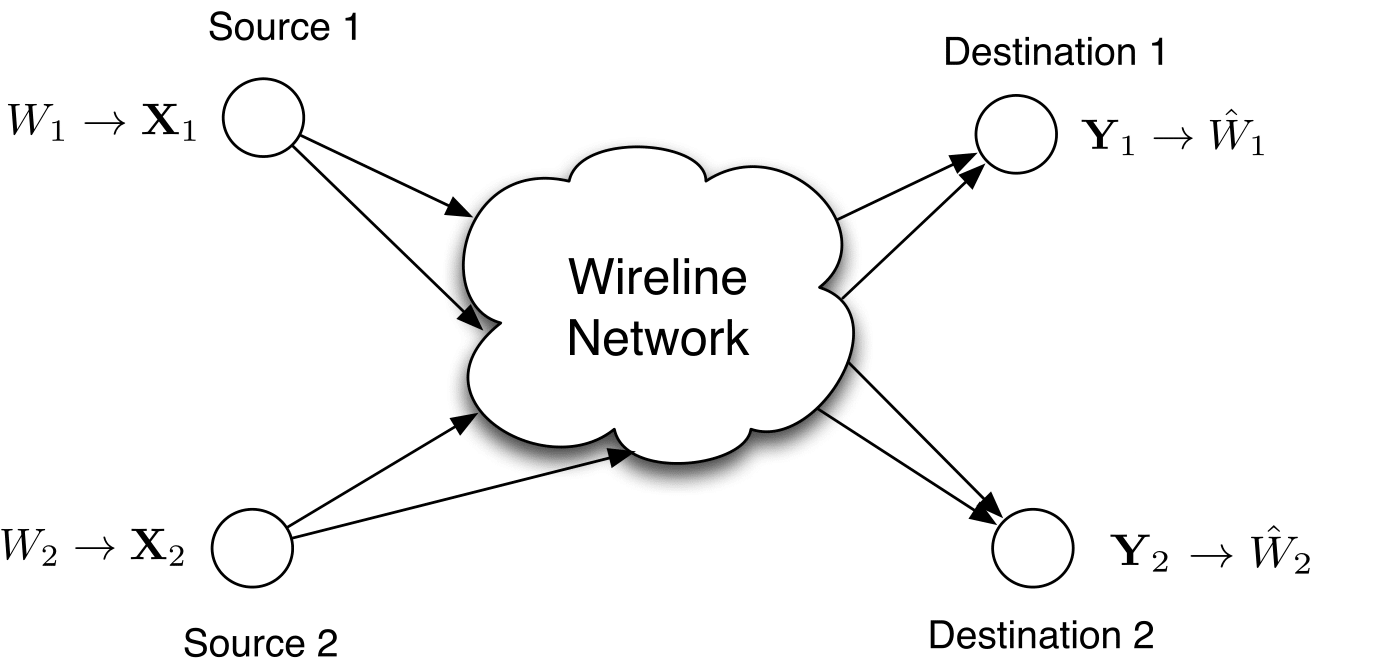}
    \caption{A two-unicast network}
    \label{fig:2unicast}
\vspace{1pt}
\end{figure}
The rate $(1,1)$ in the two-unicast network is achievable if, and only if, there exist polynomials $P_1,P_2$ such that 
\begin{equation} P_1  \mathbf{G}_{1,2} + P_2 \mathbf{G}_{2,1}\label{eq:2un}=\left(\mathbf{G}_{1,1}\mathbf{G}_{2,2}\right)^{L},\end{equation}
for some positive integer $L$.  

The network decomposition lemma proposed in this work applies to two-unicast networks and can be used to \emph{homogenize} the right hand side of (\ref{eq:2un}).  Since the two-unicast network has two interference components, these two components show up in the left hand side of (\ref{eq:2un}). Recalling that the two-unicast-$Z$ network has only one interference component where we were able to conclude the achievability of rate $(1,1)$ based on the non-homogeneity of the interference polynomial, where  monomials had different sum-degrees (Lemma \ref{rmk:homog}), coming up with  a similar conclusion in the presence of multiple interference components is not trivial.
Namely, for the two-unicast network, the challenging problem is to deduce similar degree bounds on the two interference components $\mathbf{G}_{1,2}, \mathbf{G}_{2,1}$ and polynomials $P_1,P_2$  in order to solve the two-unicast network. Such degree bounds may be obtained by investigating graded rings and effective Nullstellensatz.  Solving the two-unicast network using our algebraic perspective will open the way for solving other multiple unicast networks with multiple interference.

\vspace{5pt}

\bibliography{thesis}
\bibliographystyle{abbrv}
\appendices
\section*{Appendix A \\ General Network Decomposition}
In this appendix, we provide a general network decomposition theorem for a  $(\mathcal{G},\mathcal{S}_1,\mathcal{T}_1,\mathcal{S}_2, \mathcal{T}_2)$ two-unicast-$Z$ network with respect to an arbitrary edge subset $\mathcal{U} \subseteq \mathcal{E}$. Before presenting the theorem, we first provide the following definitions.
\begin{definition}[Restricted Transfer matrix]
Consider a DAG $\mathcal{G}=(\mathcal{V},\mathcal{E})$, let $\mathcal{S}'=\{s'_1,s'_2, \cdots, s'_m\}$, $\mathcal{T}'=\{t'_1,t'_2, \cdots, t'_n\}$, $\mathcal{U}=\{u_1,u_2, \cdots, u_k\}$ be any three subsets of $\mathcal{E}$. The transfer matrix $\mathbf{M}^{\mathcal{U}}_{(\mathcal{S}',\mathcal{T}')}$ is defined as the $m \times n$  matrix whose  entry at the index  $(i,j)$ is $\sum\limits_{p: s'_{i} \rightarrow t'_{j} \text{ via } \mathcal{U}} \hspace{-3mm}w(p)$.
\end{definition}
\begin{definition}[Restricted network transfer matrix]
 Consider a $(\mathcal{G},\mathcal{S}_1,\mathcal{T}_1,\mathcal{S}_2, \mathcal{T}_2)$ two-unicast-$Z$ network. Let $\mathcal{U} \subseteq \mathcal{E}$, the restricted network transfer matrix with respect to $\mathcal{U}$ is defined as $\mathbf{M}^{\mathcal{U}}_{(\mathcal{S}_1\cup\mathcal{S}_2,\mathcal{T}_1\cup\mathcal{T}_2)}$.
\end{definition}
\begin{definition}[Destinations-excluded transfer matrix]
Consider a DAG $\mathcal{G}=(\mathcal{V},\mathcal{E})$, let $\mathcal{S}'=\{s'_1,s'_2, \cdots, s'_m\}$ and $\mathcal{T}'=\{t'_1,t'_2, \cdots, t'_n\}$ be any two subsets of $\mathcal{E}$. For any $i,j \in\{1,\cdots, n\}$ such that $i<j$, $\Ord(t'_i)<\Ord(t'_j)$. The destinations-excluded transfer matrix $\mathbf{M}'_{(\mathcal{S}',\mathcal{T}')}$ is defined as the $m \times n$  matrix whose entry at the index  $(i,j)$ is $\sum\limits_{p: s'_{i} \rightarrow t'_{j} \backslash \{t'_k\}_{k<j}} \hspace{-3mm}w(p)$.
\end{definition}
\begin{definition}[Sources-excluded transfer matrix]
Consider a DAG $\mathcal{G}=(\mathcal{V},\mathcal{E})$, let $\mathcal{S}'=\{s'_1,s'_2, \cdots, s'_m\}$ and $\mathcal{T}'=\{t'_1,t'_2, \cdots, t'_n\}$ be any two subsets of $\mathcal{E}$. For any $i,j \in\{1,\cdots, m\}$ such that $i<j$, $\Ord(s'_i)<\Ord(s'_j)$. The sources-excluded transfer matrix $\mathbf{M}''_{(\mathcal{S}',\mathcal{T}')}$ is defined as the $m \times n$  matrix whose entry at the index  $(i,j)$ is $\sum\limits_{p: s'_{i} \rightarrow t'_{j}  \backslash \{s'_k\}_{k>i}} \hspace{-3mm}w(p)$.
\end{definition}
\begin{theorem}[General network decomposition]
\label{lem:netdecomp0}
Consider a $(\mathcal{G},\mathcal{S}_1,\mathcal{T}_1,\mathcal{S}_2, \mathcal{T}_2)$ two-unicast-$Z$ network and let  $\mathcal{U} \subseteq \mathcal{E}$. 
 $\mathbf{M}^{\mathcal{U}}_{(\mathcal{S}_1\cup\mathcal{S}_2,\mathcal{T}_1\cup\mathcal{T}_2)}=\mathbf{M}' _{(\mathcal{S}_1\cup\mathcal{S}_2,\mathcal{U})}\mathbf{\Lambda}^\mathcal{U} \mathbf{M}''_{(\mathcal{U},\mathcal{T}_1\cup\mathcal{T}_2)} $.
\end{theorem}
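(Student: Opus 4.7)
\section*{Proof Proposal}

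The plan is to prove the matrix identity entry by entry, realizing each $(s,t)$-entry of the right-hand side as a weighted count over decompositions of $s\rightarrow t$ paths through $\mathcal{U}$, and then matching that count against the definition of the $(s,t)$-entry of the left-hand side. The central combinatorial fact on which everything rests is that, because $\Ord$ is a topological ordering of $\mathcal{E}$, the edges of $\mathcal{U}$ visited along any single path appear in the order induced by $\Ord$ on $\mathcal{U}$; this is what allows the sided exclusion conditions in $\mathbf{M}'$ and $\mathbf{M}''$ to be paired up with the canonical ``first-and-last $\mathcal{U}$-visit'' decomposition of a path. Concretely, I would fix $s\in\mathcal{S}_1\cup\mathcal{S}_2$ and $t\in\mathcal{T}_1\cup\mathcal{T}_2$, write $\mathcal{U}=\{u_1,\ldots,u_k\}$ ordered so that $\Ord(u_i)<\Ord(u_j)$ whenever $i<j$, and expand the $(s,t)$-entry of $\mathbf{M}'_{(\mathcal{S}_1\cup\mathcal{S}_2,\mathcal{U})}\,\mathbf{\Lambda}^{\mathcal{U}}\,\mathbf{M}''_{(\mathcal{U},\mathcal{T}_1\cup\mathcal{T}_2)}$ into the triple sum
\begin{equation*}
\sum_{1\leq i\leq j\leq k}\ \Bigl(\sum_{p_1:\,s\rightarrow u_i\backslash\{u_l\}_{l<i}}\!\!\! w(p_1)\Bigr)\Bigl(\sum_{p_2:\,u_i\rightarrow u_j}\!\!\! w(p_2)\Bigr)\Bigl(\sum_{p_3:\,u_j\rightarrow t\backslash\{u_l\}_{l>j}}\!\!\! w(p_3)\Bigr),
\end{equation*}
with the convention that when $i=j$ the middle sum reduces to the single one-edge path $(u_i)$ of weight $1$, which is exactly the unit diagonal of $\mathbf{\Lambda}^{\mathcal{U}}$.

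The core step is to exhibit a weight-preserving bijection between the indexing triples $(p_1,p_2,p_3)$ and the paths from $s$ to $t$ via $\mathcal{U}$. In the forward direction, splicing $p_1$, $p_2$, $p_3$ at the shared endpoint edges $u_i$ and $u_j$ (each counted once in the glued path) yields a valid $s\rightarrow t$ path through $\mathcal{U}$; since the weight of a path is a product of local coding coefficients over consecutive-edge pairs, and no such pair is lost or duplicated by the splicing, one obtains $w(p_1)\,w(p_2)\,w(p_3)$ equal to the weight of the glued path. For the inverse direction, given any $s\rightarrow t$ path $p$ through $\mathcal{U}$, I would list its $\mathcal{U}$-edges in order of appearance as $u_{i_1},u_{i_2},\ldots,u_{i_m}$; the topological-ordering property forces $i_1<i_2<\cdots<i_m$, so the prefix of $p$ up to $u_{i_1}$ automatically avoids every $u_l$ with $l<i_1$ (any earlier $\mathcal{U}$-edge on the prefix would be topologically before $u_{i_1}$) and, symmetrically, the suffix from $u_{i_m}$ onward avoids every $u_l$ with $l>i_m$. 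Setting $i=i_1$, $j=i_m$, and letting $p_1,p_2,p_3$ be the three corresponding segments of $p$ produces the unique preimage triple.

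The main obstacle is a careful bookkeeping one: one must verify (i) that the splicing assigns each local coefficient of the glued path to exactly one of $w(p_1),w(p_2),w(p_3)$, which amounts to observing that the junction edges $u_i$ and $u_j$ appear once in the spliced path so that no consecutive-edge pair is double-counted or dropped at the seams, and (ii) that the topological-ordering argument genuinely converts the one-sided exclusions $\{u_l\}_{l<i_1}$ and $\{u_l\}_{l>i_m}$ into the ``first/last $\mathcal{U}$-visit'' description of the segments of $p$, which is where the hypothesis that $\mathcal{G}$ is a DAG is really used. Once these are in hand, summing the weights along the bijection gives $[\mathbf{M}'\mathbf{\Lambda}^{\mathcal{U}}\mathbf{M}'']_{s,t}=\sum_{p:\,s\rightarrow t\text{ via }\mathcal{U}} w(p)=[\mathbf{M}^{\mathcal{U}}_{(\mathcal{S}_1\cup\mathcal{S}_2,\mathcal{T}_1\cup\mathcal{T}_2)}]_{s,t}$, and since $s,t$ were arbitrary the theorem follows. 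The special case recorded as Lemma~\ref{lem:netdecomp} is then recovered by specializing to $\mathcal{U}=\mathcal{C}_{GNS}=\{e_1,e_2\}$ and restricting to the sub-matrices of interest.
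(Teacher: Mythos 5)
Your proof is correct, and it is genuinely more general than the one in the paper. The paper's Appendix~A proof proves the identity only for $|\mathcal{U}|=2$ (``for simplicity, let $\mathcal{U}=\{u_1,u_2\}$'') via direct algebraic splitting of the path sum by which $\mathcal{U}$-edge comes first, leaving the general case implicit. Your bijection handles arbitrary $|\mathcal{U}|$ in one stroke by pairing each triple $(p_1,p_2,p_3)$ with the canonical first/last $\mathcal{U}$-visit decomposition of a spliced path. The point you flag as the ``main obstacle'' is indeed the crux and you resolve it correctly: because $\Ord$ is a topological order, an $s\to u_i$ path cannot pass through any $u_l$ with $l>i$ (such a $u_l$ on the path would give $u_l\to u_i$, contradicting $\Ord(u_l)>\Ord(u_i)$), so the one-sided exclusion $\backslash\{u_l\}_{l<i}$ in $\mathbf{M}'$ already forces $u_i$ to be the unique $\mathcal{U}$-edge on $p_1$; dually $u_j$ is the unique $\mathcal{U}$-edge on $p_3$, while $p_2$ is free to pass through $u_l$ with $i<l<j$. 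This makes the inverse map (take the first and last $\mathcal{U}$-edge of $p$) genuinely inverse to splicing, and the weight factorization $w(p)=w(p_1)w(p_2)w(p_3)$ follows since the three segments share only the junction edges $u_i,u_j$, so every consecutive-edge pair of $p$ lands in exactly one segment. The $i=j$ convention matches the unit diagonal of $\mathbf{\Lambda}^{\mathcal{U}}$. Both arguments are the same combinatorial idea, but yours is the complete, arbitrary-$k$ version, which is a clean improvement on what the paper actually writes down.
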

\begin{figure}[ht]
    \centering
    \includegraphics[scale=0.55]{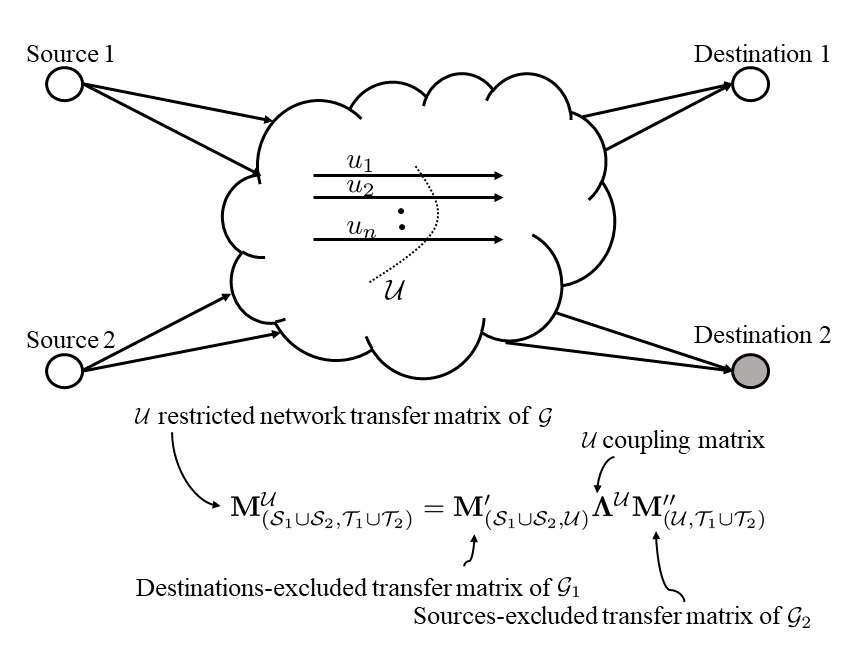}
    \caption{Network decomposition with respect to the subset $\mathcal{U} \subseteq \mathcal{E}$.  }
    \label{fig:decomp}
\vspace{0pt}
\end{figure}

\begin{proof}
A pictorial description of the theorem  is shown in Fig. \ref{fig:decomp}.

 For simplicity, let $\mathcal{U}=\{u_1,u_2\}$ where $\operatorname{Ord}(u_1) < \operatorname{Ord}(u_2)$. Let $s$ and $t$ be two edges $\in \mathcal{E}$, then $\sum\limits_{p: s \rightarrow t \text{ via } \mathcal{U}} \hspace{-3mm} w(p)$ can be decomposed as:
\vspace{0mm}
\begin{align}\label{dec1}
\sum\limits_{p: s \rightarrow t \text{ via }\mathcal{U}} \hspace{-3mm} w(p) &= \sum\limits_{p: s \rightarrow t \hspace{1mm} \text{via} \hspace{1mm} u_1} \hspace{-3mm} w(p) + \sum\limits_{p: s \rightarrow t \hspace{1mm} \text{via} \hspace{1mm} u_2 \backslash u_1} \hspace{-3mm} w(p)\notag\\
&=\sum\limits_{p: s \rightarrow u_1} \hspace{-0mm} w(p) \sum\limits_{p: u_1 \rightarrow t} \hspace{-0mm} w(p)+ \sum\limits_{p: s \rightarrow  u_2 \backslash u_1} \hspace{-3mm} w(p) \sum\limits_{p: u_2 \rightarrow t} \hspace{-0mm} w(p)\notag \\
&= \left(\begin{array}{ccc}  \sum\limits_{p: s \rightarrow u_1} \hspace{-0mm} w(p) &  \sum\limits_{p: s \rightarrow  u_2 \backslash u_1} \hspace{-0mm} w(p) \end{array}\right)\left(\begin{array}{ccc}     \sum\limits_{p: u_1 \rightarrow t} \hspace{-0mm} w(p)  \\    \sum\limits_{p: u_2 \rightarrow t} \hspace{-0mm} w(p)   \end{array}\right)
\end{align}
In addition, $ \sum\limits_{p: u_1 \rightarrow t} \hspace{-0mm} w(p) $ can be expressed as:
\begin{align}
\sum\limits_{p: u_1 \rightarrow t} \hspace{-0mm} w(p) &= \sum\limits_{p: u_1 \rightarrow t \hspace{1mm}  \backslash u_2} \hspace{-3mm} w(p) + \sum\limits_{p: u_1 \rightarrow t \hspace{1mm} \text{via} \hspace{1mm} u_2 } \hspace{-3mm} w(p) \notag\\
&= \sum\limits_{p: u_1 \rightarrow t \hspace{1mm}  \backslash u_2} \hspace{-3mm} w(p) + \sum\limits_{p: u_1 \rightarrow u_2} \hspace{-0mm} w(p)\sum\limits_{p: u_2 \rightarrow t \hspace{1mm}} \hspace{-0mm} w(p)
\end{align} 
Then, we have
\begin{align}
\left(\begin{array}{ccc}     \sum\limits_{p: u_1 \rightarrow t} \hspace{-0mm} w(p)  \\    \sum\limits_{p: u_2 \rightarrow t} \hspace{-0mm} w(p)  \end{array}\right) =& \left(\begin{array}{ccc} 
1 & \sum\limits_{p: u_{1} \rightarrow u_{2} } \hspace{-0mm}w(p)  \\
0 & 1 
\end{array}\right)\left(\begin{array}{ccc} 
\sum\limits_{p: u_{1} \rightarrow t \backslash u_{2}} \hspace{-3mm}w(p)\\
\sum\limits_{p: u_{2} \rightarrow t} \hspace{-0mm}w(p) 
\end{array}\right)
\end{align}
Hence, (\ref{dec1}) can be written as:
\begin{align}
\sum\limits_{p: s \rightarrow t \text{ via } \mathcal{U} } \hspace{-0mm} w(p) &=  \left(\begin{array}{ccc}  \sum\limits_{p: s \rightarrow u_1} \hspace{-0mm} w(p) &  \sum\limits_{p: s \rightarrow  u_2 \backslash u_1} \hspace{-3mm} w(p) \end{array}\right)\left(\begin{array}{ccc} 
1 & \sum\limits_{p: u_{1} \rightarrow u_{2} } \hspace{-0mm}w(p)  \\
0 & 1  \\ 
\end{array}\right)\left(\begin{array}{ccc} 
\sum\limits_{p: u_{1} \rightarrow t \backslash u_{2}} \hspace{-3mm}w(p)\\
\sum\limits_{p: u_{2} \rightarrow t } \hspace{-0mm}w(p)
\end{array}\right)
\end{align}
From the last equation, it is clear that $\mathbf{M}^{\mathcal{U}}_{(\mathcal{S}_1\cup\mathcal{S}_2,\mathcal{T}_1\cup\mathcal{T}_2)}=\mathbf{M}' _{(\mathcal{S}_1\cup\mathcal{S}_2,\mathcal{U})}\mathbf{\Lambda}^\mathcal{U} \mathbf{M}''_{(\mathcal{U},\mathcal{T}_1\cup\mathcal{T}_2)} $.
\end{proof}
\begin{remark}
In this paper,  we consider $\mathcal{U}=\mathcal{C}_{GNS}$, where $\mathcal{C}_{GNS}$ is a GNS cut set in the $(\mathcal{G},\mathcal{S}_1,\mathcal{T}_1,\mathcal{S}_2, \mathcal{T}_2)$ two-unicast-$Z$ network of size two. Specifically, $\mathcal{C}_{GNS}=\{e_1,e_2\}$ where $\Ord(e_1) < \Ord(e_2)$. In addition, for simplicity, we write $\mathbf{M}$, $\mathbf{M}_1$, $\mathbf{\Lambda}$, and $\mathbf{M}_2$ to denote $\mathbf{M}^{\mathcal{U}}_{(\mathcal{S}_1\cup\mathcal{S}_2,\mathcal{T}_1\cup\mathcal{T}_2)}$, $\mathbf{M}' _{(\mathcal{S}_1\cup\mathcal{S}_2,\mathcal{U})}$,  $\mathbf{\Lambda}^\mathcal{U}$, and $\mathbf{M}''_{(\mathcal{U},\mathcal{T}_1\cup\mathcal{T}_2)}$, respectively. 
\end{remark}

\section*{Appendix B \\ Proof of Lemma \ref{lem:netdecomp}}

Here, we prove Lemma \ref{lem:netdecomp}.  We need to prove that for  a $(\mathcal{G},\mathcal{S}_1,\mathcal{T}_1,\mathcal{S}_2, \mathcal{T}_2)$ two-unicast-$Z$ network  with GNS cut set  $\mathcal{C}_{GNS}=\{e_1,e_2\}$,  $\operatorname{Ord}(e_1) < \operatorname{Ord}(e_2)$, we have
\begin{enumerate}[label=(\alph*)]
\item $\mathbf{M}=\mathbf{M_1}\mathbf{\Lambda}\mathbf{M_2}$, where matrices $\mathbf{M},\mathbf{M}_{1},\mathbf{M}_{2},\mathbf{\Lambda}$ are defined by equation (\ref{eq:Ms}).
	\item In graph $\mathcal{G}_{1}$, the network transfer matrix from the source $\{s_{1},s_2\}$ to edges $\{e_1,e_2\}$ is $\mathbf{M}_{1} \mathbf{\Lambda}.$ In graph $\mathcal{G}_{2}$, the network transfer matrix from $\{e_{1},e_{2}\}$ to $\{t_{1},t_{2}\}$ is $\mathbf{\Lambda}\mathbf{M}_{2}$, where $\mathcal{G}_1$ and $\mathcal{G}_2$ are the graphs of the left-side network and right-side network, respectively, with respect to $\mathcal{C}_{GNS}$.
	\item Let $\mathbf{F}_{1} \subset \bar{\mathbf{F}}$ be the set of variables in  $ \sum\limits_{p: s_i \rightarrow e_1} \hspace{-3mm} w(p) )$, $i \in \{1,2\}$, and let $\mathbf{F}_2 \subset \bar{\mathbf{F}}$ be the set of variables in $\mathbf{M}_{2}$, we have $\mathbf{F}_{1} \cap \mathbf{F}_{2} = \phi$. 
	
	\item  Let $\mathbf{F}_{1} \subset \bar{\mathbf{F}}$ be the set of variables in  $ \mathbf{M}_1$, and let $\mathbf{F}_2 \subset \bar{\mathbf{F}}$ be the set of variables in $\sum\limits_{p: e_2 \rightarrow t_i} \hspace{-3mm} w(p)$, $i \in \{1,2\}$, we have $\mathbf{F}_{1} \cap \mathbf{F}_{2} = \phi$. 
	
	\item  Let $\mathbf{F}_{1} \subset \bar{\mathbf{F}}$ be the set of variables in  $ \mathbf{M}_1$, and let $\mathbf{F}_2 \subset \bar{\mathbf{F}}$ be the set of variables in $\mathbf{M}_2$. If there are no $e_1 \rightarrow e_2$ paths in $\mathcal{G}$, then $\mathbf{F}_{1} \cap \mathbf{F}_{2} = \phi$. 
	
	
	\end{enumerate}  
\begin{proof}
(a) The proof of this part follows from Theorem \ref{lem:netdecomp0}, where $\mathcal{U}=\mathcal{C}_{GNS}$.

(b) The proof of this part follows from Theorem \ref{lem:netdecomp0}.

(c) Here, we aim to show that the indeterminate variables that occur in polynomials of $\sum\limits_{p: s_i \rightarrow e_1} \hspace{-3mm} w(p)$, $i\in\{1,2\}$, do not occur in the polynomials of $\mathbf{M}_{2}$ and vice-versa. Let $m_1$ be a monomial that occurs in some polynomial in $\sum\limits_{p: s_i \rightarrow e_1} \hspace{-3mm} w(p)$, $i\in\{1,2\}$. Similarly, let  $m_2$ be a monomial that occurs in some polynomial in $\mathbf{M}_{2}$. It suffices to show that the variables in $m_1$ do not occur in the variables of ${m}_{2}$ and vice-versa. If possible, let $\beta_{e_a,e_b}$ - the local coding coefficient from edge $e_a$ to edge $e_b$ be a variable that occurs in both $m_1$ and $m_2$. We show a contradiction that precludes the existence of $\beta_{e_a,e_b}$.

Notice that $m_1$ is of the form $w(p_1)$ where $p_1$ is some $s_{i}\rightarrow e_1$ path, for some $i \in \{1,2\}$. In addition, from (\ref{eq:Ms}), $m_2$ is of the form $w(p_2)$ where $p_2$ is some $e_{2}\rightarrow t_j$ path, or some $e_{1} \rightarrow t_{j} \backslash e_2$ path, for some $j \in \{1,2\}$. Because $\beta_{e_a, e_b}$ occurs in $w(p_1),$ edges $e_a, e_b$ occur in path $p_1$ which begins at $s_i$ and ends at $e_1$. Therefore, the topological order of $e_a$ is strictly smaller than the topological order of $e_1$. Moreover, because   $\beta_{e_a, e_b}$ occurs in $w(p_2)$ where $p_2$ is some $e_{2}\rightarrow t_j$ path with $\Ord(e_2) > \Ord(e_1)$, or some $e_{1} \rightarrow t_{j} \backslash e_2$ path, for some $j \in \{1,2\}$,  edges $e_a, e_b$ occur in path $p_2$ which begins at $e_2$ and ends at $t_j$ or begins at $e_1$ and ends at $t_j$. Therefore, the topological order of $e_a$ is at least the topological order of $e_1$. Since the topological order of $e_a$ cannot be strictly smaller than the topological order of $e_1$ and at least the topological order of $e_1$  simultaneously, we conclude that such a $\beta_{e_a,e_b}$ variable cannot occur, contradicting our previous assumption.

(d)  Here, we aim to show that the indeterminate variables that occur in polynomials of $\mathbf{M}_1$ do not occur in the polynomials of $\sum\limits_{p: e_2 \rightarrow t_i} \hspace{-3mm} w(p)$, $i\in\{1,2\}$ and vice-versa. Let $m_1$ be a monomial that occurs in some polynomial in $\mathbf{M}_1$. Similarly, let  $m_2$ be a monomial that occurs in some polynomial in $\sum\limits_{p: e_2 \rightarrow t_i} \hspace{-3mm} w(p)$, $i\in\{1,2\}$. It suffices to show that the variables in $m_1$ do not occur in the variables of ${m}_{2}$ and vice-versa. If possible, let $\beta_{e_a,e_b}$ - the local coding coefficient from edge $e_a$ to edge $e_b$ be a variable that occurs in both $m_1$ and $m_2$. We show a contradiction that precludes the existence of $\beta_{e_a,e_b}$. 

Notice that, from (\ref{eq:Ms}), $m_1$ is of the form $w(p_1)$ where $p_1$ is some $s_{i}\rightarrow e_1$ path, or some $s_{i} \rightarrow e_{2} \backslash e_1$ path, for some $i\in \{1,2\}$. In addition, $m_2$ is of the form $w(p_2)$ where $p_2$ is some $e_{2}\rightarrow t_j$ path, for some $j \in \{1,2\}$.   Because   $\beta_{e_a, e_b}$ occurs in $w(p_1)$ where $p_1$ is some $s_{i}\rightarrow e_1$ path with $\Ord(e_1) < \Ord(e_2)$, or some $s_{i} \rightarrow e_{2} \backslash e_1$ path, for some $i \in \{1,2\}$, edges $e_a, e_b$ occur in path $p_1$ which begins at $s_i$ and ends at $e_1$ or begins at $s_i$ and ends at $e_2$, $i \in \{1,2\}$. Therefore, the topological order of $e_b$ is at most the topological order of $e_2$. Moreover, because $\beta_{e_a, e_b}$ occurs in $w(p_2),$ edges $e_a, e_b$ occur in path $p_2$ which begins at $e_2$ and ends at $t_i$, $i \in \{1,2\}$. Therefore, the topological order of $e_b$ is strictly larger than the topological order of $e_2$. Since the topological order of $e_b$ cannot be at most the topological order of $e_2$ and strictly larger than the topological order of $e_2$ simultaneously, we conclude that such a $\beta_{e_a,e_b}$ variable cannot occur, contradicting our previous assumption. 

(e)  Here, we aim to show that the indeterminate variables that occur in polynomials of $\mathbf{M}_1$ do not occur in the polynomials of $\mathbf{M}_{2}$ and vice-versa. Let $m_1$ be a monomial that occurs in some polynomial in $\mathbf{M}_1$. Similarly, let  $m_2$ be a monomial that occurs in some polynomial in $\mathbf{M}_{2}$. It suffices to show that the variables in $m_1$ do not occur in the variables of ${m}_{2}$ and vice-versa. If possible, let $\beta_{e_a,e_b}$ - the local coding coefficient from edge $e_a$ to edge $e_b$ be a variable that occurs in both $m_1$ and $m_2$. We show a contradiction that precludes the existence of $\beta_{e_a,e_b}$. We have the following cases.

Case 1: $p_1$ is some $s_i \rightarrow e_1$, $i \in\{1,2\}$. In this case, from part (c), $m_1$ and $m_2$ do not share any variables.

Case 2:  $p_2$ is some $e_2 \rightarrow t_i$, $i \in \{1,2\}$. In this case, from part (d), $m_1$ and $m_2$ do not share any variables.

Case 3: $p_1$ is an $s_{i} \rightarrow e_2 \backslash e_1$ path and $p_2$ is a $e_{1} \rightarrow t_j\backslash e_2$ path, for some $i,j \in \{1,2\}$. In this case, we show a contradiction that precludes the existence of $\beta_{e_a,e_b}$. Notice that $m_1$ is of the form $w(p_1)$ where $p_1$ is some $s_{i}\rightarrow e_2 \backslash e_1$ path,  for some $i\in \{1,2\}$. In addition, $m_2$ is of the form $w(p_2)$ where $p_2$ is some $e_{1}\rightarrow t_j \backslash e_2$ path, for some $j \in \{1,2\}$.   Because   $\beta_{e_a, e_b}$ occurs in $w(p_2)$ where $p_2$ is some $e_{1}\rightarrow  t_j \backslash e_2$ path, for some $j \in \{1,2\}$, edges $e_a, e_b$ occur in path $p_2$ which begins at $e_1$ and ends at $t_j$. Therefore there exists an $e_1 \rightarrow e_a$ path. In addition, because   $\beta_{e_a, e_b}$ occurs in $w(p_1)$ where $p_1$ is some $s_{i}\rightarrow e_2 \backslash e_1$ path, for some $i \in \{1,2\}$, edges $e_a, e_b$ occur in path $p_1$ which begins at $s_i$ and ends at $e_2$. Therefore there exists an $e_a \rightarrow e_2$ path. The concatenation of the $e_1 \rightarrow e_a$ path and the $e_a \rightarrow e_2$ path gives a $e_1 \rightarrow e_2$ path via $e_a$, a contradiction to the hypothesis that there are no $e_1 \rightarrow e_2$ paths in the graph. Hence, we conclude that such a $\beta_{e_a,e_b}$ variable cannot occur.
\end{proof}

\end{document}